\renewcommand{\textcolor}[2]{#2}
\theoremstyle{plain}
\newtheorem{theorem}{Theorem}
\newtheorem{proposition}[theorem]{Proposition}
\DeclareMathOperator*{\argmax}{argmax}
\DeclareMathOperator*{\argmin}{argmin}
\DeclareMathOperator*{\med}{med}
\newcommand{\titledoc}{Online Bayesian changepoint detection for network Poisson processes with community structure}
\newcommand{\titleshort}{Online Bayesian changepoint detection for network Poisson processes with community structure}
\providecommand{\keywords}[1]{{\small{\textbf{\textit{Keywords ---}} #1}}}
\author{Joshua Corneck}
\author{Edward A. K. Cohen}
\author{James S. Martin}
\author{Francesco Sanna Passino}
\affil{Department of Mathematics, Imperial College London \\ 180 Queen's Gate, SW7 2AZ, London}
\date{}
\title{\huge\textbf{\titledoc}}
\numberwithin{equation}{section}
\begin{document}

\maketitle

\begin{center}
\textit{This is a pre-print of an article published in Statistics and Computing. The final authenticated version is available online at: \url{https://doi.org/10.1007/s11222-025-10606-w}.}
\end{center}

\begin{abstract}
Network point processes often exhibit latent structure that govern the behaviour of the sub-processes. It is not always reasonable to assume that this latent structure is static, and detecting when and how this driving structure changes is often of interest. In this paper, we introduce a novel online methodology for detecting changes within the latent structure of a network point process. We focus on block-homogeneous Poisson processes, where latent node memberships determine the rates of the edge processes. We propose a scalable variational procedure which can be applied on large networks in an online fashion via a Bayesian forgetting factor applied to sequential variational approximations to the posterior distribution. The proposed framework is tested on simulated and real-world data, and it rapidly and accurately detects changes to the latent edge process rates, and to the latent node group memberships, both in an online manner. In particular, in an application on the Santander Cycles bike-sharing network in central London, we detect changes within the network related to holiday periods and lockdown restrictions between 2019 and 2020.
\end{abstract}

\keywords{network point process, online variational inference, stochastic blockmodel, streaming data.}

\section{Introduction}\label{sec1}

 Network data describe complex relationships between a large number of entities, called \textit{nodes}, linked via \textit{edges}. In particular, much of such data is continuous-time valued and dynamic in nature, with nodes often exhibiting a latent \textit{community structure} that drives interactions \citep{amini2013}. By allowing temporal point processes to exist on the edges of the network, and not solely focusing on the adjacency matrix, such interaction data can be more accurately understood. Common examples of these network structures include email exchanges within a company \citep{enron2004} or posts on a social media platform \citep{socialnetworks2017}, where employer hierarchy or political leanings, respectively, could drive connections between individuals. Furthermore, it is not always reasonable to assume that such structure is static. In the case of email exchanges, a change in an employee's role within their organisation would alter the latent community structure of the network. Similarly, in social networks, a sporting or political event would likely affect the frequency of interactions only between users with particular characteristics and topics of interest. Detecting such changepoints is crucial in many real-world applications, such as in the context of computer networks, where modifications of network structure can signify malicious behaviour \citep[see, for example,][]{Hallgren23}. The detection of such changes in an online manner will be the focus of this paper.

Much of the original work on network models focused on simple, binary connections between nodes. One of the most common of these modelling approaches is the Stochastic Block Model \citep[SBM;][]{Holland}, which models connections between nodes via a latent community structure on a network. This model has undergone significant development, with theoretical results extending to include node degree heterogeneity \citep{Karrer11, amini2013, Gao2018-ey}, valued edges \citep{mariadassou-2010}, dynamic community structure in discrete time \citep{matias-2017, yang-2011, xu-2014}, and missing data \citep{Mariadassou2020-fg}; refer to \cite{Lee2019-qz} for a comprehensive review. In spite of their simplicity, SBMs can be used to consistently estimate and approximate much more complex network objects, such as exchangeable random graphs and their associated graphons, and thus provide a good approximation for any exchangable random graph model \citep{airoldi2013}. However, most of these advancements are based on aggregated, discrete time data.

There are several statistical methods for analysing dynamic networks, normally defined as graphs with time-evolving edge-connections. Interested readers are referred to \cite{Holme2015-me} for a review. However, most of this methodology does not capture instantaneous interactions, but relies instead on aggregations \citep[see, for example,][]{Shlomovich22}. To combat this, \cite{Matias2018-rx} developed a model for recurrent interaction events, which they call the semiparametric stochastic block model. In this framework, interactions are modelled by a conditional inhomogeneous Poisson process, with intensities depending solely on the latent group memberships of the interacting individuals. \cite{Perry13} uses a Cox multiplicative intensity model with covariates to model the point processes observed on each edge. %In a similar vein, \cite{miscouridou-2018} proposed a class of models for capturing sparsity, degree heterogeneity and community structure on a network point process. 
\cite{SannaPassino23} models the edge-specific processes via mutually exciting processes with intensities depending only on node-specific parameters. These methodologies handle complex temporal data, but they are offline methods, and work only in the case of a static latent network structure. Attempts have been made at capturing networks with a dynamic latent structure, but attention has focused on discrete time data, such as binary or weighted edges. \cite{matias-2017} proposed a method for frequentist inference for a model which extends the SBM to allow for dynamic group memberships. In that work, the dynamics of the groups are modelled by a discrete time Markov chain. \cite{heard-2010} developed a two-stage offline method for anomaly detection in dynamic graphs. However, to the best of our knowledge, there currently exists no methodology for online detection of changepoints in the context of continuous-time network data.

The recent work of \cite{Fang2020-ix} represents the first attempt at online estimation and community detection of a network point process. The authors build upon the foundation laid by \cite{Matias2018-rx} and extend it to an online setting, but maintain the assumption of a static latent structure. While \cite{Fang2020-ix} do offer suggestions as to how their framework could be extended to incorporate latent dynamics, the methodology developed therein requires knowing both the adjacency matrix and the number of latent groups a priori.

In this work, we propose a novel online Bayesian changepoint detection algorithm for network point processes with a latent community structure among the nodes. Our methodology is based on utilising \textit{forgetting factors} within a Bayesian context to sequentially update the variational approximation to the posterior distribution of the model parameters when new data are observed within a stream. We focus on what we will refer to as a dynamic Bayesian block-homogeneous network Poisson process, where dynamic refers to the fact that the latent structure of the network is time-varying. As an added benefit, our method is able to accurately infer the community structure and obtain a piecewise recovery of the conditional intensity function when we consider inhomogeneous Poisson processes on the edges. Extensions to our method are proposed to infer either the adjacency matrix of the network or the number of latent groups, each in conjunction with the edge rates. We also extend to handle cases where new groups are created or existing ones are merged.
% Extensions to our method are also proposed to simultaneously infer both the adjacency matrix of the network and the number of latent groups in an online manner, and handle cases where new groups are created or existing groups are merged into one another.

The remainder of this article is structured as follows: Section~\ref{sec:bdbhpp} describes the dynamic Bayesian block-homogeneous Poisson process model used in this work, and the possible local and global changes to the network structure occurring on such a network model. Section~\ref{sec:online_VB}
discusses the proposed online variational Bayesian inference approach via Bayesian forgetting factor, which is used to sequentially approximate the posterior distribution on the stream. The performance of the proposed inferential procedure is then tested in Section~\ref{sec:sims} on simulated data, and on real-world data from the Santander Cycles bike-sharing network in Section~\ref{sec:santander}, followed by a discussion.

\section{Bayesian dynamic block homogeneous Poisson process}\label{sec:bdbhpp}

\subsection{The model} \label{sec:model}

We consider a stochastic process on a network, %consisting in 
which produces dyadic interaction data between a set of $N$ nodes observed over time. %We model this behaviour using a graph $\mathcal{G} = (\mathcal{V}, \mathcal{E})$, 
Let $\mathcal{G} = (\mathcal{V}, \mathcal{E})$ be a graph,
where the set $\mathcal{V} := \{1,\dots,N\}$ corresponds to nodes, whereas the edge set $\mathcal{E}\subseteq\mathcal{V}\times\mathcal{V}$ contains pairs representing interactions between nodes in $\mathcal V$. %Note that $\mathcal{G}$ need not be complete, and let $\mathcal{R}$ denote the complete set of dyads. 
In particular, we write $(i,j)\in\mathcal E$ if there is a connection from node $i\in\mathcal{V}$ to node $j\in\mathcal{V}$. 
%node $i\in\mathcal{V}$ is connected with node $j\in\mathcal{V}$. 
Furthermore, we denote the set of all possible edges between nodes as $\mathcal{R}=\mathcal{V}\times\mathcal{V}$. 
The graph $\mathcal G$ could be equivalently represented via its adjacency matrix $\boldsymbol A=\{a_{ij}\}\in\mathbb\{0,1\}^{N\times N}$ such that $a_{ij}=\mathbb{I}_{\mathcal{E}}\{(i,j)\}$, where $\mathbb I_\cdot\{\cdot\}$ denotes the indicator function.
%We work with directed interactions and no self-interactions,
% We further assume that the graph is directed and has no self-loops, which indicates that $(i,j)\in\mathcal E$ does not necessarily imply $(j,i)\in\mathcal E$, and $(i,i) \notin \mathcal{E}$ for all $i\in\mathcal{V}$ respectively. Therefore, the adjacency matrix $\boldsymbol A$ of $\mathcal G$ is hollow and generally non-symmetric. 

%As in an SBM, individuals, and thus elements of $\mathcal{V}$, are assumed to be distributed among a set of groups $
We assume that each node $i\in\mathcal{V}$ belongs to a group $z_i\in\mathcal{K}$, where $\mathcal{K} := \{1,\dots,K\}$ %, which model a latent structure on the network, 
and we let $z=(z_1,\dots,z_N)\in \mathcal{K}^N$ be a priori independent and identically distributed with 
$z_i \sim \mathrm{Categorical}(\pi)$ where $\pi = (\pi_1,\dots,\pi_K)$, such that $\pi_k\geq 0$ for all $k\in\mathcal{K}$, and $\sum_{k=1}^K \pi_k=1$.
%random variables denoting the group memberships of the nodes, with $\mathbb{P}(z_i = k) = \pi_k$. In a slight abuse of notation, one 
We also associate to $z_i$ a binary vector $\tilde{z}_{i} = (\tilde z_{i1},\dots,\tilde z_{iK}) \in \{0,1\}^K$, with $\tilde z_{ik} = \mathbb{I}_{\{k\}}(z_i)$, % if and only if $z_i = k$, so that $z_i$ has a categorical distribution, $z_i \sim \mathrm{Categorical}(\pi)$ where $\pi = (\pi_1,\dots,\pi_K)$. 
and we %will sometimes 
write $\tilde{\boldsymbol{Z}} = \{\tilde z_{ik}\}_{i\in\mathcal{V}, k\in\mathcal{K}} \in \{0,1\}^{N\times K}$ to denote the matrix of group memberships. Denote by $\Pi = (\Pi_1,\dots,\Pi_K)$ a vector of the proportion of $N$ attributed to each group. 
%with $N\Pi_k \in \mathbb{N}$ for all $k\in\mathcal{K}$ and $\sum_{k\in\mathcal{K}}\Pi_k = 1$. We work in a 
Under a Bayesian framework, we place a Dirichlet prior distribution on $\pi$ with parameter $\gamma^0\in \mathbb{R}^K_+$. %, denoted as $\mathrm{Dirichlet}\left(\gamma^0\right)$. %, with $\gamma^0 \in \mathbb{R}^d_+$.

We observe a marked point process on the network, consisting of a stream of triplets $(i_\ell, j_\ell, t_\ell)\in\mathcal E\times\mathbb R_+,\ \ell=1,2,\dots$, denoting directed interactions from node $i_\ell$ to node $j_\ell$ at time $t_\ell$, where $t_\ell\leq t_{\ell^\prime}$ for $\ell<\ell^\prime$. The associated edge-specific counting process is denoted by $x_{ij}(\cdot)$, where, for all $(i,j)\in\mathcal{E}$, 
\begin{equation}
    x_{ij}((0,t]) \equiv x_{ij}(t) := \sum_{\ell} 
    \mathbb I_{\{(i,j)\}}\{(i_\ell,j_\ell)\}\mathbb I_{(0,t]}(t_\ell).
    %\mathbb{I}\{(i_\ell,j_\ell,t_\ell) \in \{i\}\times\{j\}\times [0,t)\},
\end{equation} 
If $(i,j)\notin\mathcal{E}$, $x_{ij}(t)=0$ for the entire observation period. 
In this work, we model the counting process $x_{ij}(\cdot)$ as a Poisson process with rate $\lambda_{z_iz_j}\in\mathbb R_+$, conditional on $(i,j)\in\mathcal{E}$ and on the node group memberships $z_i$ and $z_j$. Also, we place independent conjugate gamma prior distributions on the event rates $\lambda=\{\lambda_{km}\}\in\mathbb R_+^{K\times K}$. %, so that the rate on an edge are driven by the enclosing nodes (and the direction of the edge). Write $x(t) = \{x_{ij}(t)\}_{i,j\in\mathcal{V}\times\mathcal{V}}$ to be the full network counting process, which by superposition is a homogeneous Poisson process. We place a gamma prior on $\lambda = \{\lambda_{km}\}_{k,m\in\mathcal{K}} \in \mathbb{R}^{K\times K}_+$. As such, the model is initialised as
In summary, the full model can be expressed as follows:
\begin{align}
	x_{ij}(t) \mid z_i, z_j, \lambda_{z_iz_j} &\sim \mathrm{Poisson}(\lambda_{z_iz_j}t), &
    \text{for all } (i,j) \in \mathcal{E},\  \label{eqn:poiss_network} \\
	\lambda_{km} &\sim \mathrm{Gamma}\left(\alpha_{km}^0, \beta_{km}^0\right),
	&\text{for all } k, m \in \mathcal{K}, \label{eqn:lam_prior} \\
	z_i \mid \pi &\sim \mathrm{Categorical}(\pi),
	&\text{for all } i \in \mathcal{V}, \label{eqn:z_init} \\
	\pi &\sim \mathrm{Dirichlet}\left(\gamma^0\right), \label{eqn:pi_prior}
\end{align}
where $\gamma^0 = \left(\gamma_1^0,\dots,\gamma_K^0\right) \in \mathbb{R}^K_+$, $\alpha^0 = \left\{\alpha^0_{km}\right\}_{k,m\in\mathcal{K}} \in \mathbb{R}^{K\times K}_+$ and $\beta^0 = \left\{\beta_{km}^0\right\}_{k,m\in\mathcal{K}} \in \mathbb{R}^{K\times K}_+$. Note that in \eqref{eqn:poiss_network}, we use that for a homogeneous Poisson process with rate $\lambda^\ast$, the number of counts for that process on an interval of length $t$ is Poisson distributed with rate $\lambda^\ast t$. The model for $x_{ij}(t)$ in \eqref{eqn:poiss_network} is known as the \textit{block homogeneous Poisson process} \citep[BHPP;][]{Fang2020-ix}. The full Bayesian model in \eqref{eqn:poiss_network}-\eqref{eqn:pi_prior} is represented in graphical model form in Figure~\ref{fig:dag_model}.%. The model as outlined in \eqref{eqn:lam_prior}-\eqref{eqn:poiss_network} 

\begin{figure}
    \centering
    \includegraphics[width=0.45\textwidth]{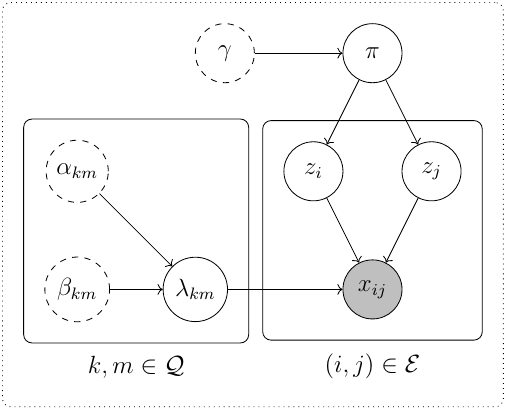}
    \caption{A directed acyclic graphical model representation of the model in Equations~\eqref{eqn:poiss_network}-\eqref{eqn:pi_prior}.} %Dashed circles represent fixed quantities, and thick circles stochastic ones. White and grey circles denote unobserved and observed quantities, respectively, and boxes grouping nodes indicate index repetition.}
    \label{fig:dag_model}
\end{figure}

Under the BHPP, the expected waiting time between events across the entire network is $\mathcal{O}(N^{-1})$, as described in the following proposition.

\begin{proposition}\label{prop:rate_growth}
    Let $\mathcal{G} = (\mathcal{V}, \mathcal{E})$ be a directed graph with no isolated nodes. %For the set-up considered in this work, where 
    If on each edge lives a homogeneous Poisson process with rate $\lambda_{z_iz_j}\in\mathbb R_+$, only dependent upon node memberships $z_i,z_j\in\mathcal{K}$, then the expected waiting time between arrivals for the full network counting process $x(t)=\sum_{(i,j)\in\mathcal E} x_{ij}(t)$ is $\mathcal{O}\left(N^{-1}\right)$.
\end{proposition}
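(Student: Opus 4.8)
The plan is to use the fact that a superposition of independent Poisson processes is again a Poisson process, and then to bound the rate of that superposition from below using the no-isolated-nodes hypothesis together with the finiteness of $\mathcal{K}$. Conditionally on the group memberships $z=(z_1,\dots,z_N)$ and on the rate matrix $\lambda$, the edge processes $\{x_{ij}(\cdot):(i,j)\in\mathcal E\}$ are independent homogeneous Poisson processes, the $(i,j)$-th having rate $\lambda_{z_iz_j}$, so their superposition $x(t)=\sum_{(i,j)\in\mathcal E}x_{ij}(t)$ is a homogeneous Poisson process with rate $\Lambda:=\sum_{(i,j)\in\mathcal E}\lambda_{z_iz_j}$. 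For such a process the inter-arrival times are i.i.d.\ $\mathrm{Exponential}(\Lambda)$, so the expected waiting time between consecutive arrivals of $x(\cdot)$ is exactly $1/\Lambda$, and the proposition reduces to showing that $\Lambda\geq cN$ for some constant $c>0$ not depending on $N$.

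Next I would establish that lower bound on $\Lambda$. Since $\mathcal K=\{1,\dots,K\}$ is finite and each $\lambda_{km}$ is strictly positive, the quantity $\lambda_-:=\min_{k,m\in\mathcal K}\lambda_{km}$ is a fixed positive constant, giving $\Lambda\geq\lambda_-|\mathcal E|$. It then suffices to bound $|\mathcal E|$ below by a multiple of $N$: because $\mathcal G$ has no isolated nodes, each of the $N$ nodes is an endpoint (head or tail) of at least one edge, while each edge has at most two distinct endpoints, so a covering argument gives $2|\mathcal E|\geq N$, i.e.\ $|\mathcal E|\geq\lceil N/2\rceil$. Combining, $\Lambda\geq\lambda_-N/2$, whence the expected waiting time satisfies $1/\Lambda\leq 2/(\lambda_-N)=\mathcal O(N^{-1})$, as claimed.

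I do not anticipate a substantive obstacle; the argument is essentially bookkeeping, and the only points requiring care are the following. First, one should make explicit that the statement is understood conditionally on $z$, $\lambda$ and $\mathcal E$, so that $\Lambda$ is a well-defined positive number and, crucially, $\lambda_-$ stays bounded away from zero as $N$ grows — this is precisely where finiteness of $K$ enters. Second, in the edge-counting step it is worth remarking that the bound $|\mathcal E|\geq N/2$ is exactly what forces at least linear growth of $\Lambda$; a denser graph would only make $\Lambda$ larger and the waiting time smaller, so the $\mathcal O(N^{-1})$ upper bound holds regardless of the sparsity of $\mathcal G$ (one obtains a matching $\Theta(N^{-1})$ rate only under the additional assumption $|\mathcal E|=\mathcal O(N)$).
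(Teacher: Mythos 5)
Your proposal is correct and follows essentially the same route as the paper: superposition of the edge processes into a single homogeneous Poisson process, a lower bound on the total rate via $\min_{k,m}\lambda_{km}$ times $|\mathcal{E}|$, and the bound $|\mathcal{E}|\geq N/2$ from the no-isolated-nodes hypothesis (the paper derives this edge count by minimising over connected components rather than by your double-counting of endpoints, but the conclusion and the final bound $w\leq 2/(\lambda_{-}N)$ are identical). Your explicit remarks about conditioning on $z$, $\lambda$, $\mathcal{E}$ and about where finiteness of $K$ enters are a slight improvement in rigour over the paper's terser presentation.
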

\begin{proof}
    Let $\mathcal{G} = (\mathcal{V},\mathcal{E})$ with $|\mathcal{V}| = N$. Suppose that $\mathcal{G}$ has $M$ connected components. As we have no isolated nodes, the number of edges is minimised with $M=N/2$ and $|\mathcal{E}| = N/2$ if $N$ is even, or $M=(N-1)/2$ and $|\mathcal{E}| = (N+1)/2$, if $N$ is odd. Without loss of generality, assume $\lambda_{11} = \min_{k,m\in\mathcal{K}}\{\lambda_{km}\}$ and call the rate of the full network Poisson process $\lambda$. By superposition, it follows $\lambda \geq N\lambda_{11}/2$ or $\lambda \geq (N+1)\lambda_{11}/2$, for $n$ even and odd, respectively, and so the expected waiting time $w$, is bounded as $w \leq 2/\lambda_{11}N$.
\end{proof}

The aim of the present paper is to provide an online algorithm for detecting changepoints; such an algorithm must be able to perform inference at each time step before the arrival of any new data. A consequence of Proposition~\ref{prop:rate_growth} is that for networks of increasing size, this becomes infeasible for large $N$ when observing all events as they arrive, as an online algorithm would need to be at least as fast. This motivates a further assumption that data arrives as a stream of batched counts rather than a stream of continuous-valued event times. Let the batches be observed at times $L_r,\ r=1,2,\dots$, where $L_r<L_{r^\prime} $ for $r<r^\prime$. %$\mathcal{D}_m$ at times %$t=L_m$, taking the form
For $r=1,2,\dots$, each data batch takes the form
\begin{equation}
\mathcal{D}_r := \left\{\left(i_\ell, j_\ell, t_\ell\right): L_{r-1} < t_\ell \leq L_r\right\}; \quad L_{0} \equiv 0.
\end{equation}
%where $\left(i_\ell, j_\ell, t_\ell\right) \in \mathcal{E} \times [L_{m-1},L_m)$ denotes a directed interaction from node $i_\ell$ to node $j_\ell$ at time $t_\ell$. 
We consider the case of a constant time $\Delta$ between batches, so that $L_r - L_{r-1} = \Delta$ for all $r$. %The
Also, we denote the complete set of data available at time $L_r$ as $\mathcal{D}_{1:r} := \{\mathcal{D}_1,\dots, \mathcal{D}_r \}$. This setup is illustrated in Figure \ref{fig:notation_diagram}. 
\begin{figure*}[t]
    \centering
    \includegraphics[width=\textwidth]{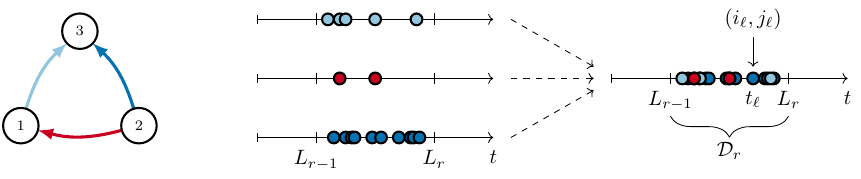}
    \caption{Illustration of the setup in Section~\ref{sec:model} for a network with $N=3$ and $\mathcal{E}=\{(1,3),(2,1),(2,3)\}$. Circles represent event times, coloured according to their edge. Arrival times to each edge %in $\mathcal{E}$ 
    are collated into $\mathcal{D}_r$.}
    \label{fig:notation_diagram}
\end{figure*}

Initially, we assume that both the adjacency matrix $\boldsymbol A$ and the number of latent groups $K$ are known \textit{a priori}, similar to the setup in \cite{Fang2020-ix}. However, we note that assuming this knowledge is limiting in practical applications, since these quantities are usually unknown. Therefore, Sections \ref{subsec:unknown_adj} and \ref{subsec:unknown_groups} discuss extensions of the BHPP model and corresponding inference procedures with an unknown adjacency matrix and an unknown number of groups.

\subsection{Changes to the network}

In real-world applications, it is unrealistic to assume that the latent group memberships $z=(z_1,\dots,z_N)$ and rate matrix $\lambda=\{\lambda_{km}\}_{k,m\in\mathcal{K}}$ are constant across the entire observation period. 
%We consider changes in latent group membership $z$, and changes to the rate matrix $\lambda$. 
Our objective is therefore to develop a framework which could be used to detect \textit{global} and \textit{local} changes to the model parameters.

For \textit{local} changes, we consider modifications to the latent group structure, allowing for any $n\leq N$ nodes to change their group membership at any time $t^\prime$. In contrast to the approach of \cite{matias-2017}, in which the membership of node $i$ over the observation window behaves according to a discrete-time, irreducible and aperiodic Markov chain, we place no model on how or when nodes change and allow for any (finite) number of changes in a given interval. Furthermore, we allow for changes in continuous time, again in contrast to the discrete-time formulation of \cite{matias-2017}. Our main objective is only to detect if and when nodes have changed memberships, and not to estimate the underlying group membership switching process. 
% Figure \ref{fig:dynamic_graph_structure} illustrates the effect of a single node switch in the case of $N=4$ nodes.

% \begin{figure*}[!t]
%     \centering
%     \begin{subfigure}[h!]{0.3\textwidth}
%         \centering
%         \includegraphics[width=0.6\textwidth]{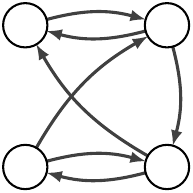}
%         \caption{Observed graph structure}
%     \end{subfigure}
%     \hspace{5mm}
%     \begin{subfigure}[h!]{0.3\textwidth}
%         \centering
%         \includegraphics[width=0.6\textwidth]{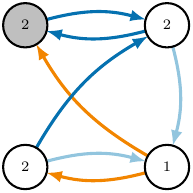}
%         \caption{Latent group structure}
%     \end{subfigure}
%     \hspace{5mm}
%     \begin{subfigure}[h!]{0.3\textwidth}
%         \centering
%         \includegraphics[width=0.6\textwidth]{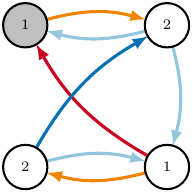}  
%         \caption{Change in latent group structure}
%     \end{subfigure}
%     \caption{
%     Visualisation of an observed graph $\mathcal G$ with $N=4$, with a change in the latent group structure. 
%     %The left-hand panel shows the observed graph structure. The middle panel shows the latent group structure, and the right-hand panel the result of a change to the latent structure. 
%     The changing node is coloured in grey, and edges are coloured according to the groups they connect.}
%     \label{fig:dynamic_graph_structure}
% \end{figure*}

For \textit{global} changes, we consider jumps in the intensity between any groups $k,m\in \mathcal{K}$  at some time $t^\prime\in\mathbb R_+$ of the form $\lambda_{km}\mapsto \lambda_{km}^\prime$ ($\lambda_{km}\neq \lambda_{km}^\prime$). Again, we are interested in assessing if the block-specific intensity $\lambda_{km}$ has changed, and we do not estimate or posit assumptions on the mechanism that leads to such changes.

\section{An online variational Bayesian estimation procedure}
\label{sec:online_VB}

We present an online inference procedure for tracking the time-evolving latent structure and parameters of the BHPP model described in Section~\ref{sec:model}. We focus first on the set-up where the full edge set $\mathcal{E}$ and the number of latent groups $K$ is known, as in \cite{Fang2020-ix}. %Once a framework is established for this setting, we extend to more complex cases. 
Next, we consider when $\mathcal{E}$ is unknown, and then we move to the case of an unknown number of groups, using a Bayesian nonparametric approach. 

\subsection{Variational Bayesian approximation}
\label{subsec:variational_bayes}

Variational Bayesian (VB) inference has the objective to approximate a posterior distribution when it is not analytically tractable \citep{wang2019}, offering a faster estimation approach when compared to Markov Chain Monte Carlo methods \citep{blei-2017}. This makes it more suited to the requirements of an online learning framework.
We adopt the terminology of \citet{blei-2017} in distinguishing between local latent variables $z_1,\dots,z_N$, that scale with the number of nodes, and global latent variables $\theta = (\pi,\lambda)$, whose dimension, written $S\in\mathbb N$, does not change with $N$. 

In a variational approach, one posits a family of distributions $\mathcal F^{N+S}$ on the parameter space $(\theta,z)$, %parameterised in some convenient way, 
and seeks to select the component $q^\ast(\theta,z)\in \mathcal F^{N+S}$ closest to the true posterior $p(\theta,z\mid x)$ in the sense of the Kullback-Leibler (KL) divergence, where $x$ denotes the observed data. %We work with 
A mathematically convenient choice for $\mathcal F^{N+S}$ is the mean-field variational family, the set of factorisable distributions of the form
\begin{equation}
\mathcal{F}^{N+S} = \left\{q:q(\theta,z) = \prod_{i=1}^N q_{z_i}(z_i)\prod_{s=1}^S q_{\theta_s}(\theta_s)\right\}.
\end{equation}
%with $N$ being the number of local latent variables, and $S$ the number of global latent variables. 
The approximating distribution $q^*(\theta,z)\in\mathcal{F}^{N+S}$ is then given as%, finding $q^*(\theta,z)$ as
\begin{equation}
q^\ast(\theta,z) = \argmin_{q \in \mathcal{F}^{N+S}} \mathrm{KL}\left[q(\theta,z)\ \| \ p(\theta,z|x)\right].
\label{eqn:VB_objective}
\end{equation}
% In practice, $q^\ast(\theta,z)$ is found by solving the equivalent problem 
% \begin{equation}
% q^\ast(\theta,z) = \argmax_{q \in \mathcal{F}^{N+S}} \mathrm{ELBO}(q),
% \label{eqn:ELBO_objective}
% \end{equation}
% where the evidence lower bound (ELBO) is defined as
% \begin{equation}
%     \mathrm{ELBO}(q) := \mathbb{E}_q\left\{\log \frac{p(x,\theta,z)}{q(\theta,z)}\right\} = -\int q(\theta,z)\log\frac{q(\theta,z)}{p(\theta,z,x)}d \theta d z.
% \end{equation}
 Despite every $q(\theta,z) \in \mathcal{F}^{N+S}$ taking a simple product structure, the solution $q^\ast(\theta,z)$ to \eqref{eqn:VB_objective} is usually not available analytically. A popular method for approximating the global minimum $q^\ast(\theta,z)$ of the KL-divergence is the Coordinate Ascent Variational Inference (CAVI) algorithm \citep{bishop-2006, blei-2017}, which iteratively updates each component of $q(\theta,z)$ while keeping the others fixed. Letting $\phi$ denote the $(N+S)$-dimensional vector of latent variables $\phi = (z,\theta)$, CAVI proceeds as:
 \begin{equation}
   \hat{q}_{\phi_j}^{(t+1)}(\phi_j) = \argmin_{q_{\phi_j}} \mathrm{KL}\left[q_{\phi_j}(\phi_j) \prod_{i<j}\hat{q}_{\phi_i}^{(t+1)}(\phi_i)\prod_{i>j}\hat{q}_{\phi_i}^{(t)}(\phi_i)\ \Bigg|\Bigg|\ p(\phi|x)\right],\qquad 
  \label{eqn:cavi_target}
 \end{equation}
for $j \in \{1,\dots,N+S\},\ t=0,1,2,\dots$
 until some convergence criterion is met. %A single pass through the algorithm consists of $N+S$ update steps, where we sweep through a fixed ordering of the latent variables, updating as
 Under the mean-field variational approximation, \eqref{eqn:cavi_target} has the following solution \citep{bishop-2006}:
 \begin{equation}
 \hat{q}_{\phi_j} \propto \exp\{\mathbb{E}_{-\phi_j}\left[\log p(x,\phi)\right]\},
 \label{eq:cavi_update_sol}
 \end{equation}
where the notation $\mathbb{E}_{-\phi_j}$ used to denote the expectation with respect to all components of $\phi$ except $\phi_j$ \citep{blei-2017}. %The update procedure in full is summarised in Algorithm \ref{alg:cavi}. 
As per convention with variational inference algorithms, we will drop the subscript on $q_{\phi_j}(\phi_j)$ for notational convenience, and take the argument of the variational component to index it \citep[see, for example,][]{blei-2017}. 

It should be noted that CAVI is only guaranteed to achieve a local minimum, and hence is sensitive to intialisation \citep{Zhang2020-re, blei-2017}.
%\begin{remark}
Furthermore, with $\phi = (z,\theta)$, updates in \eqref{eq:cavi_update_sol} have an explicit ordering.  
The ordering of the updates can also have implications on the convergence properties of the algorithm \citep{Ray22}, similarly to Gibbs sampling \citep[see, for example,][]{vanDyk08}.

\subsection{Online VB for the dynamic BHPP}%DPP-SBM}

Upon receiving the latest batch $\mathcal D_r$ at time $L_r$, the aim of an online algorithm is to update the estimates of the BHPP parameters based on the entire history $\mathcal{D}_{1:r}$ of the process. To ensure finite and constant complexity and memory, a truly online algorithm should be \textit{single pass} \citep[see, for example][]{Bifet18}: each data batch should be inspected only once, and summaries of previous data batches $\mathcal{D}_{1:(r-1)}$ are used in conjunction with $\mathcal{D}_r$ to update parameter estimates.

The prior-posterior distribution updates within a Bayesian framework offers a natural way to propagate information forward in time. The posterior distribution at time $L_{r-1}$ contains the information that was learned from $\mathcal{D}_{1:(r-1)}$. A naive approach therefore is to pass this through as the new prior distribution for the update at time $L_r$. However, this will perform poorly in the case of changing latent structure. As the number of batches increases, the new prior distributions become more concentrated around the current best estimates. This results in an updated posterior distribution that is largely dominated by its prior distribution (corresponding to the posterior distribution up to the previous batch) and which is increasingly insensitive to new data and changes in the data generating process.

We therefore propose to flatten the posterior obtained at $L_{r-1}$, via temperature parameters, at the point of passing it through as the prior distribution for the update at $L_r$. This flattening step is in effect down-weighting previous observations and can therefore be considered as a Bayesian analogue of the \emph{forgetting factor} procedure that has been widely used in frequentist online literature \citep[see, for example,][]{Haykin-2002,bodenham-2017}. In this way, the parameter estimates are quicker to respond to changes in the latent rate structure.

Due to the independence of counts from a Poisson process within non-overlapping time windows, the posterior density %distribution
at \textcolor{red}{time step} $r$ under the BHPP model factorises as
\begin{align}
p(z, \theta \mid \mathcal D_{1:r}) &\propto p(z \mid \pi) p(\pi) p(\lambda) \prod_{\ell=1}^r p(\mathcal D_{\ell} \mid z,\lambda)  \\
&\propto p(\mathcal D_{r}\mid z,\lambda) \times p(z, \theta \mid \mathcal D_{1:(r-1)}),\qquad
\label{eqn:posterior}
\end{align}
where the prior distributions are chosen to be conjugate, as per \eqref{eqn:lam_prior} and \eqref{eqn:pi_prior}.
From \eqref{eqn:posterior}, the posterior distribution at \textcolor{red}{time step} $r-1$ can be interpreted as the prior for the posterior distribution at \textcolor{red}{time step} $r$.

The posterior density %distribution 
in \eqref{eqn:posterior} is not available in closed form and, due to the marginal density %distribution 
$p(\mathcal{D}_{1:r})$ being unavailable, can only be evaluated up to a normalising constant. %without marginalising out $z$, which requires summing over exponentially many terms with growing $N$
We thus consider the VB approach with the mean-field variational family, deploying the CAVI algorithm to compute approximate posterior distributions. Denoting the approximate posterior at time step $r-1$ by $\hat{q}^{(r-1)}(\theta,z)$, we pass through as the prior for time step $r$ the \textit{tempered} density%distribution 
\begin{align}
    \hat{q}^{(r-1)}_{\delta}(\theta,z)&= \hat{q}^{(r-1)}_{\delta_\pi}(\pi) \times \prod_{k,m\in\mathcal{K}} \hat{q}^{(r-1)}_{\delta_\lambda}(\lambda_{km})\prod_{i\in\mathcal{V}} \hat{q}^{(r-1)}_{\delta_z}(z_i) \label{eq:prior_structure} \\ &\propto \hat{q}^{(r-1)}(\pi)^{\delta_\pi} \prod_{k,m\in\mathcal{K}}\hat{q}^{(r-1)}(\lambda_{km})^{\delta_{\lambda}}\prod_{i\in\mathcal{V}}\hat{q}^{(r-1)}(z_i)^{\delta_{z}},
    \label{eq:tempered_approximation}
\end{align}
controlled by the \textit{Bayesian forgetting factor} (BFF) $\delta=(\delta_\lambda,\delta_z,\delta_\pi)\in (0,1]^3$. Here, $C_{r-1}$ is a normalising constant to ensure a valid density. This procedure is visualised in Figure~\ref{fig:pipeline}.

\begin{figure*}[t]
    \centering
    \includegraphics[width=0.8\textwidth]{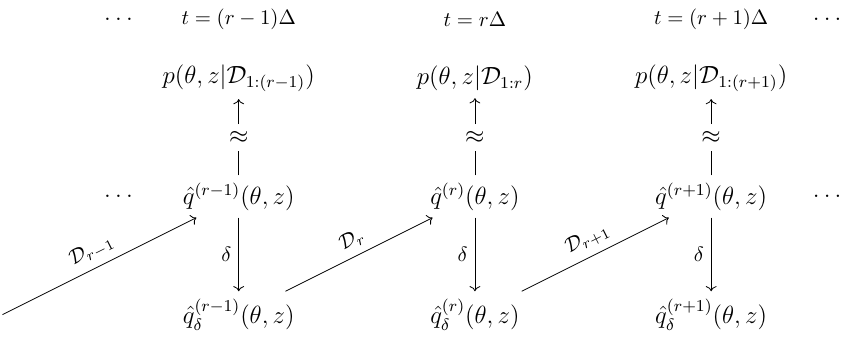}
    \caption{A %pictorial
    representation of the transformation of the CAVI approximation from the $r$-th time step via temperature parameters $\delta$ and its use as a prior for the $(r+1)$-th time step. The overlaying of $\mathcal{D}_{r}$ on the diagonal arrows indicates that the transformed CAVI posterior is combined with incoming data to yield the next approximate posterior.}
    \label{fig:pipeline}
\end{figure*}

The effect of the BFF in down-weighting previous data and placing greater emphasis on the latest batch $\mathcal{D}_r$ is illustrated in Figure \ref{fig:motivating}. A fully-connected network is simulated on the time interval $[0,5]$ with $N=500$ nodes and $K=2$ groups. An instantaneous rate change in $\lambda_{11}$ is made at $t=1$. The left panel of Figure~\ref{fig:motivating} shows the results for naive approach (without a forgetting factor), whereas the right panel includes a BFF and demonstrates a much faster response to the change and quicker convergence of the posterior mean to the true value.\\
\textcolor{red}{A BFF of 0.1 was found to work well in our simulations, except in the case of very low rates. In this case, where we might see no counts in an interval, one would select $\delta_\lambda$ closer to 1 to allow the prior to inform the estimates. However, our methodology does not require that $\Delta$ be fixed, and so in the case of low counts, one could instead allow for a longer waiting period before updating.}

\begin{figure*}[t]
     \centering
     \includegraphics[width=\textwidth]{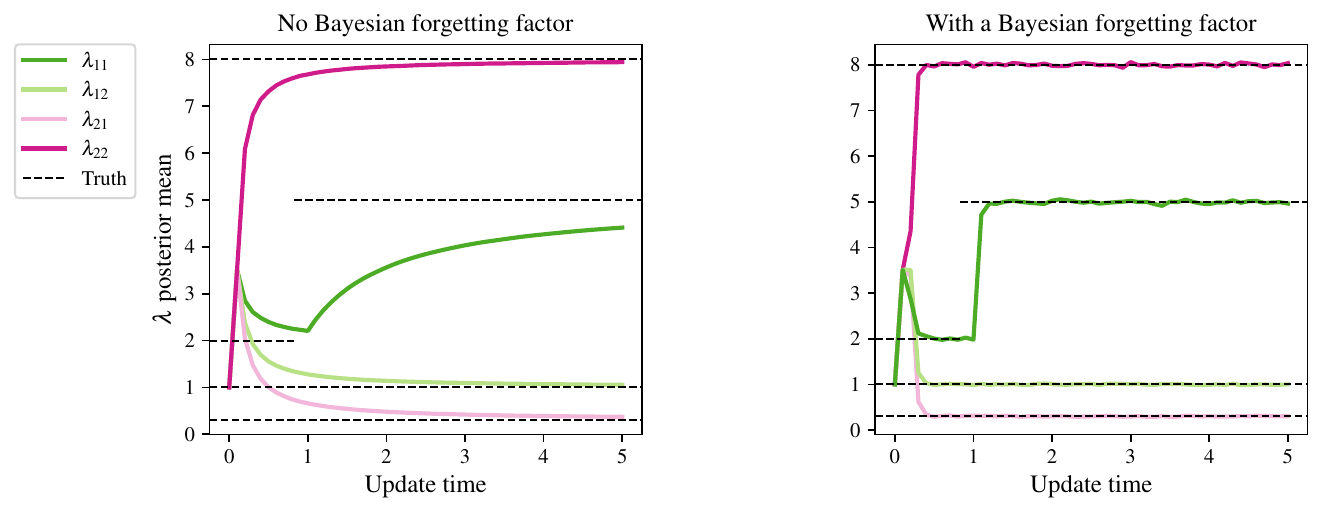}
    \caption{Estimated posterior mean of $\lambda\in\mathbb R_+^{K\times K}$ for a fully connected graph with $N=500$ nodes simulated on $[0,5]$, with $K=2$ groups, $\Pi = (0.6, 0.4)$, $\lambda_{22}=8,\ \lambda_{12}=1,\ \lambda_{21}=0.3$, and $\lambda_{11}=2$. At $t=1$, $\lambda_{11}$ changes to $5$. The four coloured lines are the posterior means of the components $\lambda$, plotted against time step. The black dashed horizontal lines are the true %, initial components 
    values of $\lambda$ at each time point. $\delta=0.1$ in the right-hand panel.%, the vertical black, dashed vertical line is the changepoint time. %The two panels show inference with and without a Bayesian forgetting factor.
    }
    \label{fig:motivating}
\end{figure*}

\subsection{Online CAVI updates}

We assume that each component of \eqref{eq:prior_structure} takes the same form of its corresponding complete conditional distribution under the standard BHPP model in \eqref{eqn:poiss_network}--\eqref{eqn:pi_prior}, raised to a power and normalised: 
\begin{align*}
    \hat{q}^{(r-1)}_{\delta_\lambda}(\lambda_{km}) &= \frac{1}{C_{\lambda_{km},r-1}}%\prod_{k,m\in\mathcal{K}}
    \mathrm{Gamma}\left(\alpha_{km}^{(r-1)}, \beta_{km}^{(r-1)}\right)^{\delta_\lambda}, 
    &\text{for all } k,m\in\mathcal{K},  \\ 
    \hat{q}^{(r-1)}_{\delta_z}(z_i) &= \frac{1}{C_{z_i,r-1}}\mathrm{Categorical}\left(\tau_i^{(r-1)}\right)^{\delta_z}, 
    &\text{for all } i\in\mathcal{V},  \\
    \hat{q}^{(r-1)}_{\delta_\pi}(\pi) &= \frac{1}{C_{\pi,r-1}}\mathrm{Dirichlet}\left(\gamma^{(r-1)}\right)^{\delta_\pi}.
\end{align*}
\textcolor{red}{The above notation indicates that we take the probability density or mass function of the relevant distribution, raise it to a power and then normalise}. $C_{\lambda_{km},r-1},C_{z_i,r-1}$ and $C_{\pi,r-1}$ are normalising constants, ensuring that the densities are valid, and $\delta=(\delta_\lambda,\delta_z,\delta_\pi)\in (0,1]^3$ are temperature parameters specific to each class of latent variables.
Suppose that data is observed on the interval $I_r := (L_{r-1}, L_r]$, for $r\in\mathbb{N}$, and denote the count on edge $(i,j)$ during this interval by $x_{ij}(I_r)$. Under the prior structure in \eqref{eq:prior_structure}, it is shown in Appendix \ref{app:cavi} that the CAVI sequential updates are:
\begin{enumerate}
    \item $\hat{q}^{(r)}(\lambda_{km}) = \mathrm{Gamma}\left(\alpha^{(r)}_{km}, \beta^{(r)}_{km}\right)$ for all $k,m\in \mathcal{K}$ where $\alpha^{(r)}_{km}, \beta^{(r)}_{km}$ are defined as:
    \begin{align}
    \alpha^{(r)}_{km} &= \delta_\lambda\left(\alpha^{(r-1)}_{km} - 1\right) +\sum_{(i,j)\in\mathcal{E}}\tau_{ik}^{(r-1)}\tau_{jm}^{(r-1)}x_{ij}(I_r) + 1, \\
    \beta^{(r)}_{km} &= \delta_\lambda\beta^{(r-1)}_{km} + \Delta \sum_{(i,j)\in\mathcal{E}}\tau_{ik}^{(r-1)}\tau_{jm}^{(r-1)}.\qquad \label{eqn:online_lam_update}
    \end{align}
    \item $\hat{q}^{(r)}(z_i) = \mathrm{Categorical}\left( \tau_i^{(r)}\right)$ for all $i\in\mathcal{V}$ where $\tau_i^{(r)} = (\tau_{i1}^{(r)},\dots,\tau_{iK}^{(r)})$, with $\sum_{k\in\mathcal{K}} \tau_{ik}^{(r)} = 1$, satisfies the relation:
    \begin{align} 
    \tau_{ik}^{(r)} &\propto \exp\Bigg\{\delta_z\left[\psi\left(\gamma^{(r-1)}_k\right) - \psi\left(\sum_{m\in\mathcal{K}}\gamma^{(r-1)}_m\right)\right] + x_{ii}(I_r)\left[\psi\left(\alpha_{kk}^{(r)}\right) - \log\left(\beta_{kk}^{(r)}\right)\right] - \Delta\frac{\alpha_{kk}^{(r)}}{\beta_{kk}^{(r)}} \notag
    \\
    & + \sum_{m\in\mathcal{K}}\Bigg[\sum_{j:(i,j)\in\mathcal{E}}\tau_{jm}^{(r)}\bigg(x_{ij}(I_r)\left\{\psi\left(\alpha^{(r)}_{km}\right) -\log\left(\beta^{(r)}_{km}\right)\right\} -  \Delta\frac{\alpha^{(r)}_{km}}{\beta^{(r)}_{km}}\bigg) \\
    &\times \bigg(1 - \mathbb{I}_{\{k\}}(m)\mathbb{I}_{\{i\}}(j)\bigg) + \sum_{j':(j',i)\in\mathcal{E}}\tau_{j'm}^{(r)}\bigg(x_{j'i}(I_r)\left\{\psi\left(\alpha^{(r)}_{mk}\right) - \log\left(\beta^{(r)}_{mk}\right)\right\}-\Delta\frac{\alpha^{(r)}_{mk}}{\beta^{(r)}_{mk}}\bigg)\\
    &\times \bigg(1 - \mathbb{I}_{\{k\}}(m)\mathbb{I}_{\{i\}}(j')\bigg)\Bigg]\Bigg\}. \label{eqn:online_z_update}
    \end{align}
    \item $\hat{q}^{(r)}(\pi) = \mathrm{Dirichlet}\left\{\left(\gamma^{(r)}_1,\dots,\gamma^{(r)}_K\right)\right\}$ where $\gamma^{(r)}_k$ is defined as:
    \begin{align}
    \gamma^{(r)}_k = \delta_\pi\left(\gamma_k^{(r-1)} - 1\right) + \delta_z\sum_{i\in\mathcal{V}}\tau_{ik}^{(r)} + 1.\qquad \label{eqn:online_pi_update}
    \end{align}
\end{enumerate}

Here $\psi(\cdot)$ is the digamma function. Note that the parameters $\tau_{ik}^{(r)},\ i\in\mathcal{V},\ k\in\mathcal{K}$, are jointly optimised via a fixed point solver for \eqref{eqn:online_z_update}, iteratively moving through the rows of the matrix $\tau = \{\tau_{ik}\}_{i\in\mathcal{V},\ k\in\mathcal{K}} \in [0,1]^{N\times K}$ until convergence. The fixed point solver is initialised at the $r$-th time step by using the value of $\tau$ outputted from the $(r-1)$-th \textcolor{red}{time step} as the starting point, which also partially circumvents the problem of label switching \citep{jasra2005}. %Furthermore, the convergence criterion for $\tau$ can be specified by the user.

% The procedure would need to be run at every time step $\Delta$, when a new data batch is observed. The CAVI approximations from the run at $t=r\Delta$ are given in  \eqref{eqn:online_z_update}-\eqref{eqn:online_lam_update} of Proposition \ref{prop:online_cavi_updates}. Algorithm \ref{alg:online_cavi} describes the full inference procedure based upon the updates derived in Proposition \ref{prop:online_cavi_updates}. 

Algorithm \ref{alg:online_cavi} describes the full inference procedure based upon the updates in equations \eqref{eqn:online_lam_update}--\eqref{eqn:online_pi_update}.

\begin{algorithm}
\caption{Online VB Procedure for BHPP}
\begin{algorithmic}[1]
\item Initialise $ \alpha^{(0)}$, $\beta^{(0)}$, $\gamma^{(0)}$ and set $\tau_{ik} = 1/K$ for all $i \in \mathcal{V}$ and $k\in\mathcal{K}$. 
\For {$r=1,2,\dots$}
    \For {$\ell = 1,2,\dots,N_\mathrm{CAVI}$}
        \While{$\tau$ not converged}
            \For {$j = 1$ to $N$}
                \State Update $\tau_j = (\tau_{j1},\dots,\tau_{jK})$ as in equation \eqref{eqn:online_z_update}, intialising as the output of update $r-1$.
                \State Normalise $\tau_j$ such that $\sum_{k=1}^K \tau_{jk} = 1$.
            \EndFor
        \EndWhile
        \State Update $\gamma^{(r)}$, $\alpha^{(r)}$ and $\beta^{(r)}$ as in  \eqref{eqn:online_lam_update}--\eqref{eqn:online_pi_update}.
    \EndFor
\EndFor
\end{algorithmic}
\label{alg:online_cavi}
\end{algorithm}

\subsection{Detecting changepoints}

In this work, we aim to detect two types of changepoints: changes in the  point process rates and changes in the group memberships. 

To detect changes in the point process rates, we compute the KL-divergence between the approximate posteriors $\hat{q}^{(r)}(\lambda_{km})$ and $\hat{q}^{(r-s)}(\lambda_{km})$, for $s=1,2,\dots,\kappa$, where $\kappa\in\mathbb{N}$ is a lag parameter. We compare the current approximation with estimates beyond that of the previous update, up to a maximal lag of $\kappa$, to avoid incorrectly flagging outliers as changes. Figure \ref{fig:kl_lag} illustrates the behaviour of the KL-divergence for a lag of 1 and a lag of 2. In Figure \ref{fig:KL-lag-no-cp}, we see that using only a lag of 1 would result in the false identification of a change. There is a trade-off to be made between confidence and speed of detection. We found that a maximal lag of $\kappa = 2$ provided a good balance.

To compute the KL-divergence, we note that $\hat{q}^{(r)}(\lambda_{km})$ is gamma distributed for all $r\in\mathbb{N}$. For two random variables $X_1\sim \mathrm{Gamma}(\alpha_1,\beta_1)$ and $X_2\sim\mathrm{Gamma}(\alpha_2,\beta_2)$ with probability distributions $p_1$ and $p_2$ respectively, their KL-divergence is: 
\begin{align}
    \mathrm{KL}(p_1\ \|\ p_2) &= \alpha_2\log\frac{\beta_1}{\beta_2} - \log \frac{\Gamma(\alpha_1)}{\Gamma(\alpha_2)} + (\alpha_1 - \alpha_2)\psi(\alpha_1) - (\beta_1 - \beta_2)\frac{\alpha_1}{\beta_1}.
    \label{eqn:KL_gamma}
\end{align}
Two burn-in times are required to make use of the KL-divergence. In particular, $B_1\in\mathbb{N}$ time steps are needed for the algorithm to converge from initialisation, \textcolor{red}{which are then discarded.} $B_2\in\mathbb{N}$ further \textcolor{red}{time steps are} needed to obtain enough samples to flag changes. During the initial $B_1 + B_2$ \textcolor{red}{time steps}, stationarity is assumed, \textcolor{red}{which is a standard and necessary assumption (see, for example, \cite{bodenham-2017})}. After $B_1 + B_2$ \textcolor{red}{time steps}, we propose to use the median absolute deviation (MAD) to evaluate the presence of a changepoint, as it is robust to outliers \citep{leys2013}. In general, for a dataset $\mathcal{Y} := \{y_i\}_{i=1}^n$, the MAD is defined as
\begin{equation}
\mathrm{MAD} = \med\big(\{|y_i - \med(\mathcal{Y})|\}_{i=1}^n\big),
\end{equation}
where $\med(\cdot)$ denotes the median of the elements in a set. An observation $y_j \in \mathcal{Y}$ is classified as an outlier if $|y_j - \med(\mathcal{Y})| > W\cdot \mathrm{MAD}$, for a choice of threshold $W\in\mathbb{R}_+$. \textcolor{red}{After a change is flagged, only a further $B_2$ time steps are required to collect a new dataset for flagging changes from this new state.}

In our proposed methodology, the elements in the set $\mathcal{Y}$ are the realised values of the KL-divergence between different gamma distributions, approximating the posterior distributions for the parameters $\lambda_{km},\ k,m\in\mathcal{K}$ at different time windows. \textcolor{red}{For each pair $(k,m)\in \mathcal{K} \times \mathcal{K}$, we let $\mathcal{X}_{km}^r$ store the samples used to check for a changepoint at time step $r$. For $r=B_1 + B_2 + 1$, define
\begin{align}
    \mathcal{X}_{km}^r := \left\{p_{km}^{r,1},\dots,p_{km}^{r,B_2}:p_{km}^{r,\ell} = q^{(B_1 + \ell)}(\lambda_{km})\text{ for } \ell=1,\dots,B_2\right\}. \notag
\end{align}
We then construct a set of KL-divergences $\mathcal{
Y}_{km}^{r,\kappa}$ as
\begin{align}
    \mathcal{Y}_{km}^{r,\kappa} := &\Big\{\mathrm{KL}\left[p_{km}^{r,\ell} \mid\mid p_{km}^{r,\ell - s}\right] : 1 \leq s \leq \kappa,\ s+1 \leq \ell \leq B_2\Big\}.
\end{align}
We then flag $r>B_1 + B_2$ as an outlier if 
\begin{equation}
    \left|\mathrm{KL}\left[q^{(r)}(\lambda_{km}) \mid \mid p_{km}^{r,B_2}\right] - \mathrm{med}\left(\mathcal{Y}_{km}^{r,\kappa}\right)\right| > W\cdot \mathrm{MAD}.
\end{equation}
If $q^{(r)}(\lambda_{km})$ is not an outlier, we define $\mathcal{X}_{km}^{r+1}$ as
\begin{align}
    \mathcal{X}_{km}^{r+1} := &\left\{p_{km}^{r+1,1}, \dots, p_{km}^{r+1,B_2} : p_{km}^{r+1,\ell-1} = p_{km}^{r,\ell} \text{ for } 2 \leq \ell \leq B_2, \text{and } p_{km}^{r+1,B_2} = q^{(r)}(\lambda_{km})\right\},
\end{align}
otherwise, we set $\mathcal{X}_{km}^{r+1} = \mathcal{X}_{km}^r$. This update procedure is illustrated in Figure \ref{fig:X_construction}. A changepoint is then flagged at update $r$ if $\kappa$ sequential outliers are detected under this procedure. If a changepoint is detected at $r^*$, under the assumption of a subsequent $B_2$ \textcolor{red}{time steps} of stationarity, we define
\begin{align}
    \mathcal{X}_{km}^{r^* + B_2} &:= \left\{p_{km}^{r^* + B_2,1},\dots,p_{km}^{r^* + B_2,B_2} :p_{km}^{r^* + B_2, \ell} = q^{(r^* + \ell)}(\lambda_{km}),\ \ell=1,\dots,B_2\right\}, \notag
\end{align}
and the above procedure is repeated. 
}
\begin{figure}
    \centering
    \includegraphics[width=0.4\textwidth]{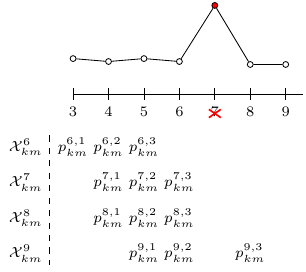}
    \caption{An illustration of how $\mathcal{X}_{km}^r$ is updated for a stream with an outlier at $r=7$.}
    \label{fig:X_construction}
\end{figure}
% In particular, we write $y^{r,s}_{km} := \mathrm{KL}[q^{(r)}(\lambda_{km})\ \|\  q^{(r-s)}(\lambda_{km})]$ and construct $\mathcal{Y}_{km}^s := \{y^{r,s}_{km}\}_{r\geq B_1 + B_2 + s}$. A changepoint is flagged at $t = r\Delta$ if for all $s=1,2,\dots,\kappa$, we have: 
% \begin{equation}
% |y_{km}^{r+s-1, s} - \med(\mathcal{Y}_{km}^s)| > L\cdot \mathrm{MAD}.    
% \end{equation}
% It must be remarked that if we choose a maximal lag of $\kappa$, we can only flag a changepoint a minimum of $\kappa - 1$ time steps after the changepoint has occurred. 

We now provide an argument for why the stream must be reset post-change. Suppose a change to the latent rate $\lambda$ occurs at some time $L_{r} < t' < L_{r+1}$, and that pre-change $\lambda \sim \mathrm{Gamma}(\alpha, \beta)$, whereas $\lambda \sim \mathrm{Gamma}(\alpha^\prime, \beta^\prime)$ post-change, for $\alpha\neq\alpha^\prime$ and $\beta\neq\beta^\prime$. Pre-change, our algorithm provides two consecutive CAVI estimates $q^{(r-1)}(\lambda)$ and $q^{(r)}(\lambda)$, with shapes and rates $\alpha^{(r-1)}, \beta^{(r-1)}$ and $\alpha^{(r)}, \beta^{(r)}$, respectively, and post-change, we have CAVI estimates $q^{(r+1)}(\lambda)$ and $q^{(r+2)}(\lambda)$ with shapes and rates $\alpha^{(r+1)}, \beta^{(r+1)}$ and $\alpha^{(r+2)}, \beta^{(r+2)}$. For ease of analysis, we suppose that %the error between the consecutive estimates doesn't change with the change, so 
$\alpha^{(r)} - \alpha^{(r-1)} = \alpha^{(r+2)} - \alpha^{(r+1)} = \Delta_\alpha$, and $\beta^{(r)} - \beta^{(r-1)} = \beta^{(r+2)} - \beta^{(r+1)} = \Delta_\beta$. 

We consider the ratio of the KL-divergences between consecutive estimates pre and post-change, and expand as $\Delta_\alpha \to 0$ to get
\begin{align}
    &\frac{\mathrm{KL}(q^{(r+2)}\ \|\ q^{(r+1)})}{\mathrm{KL}(q^{(r)}\ \|\ q^{(r-1)})} =  \frac{\alpha^{(r+1)}\left(\Delta_\beta / \beta^{(r+1)} - \log\left(1 + \Delta_\beta / \beta^{(r+1)}\right)\right)}{\alpha^{(r-1)}\left(\Delta_\beta / \beta^{(r-1)} - \log\left(1 + \Delta_\beta / \beta^{(r-1)}\right)\right)} \mathcal{O}(\Delta_\alpha).
    \label{eqn:KL_expansion_alpha}
\end{align}
Similarly, taking $\Delta_\beta \to 0$ yields
\begin{align}
    &\frac{\mathrm{KL}(q^{(r+2)}\ \|\ q^{(r+1)})}{\mathrm{KL}(q^{(r)}\ \|\ q^{(r-1)})} = \frac{\log\left\{\Gamma(\alpha^{(r+1)})/\Gamma(\alpha^{(r+1) }+ \Delta_\alpha)\right\} + \Delta_\alpha\psi(\alpha^{(r+1)})}{\log \left\{\Gamma(\alpha^{(r-1)})/\Gamma(\alpha^{(r-1)} + \Delta_\alpha)\right\} + \Delta_\alpha\psi(\alpha^{(r-1)})} +\mathcal{O}(\Delta_\beta).
    \label{eqn:KL_expansion_beta}
\end{align}
% -(\alpha_1 + \Delta_\alpha)\log\left(1 + \frac{\Delta_\beta}{\beta_1}\right) + \alpha_1\frac{\Delta_\beta}{\beta_1} + \mathcal{O}(\Delta_\alpha^2).
% If we expand similarly as $\Delta_\beta \to 0$, we obtain
% \begin{equation}
%     \mathrm{KL}(p_1\ \|\ p_2) = - \Delta_\alpha\frac{\Delta_\beta}{\beta_1} - \log \frac{\Gamma(\alpha_1)}{\Gamma(\alpha_1 + \Delta_\alpha)} - \Delta_\alpha\psi(\alpha_1) + \mathcal{O}(\Delta_\beta^2).
%     \label{eqn:KL_expansion_beta}
% \end{equation}
Observation of \eqref{eqn:KL_expansion_alpha}  indicates that to first order the ratio depends upon only the relative size of the difference between their rates, that is $\Delta_\beta / \beta$, and not directly upon the magnitude of $\beta$. This is not problematic only if the magnitude of $\Delta_\beta$ changes linearly with $\beta$. Also, \eqref{eqn:KL_expansion_beta} shows that the ratio depends directly on the magnitude of the scale $\alpha$. It follows that after the identification of a changepoint at $t = r_c\Delta$, $r_c\in\mathbb{N}$, the stream of realised KL-divergence values must be discarded, and the algorithm wait a further $B_2$ \textcolor{red}{time steps} to obtain enough samples for the new set $\mathcal{Y}_{km}^s := \{y_{km}^{r,s}\}_{r \geq r_c+ B_2 + s}$ which is used to detect any subsequent changes. A further benefit of consider a maximal lag $\kappa > 1$ is that more KL-divergence samples are obtained for the same number of observations.

To detect changes in the group memberships, the approximation $\prod_{i\in\mathcal{V}}\hat{q}(z_i)$ to the posterior of the latent group memberships in \eqref{eqn:online_z_update} provides a node-level vector $\tau_i$ ($i=1,...,N)$ of group membership probabilities. Computing $\argmax_k \tau_{ik}$ provides a group assignment for node $i$, and by comparing assignments between runs, one has a natural way of flagging changes. However, this approach is unsatisfactory as it does not account for the magnitude of the change in probability. 
% For example, suppose that $K=2$, and $\tau_{i1} = 1 - \tau_{i2}$ alternates between slightly above and slightly below 0.5 on consecutive updates. This na\"ive approach would flag multiple changes, despite there being only small changes in the probability of group assignment. 
The KL-divergence also cannot be used for flagging group changes, since $\mathrm{KL}(p_1\ \|\ p_2)$ is only defined only when $p_2 > 0$ for all values in the support of $p_1$. %For $q_1,q_2\in \Delta^n$, the $n$-dimensional simplex, %the KL-divergence between $X_1 \sim \mathrm{Categorical}(q_1)$ and $X_2 \sim \mathrm{Categorical}(q_2)$, with probability distributions $p_1$ and $p_2$, respectively, is
%\begin{equation}
%    \mathrm{KL}(p_1 \ \|\ p_2) = \sum_{i=1}^n q_{1i}\log\frac{q_{1i}}{q_{2i}}.
%\label{eqn:KL_categorical}
%\end{equation}
%When $q_{1i} \neq q_{2i},\ q_{2i} = 0$, the relevant term is defined as $\infty$, and for $q_{1i} = q_{2i} = 0$, it takes the value 0. Suppose $K=2$ and $\tau^{(r-1)}(z_i) = (1,0)$ and $\tau^{(r)}(z_i) = (0.99, 0.01)$, then $\mathrm{KL}(\hat{q}^{(r)}(z_i) \ \|\ \hat{q}^{(r-1)}(z_i)) = \infty$.
Instead, we utilise the Jensen-Shannon divergence \citep[JS;][]{lin1991}, defined for two distributions $p_1,p_2$ as 
\begin{equation}
    \mathrm{JS}(p_1\ \|\ p_2) = \frac{1}{2}\mathrm{KL}(p_1\ \|\ p) + \frac{1}{2}\mathrm{KL}(p_2\ \|\ p),
    \label{eqn:JS_div}
\end{equation}
where $p := (p_1 + p_2)/2$ is a 50-50 mixture distribution of $p_1$ and $p_2$. The JS-divergence is symmetric in its arguments, and avoids the undefined values encountered with the KL-divergence. %For the categorical random variables $X_1$ and $X_2$ defined above, this is given by
For $q_1,q_2\in \Delta^n$, where $\Delta^n$ denotes the $n$-dimensional simplex, the JS-divergence between discrete random variables $X_1 \sim \mathrm{Categorical}(q_1)$ and $X_2 \sim \mathrm{Categorical}(q_2)$, with probability mass functions $p_1$ and $p_2$ respectively, takes the following form:
\begin{align}
    \mathrm{JS}(p_1 \ \|\ p_2) &= \frac{1}{2}\left(\sum_{i=1}^n q_{1i}\log\frac{q_{1i}}{(q_{1i} + q_{2i})/2} +\sum_{i=1}^n q_{2i}\log\frac{q_{2i}}{(q_{1i} + q_{2i})/2}\right).
\label{eqn:JS_categorical}
\end{align}
%The values of this stream can be very small, and so a log transformation is taken. 
The MAD is then used to detect changes to the stream of logged values, \textcolor{red}{in an analogous way to the latent rates,} but with two additional requirements on the group probabilities:
\begin{align}
&\argmax_k \tau^{(r)}_{ik} \neq \argmax_k \tau^{(r-s)}_{ik}, & \text{for all } s=1,\dots,\kappa,   \label{eqn:tau_stream_add_req_1}\\
&\argmax_k \tau^{(r-s_1)}_{ik} = \argmax_k \tau^{(r-s_2)}_{ik},  &\text{for all }s_1,s_2 \in \{1,\dots,\kappa\}. \label{eqn:tau_stream_add_req_2}
\end{align}
If the MAD condition and both \eqref{eqn:tau_stream_add_req_1} and \eqref{eqn:tau_stream_add_req_2} are met, then a changepoint is flagged. 
The conditions \eqref{eqn:tau_stream_add_req_1} and \eqref{eqn:tau_stream_add_req_2} ensure that changes are flagged only if the most likely group assignment changes after a window of $\kappa$ \textcolor{red}{time steps} where the most likely group assignment was stable. 
It should noted that for this stream, as the probabilities are constrained to sum to 1, we do not %need to 
reset these values after a changepoint is detected.\\ %\textcolor{red}{I don't think?}
\textcolor{red}{Initialisation of the approximate posterior parameters at update $r$ as the values obtained from update $r-1$ is a heuristic approach to avoid label switching that we demonstrate works well. An alternative approach would be to instead consider the set of permutations $S_K$ of $[K]$. At each update, one could minimise the KL/JS-divergence over all $\sigma \in S_K$, which would identify the relevant permutation should a label switch have occurred. This would correspond to calculating $\min_{\sigma \in S_K}\mathrm{KL}[q^{(r)}(\lambda_{km}) \mid \mid p_{\sigma(k)\sigma(m)}^{r,B_2}]$ and identify whether there has been label switching from that. However, such an approach to provide additional robustness would require $\mathcal{O}(K!)$ computations, which could be undesirable for an online algorithm.\\
Note that identifiability issues may arise in the case that a rate and membership change occur simultaneously. However, in simulations conducted with a small proportion of nodes changing groups ($\approx 25\%$) at the same time as a change to the latent rates, no identifiability issues were encountered.}

\begin{figure}
    \centering
    \begin{subfigure}[h!]{0.45\textwidth}
        \centering 
        \includegraphics[width=\textwidth]{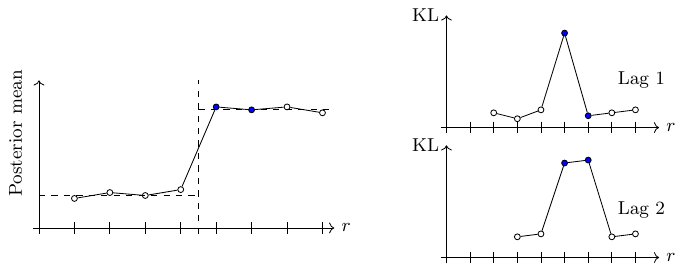}
        \caption{A true changepoint.}
        \label{fig:KL-lag-cp}
    \end{subfigure}
    \hspace{.025\textwidth}
    \unskip\ \vrule\ 
    \hspace{.025\textwidth}
    \begin{subfigure}[h!]{0.45\textwidth}
        \centering
        \includegraphics[width=\textwidth]{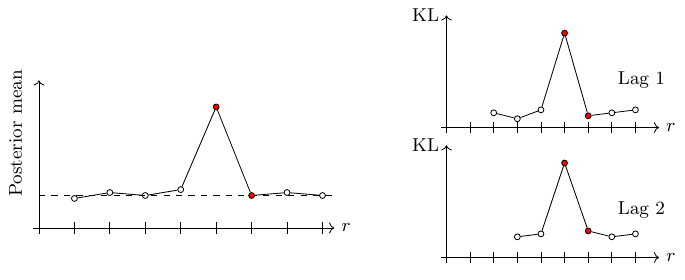}
        \caption{An outlier.}
        \label{fig:KL-lag-no-cp}
    \end{subfigure}
    \caption{%An illustration
    Illustration of the behaviour of the KL-divergence with different lags for a changepoint and an outlier.}
    \label{fig:kl_lag}
\end{figure}

\subsection{Online VB for the dynamic BHPP with an unknown underlying graph}
\label{subsec:unknown_adj}

Until now, it has been assumed that the graph is fully connected. We relax this assumption and extend the framework to the setting where $\boldsymbol A=\{a_{ij}\}\in\{0,1\}^{N\times N}$ is unknown. Assuming $a_{ij} = 1$ when there is no edge adversely affects estimation of the latent rates; instead of the absence of an event indicating the absence of an edge, it instead just leads to a lower rate estimate. This motivates a method that can take into account the possibility that $a_{ij}=0$. The work here builds upon the sparse setup of \cite{Matias2018-rx}.

For each $(i,j)\in\mathcal{R}$, an additional latent variable $a_{ij} \in \{0,1\}$ is introduced. A new latent group membership $z_i'\in\mathcal{K}' := \{1,\dots, K'\}$ is assigned to each $i\in\mathcal{V}$, and $a_{ij}$ is Bernoulli-distributed conditional upon $z_i'$ and $z_j'$, according to a stochastic blockmodel \citep[SBM;][]{Holland}. Each $a_{ij}$ captures the probability that $(i,j)\in \mathcal{R}$ is also an element of $\mathcal{E}$, and functions to reduce the contribution of edges with no arrivals to the estimates of the underlying rates. The new model is then expressed as:
\begin{align}
    x_{ij}(t) \mid a_{ij}, z_i, z_j, \lambda_{z_iz_j} &\sim \mathrm{Poisson}(a_{ij}\lambda_{z_iz_j}t), 
    &\hfill\text{for all } i,j\in\mathcal{V}, \label{eqn:poiss_network_adj} \\
    \lambda_{km} &\sim \mathrm{Gamma}\left(\alpha_{km}^0, \beta_{km}^0\right),
    &\text{for all } k,m \in \mathcal{K},\label{eqn:lam_prior_adj} \\
    a_{ij} \mid z_i', z_j', \rho_{z_i'z_j'} &\sim \mathrm{Bernoulli}(\rho_{z_i'z_j'}),
    &\text{for all } i,j \in \mathcal{V}, \label{eqn:a_prior_adj}\\
    \rho_{k'm'} & \sim \mathrm{Beta}\left(\eta_{k'm'}^0, \zeta_{k'm'}^0\right), 
    &\text{for all } k', m' \in \mathcal{K}',  \label{eqn:rho_prior_adj}\\
    z_i\mid\pi &\sim \mathrm{Categorical}(\pi), & \text{for all } i \in \mathcal{V}, \label{eqn:z_adj}\\ 
    z_i'\mid\mu &\sim \mathrm{Categorical}(\mu), 
    &\text{for all } i \in \mathcal{V}, \label{eqn:z_prime_adj}\\ 
    \pi &\sim \mathrm{Dirichlet}\left(\gamma^0\right), \label{eqn:pi_prior_adj}\\
    \mu &\sim \mathrm{Dirichlet}\left(\xi^0\right), \label{eqn:mu_prior_adj}
\end{align}
\textcolor{red}{We refer to this model as the stochastic blockmodel BHPP (SBM-BHPP).} We allow the most general setting in which the group structure that governs the rates is separate from that which determines edge connection probabilities. We assume that the graph is static, and so changes to the memberships that drive the edge-processes should not affect the determination of the graph adjacency matrix. The decoupling of the edge-process and edge-connection group memberships ensure this. Furthermore, as has been noted, SBMs provide good approximations to any exchangeable random graph model \citep{airoldi2013}, and they therefore offer a logical model for the graph generation mechanism. This new model structure is represented in graphical model form in Figure~\ref{fig:dag_model_adj}.

\begin{figure*}
    \centering
    \includegraphics[width=0.7\textwidth]{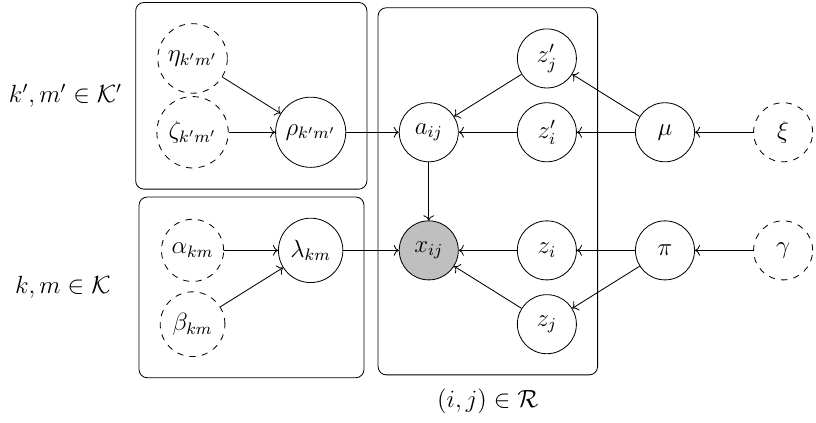}
    \caption{A directed acyclic graphical model representation of the model in  \eqref{eqn:poiss_network_adj}-\eqref{eqn:mu_prior_adj}.}
    \label{fig:dag_model_adj}
\end{figure*}

Given the graph is considered to be static, if $x_{ij}(t)>0$ for some $t$, then $(i,j) \in \mathcal{E}$ (corresponding to $a_{ij}=1$) with probability one. Otherwise, if $x_{ij}(t) = 0$ then we either have no edge (corresponding to $a_ij = 0$), or a non-null intensity process with no events in $(0,t]$. The difference with the Section \ref{sec:model} is that we now have the additional local latent variables, $z$, $z'$ and $\boldsymbol{A}$, all of which grow with $N$ (linearly for $z$ and $z'$, and quadratically for $\boldsymbol{A}$). 

To implement variational inference, the full conditional distribution $p(a_{ij}\mid z_i, z_j,z_i',z_j',\lambda_{z_iz_j},\rho_{z_i'z_j'},\mathcal{D}_{1:r})$ must be written down exactly, as per~\eqref{eq:cavi_update_sol}. Using the fact that inter-arrival times from a Poisson process are exponentially distributed, \cite{Matias2018-rx} shows that the conditional posterior probability of an edge $(i,j) \in \mathcal{E}$ in the case of static $\lambda$ takes the form 
\begin{multline}
    p(a_{ij} = 1\mid\mathcal{D}_{1:r}, z_{i}=k, z_{j} = m, z_{i}^\prime=k^\prime, z_{j}^\prime = m^\prime,
    \lambda_{km}, \rho_{k'm'}) =  \mathbb{I}\{x_{ij}(r\Delta) > 0\} \\ + \frac{\mathbb{I}\{x_{ij}(r\Delta) = 0\}\rho_{k'm'}\exp\left(-\lambda_{km}r\Delta\right)}{1 - \rho_{k'm'} + \rho_{k'm'}\exp\left(-\lambda_{km}r\Delta\right)}.
    \label{eqn:true_a_posterior}
\end{multline}
Since $\boldsymbol A$ is static, the posterior for each $a_{ij}$ should be computed using the full data $\mathcal{D}_{1:r}$. Note that the conditional distribution in ~\eqref{eqn:true_a_posterior} depends on $\mathcal{D}_{1:r}$ only via the counting process $x_{ij}(\cdot)$ at time $r\Delta$.
For the model in \eqref{eqn:poiss_network_adj}-\eqref{eqn:mu_prior_adj}, the true posterior distribution after $r$ batches factorises as
\begin{align}
    p(\boldsymbol{A},z,z',\theta  \mid\mathcal{D}_{1:r}) \propto p(\mathcal{D}_r\mid\boldsymbol{A},z,z'\theta)\times p(\boldsymbol{A}, z,z',\theta\mid\mathcal D_{1:(r-1)}),
    %p(\boldsymbol{A}\mid\mathcal{D}_{1:r},z,\theta) \times p(z,\theta\mid\mathcal{D}_{1:r}) \\
    %&\propto \prod_{(i,j)\in\mathcal{R}}p(a_{ij}\mid\mathcal{D}_{1:r},z_i,z_j,\theta) \times p(\mathcal{D}_r\mid z,\theta) \times p(z,\theta\mid\mathcal{D}_{1:(r-1)}).
    \label{eqn:unknown_adj_posterior}
\end{align}
similarly to the posterior distribution in \eqref{eqn:posterior}.
Therefore, inference on this model can be performed in much the same way as inference on the original model. However, as the true conditional posterior for $\boldsymbol{A}$ is tractable, the update procedure is slightly different. Unlike \cite{Matias2018-rx}, we cannot simply take $q^{(r)}(a_{ij})$ to be Bernoulli distributed with probability as in \eqref{eqn:true_a_posterior} as $\lambda$ is, in general, not static. Instead, to account for variations in $\lambda$ over the full observation window, we propose an approximation to the posterior distribution of $a_{ij}$ at $t=r\Delta$, conditional upon $z_{i}=k, z_{j}=m, z_{i}^\prime=k^\prime, z_{j}^\prime = m^\prime$, of the form %, implying that the distribution can be updated 
$\hat q^{(r)}(a_{ij}) = \mathrm{Bernoulli}\left(\sigma_{ij}^{(r)}\right)$, where
\begin{align}
    %\hat q^{(r)}(a_{ij}) = \mathrm{Bernoulli}\left(\sigma_{ij}^{(r)}\right), \ \mathrm{with} \ 
    \sigma_{ij}^{(r)} = \mathbb{I}\{x_{ij}(r\Delta) > 0\}
 + \frac{\mathbb{I}\{x_{ij}(r\Delta) = 0\}\hat{\rho}_{k'm'}^{(r)}\exp\left(-\Delta\sum_{\ell=0}^{r}\hat{\lambda}^{(\ell)}_{km}\right)}{1 - \hat{\rho}_{k'm'}^{(r)} + \hat{\rho}_{k'm'}^{(r)}\exp\left(-\Delta\sum_{\ell=0}^{r}\hat{\lambda}^{(\ell)}_{km}\right)}.\qquad 
    \label{eqn:sigma_update}
\end{align}
Here, $\hat{\rho}_{k'm'}^{(\ell)}$ and $\hat{\lambda}_{km}^{(\ell)}$ denote the mean of our approximate posterior distribution for $\rho_{k'm'}^{(\ell)}$ and $\lambda_{km}^{(\ell)}$, respectively, at update $\ell$. At update $r$, we propose a variational approximation of the form $q^{(r)}(\boldsymbol A, z,z', \theta) = \prod_{(i,j)\in\mathcal{R}}\mathrm{Bernoulli}(\sigma_{ij}^{(r)}) \times q(\theta, z, z')$, where $\sigma_{ij}^{(r)}$ is as in \eqref{eqn:sigma_update} and $q(\theta,z,z')\in\mathcal{F}^{2N+S}$. A two-step estimation procedure follows naturally, wherein $q(\theta,z,z')$ is approximated using CAVI as was done previously, and then $\sigma^{(r)}_{ij}$ is updated using \eqref{eqn:sigma_update} and the CAVI approximations to the posteriors of $\lambda$ and $\rho$. 

Note that since $a_{ij}$ for $i,j\in\mathcal{V}$, $z_i'$ for $i \in\mathcal{V}$, $\rho_{k'm'}$ for $k',m'\in\mathcal{K}'$ and $\mu$ are considered \textit{static} parameters, forgetting factors for these quantities are not required.

Consider the BHPP model with unknown graph structure, as in \eqref{eqn:poiss_network_adj}-\eqref{eqn:mu_prior_adj}, with global parameters $\theta=(\lambda,\pi,\mu,\rho)$ and local parameters $z$, $z'$ and $\boldsymbol{A}$. At step $r-1$, we approximate the posterior density $p(\boldsymbol{A},z,z',\theta\mid\mathcal D_{1:(r-1)})$ by $\hat{q}^{(r-1)}(\boldsymbol A, z, z', \theta)$, which is the product of the optimal CAVI solution $\hat{q}^{(r-1)}(z,z',\theta)$ and the fixed-form approximation $\hat{q}^{(r-1)}(\boldsymbol A)$ of \eqref{eqn:sigma_update}. For the BHPP model with known graph structure, all latent variables could change, and thus in \eqref{eq:tempered_approximation} all components were tempered before being passed as the prior for \textcolor{red}{time steps} $r$. However, in the case of an unknown graph structure, as the graph is assumed to be static, only the components of the dynamic latent variables are tempered when constructing the prior for \textcolor{red}{time step} $r$. We pass through as the prior for \textcolor{red}{time step} $r$ the partially tempered density 
\begin{align}
    \hat{q}^{(r-1)}_\delta(\boldsymbol{A},z,\theta) &= 
    \prod_{k,m\in\mathcal{K}} \hat{q}^{(r-1)}(\rho_{km}) \times \prod_{i\in\mathcal{V}} \hat{q}^{(r-1)}(z_i') \times \prod_{(i,j)\in\mathcal{R}} \hat{q}^{(r-1)}(a_{ij}) \times
    \hat{q}^{(r-1)}(\mu) \\
    & \times\prod_{i\in\mathcal{V}} \hat{q}^{(r-1)}_{\delta_z}(z_i) \times \prod_{k,m\in\mathcal{K}} \hat{q}^{(r-1)}_{\delta_\lambda}(\lambda_{km}) \times \hat{q}^{(r-1)}_{\delta_\pi}(\pi).\qquad 
    \label{eq:prior_structure_adj}
\end{align}
    %Consider the model in but with priors of the form
We assume that each component of \eqref{eq:prior_structure_adj} takes the same form as its corresponding complete conditional distribution under the BHPP model with unknown graph structure of \eqref{eqn:poiss_network_adj}-\eqref{eqn:mu_prior_adj}. Additionally, the distributions for $\lambda$, $\pi$ and $z$ are raised to a power and normalised. In summary:
\begin{align*}
    \hat{q}^{(r-1)}(a_{ij}) &= \mathrm{Bernoulli}\left(\sigma_{ij}^{{(r-1)}}\right), &\text{for all }i,j\in\mathcal{V}, \\
    \hat{q}^{(r-1)}_{\delta_\lambda}(\lambda_{km}) &=\frac{1}{C_{\lambda_{km},r-1}}\mathrm{Gamma}\left(\alpha_{km}^{(r-1)}, \beta_{km}^{(r-1)}\right)^{\delta_\lambda}, 
    &\text{for all }k,m\in\mathcal{K},\\
    \hat{q}^{(r-1)}(\rho_{k'm'}) &= \mathrm{Beta}\left(\eta^{(r-1)}_{k'm'}, \zeta^{(r-1)}_{k'm'}\right), &\text{for all }k',m'\in\mathcal{K}', %^{\delta_\rho} 
    \\
    \hat{q}^{(r-1)}(z_i) &= \frac{1}{C_{z_i,r-1}}\mathrm{Categorical}\left(\tau_i^{(r-1)}\right)^{\delta_z}, &\text{for all }i\in\mathcal{V},
    \\
    \hat{q}^{(r-1)}(z_i') &= \mathrm{Categorical}\left(\nu_i^{(r-1)}\right), &\text{for all }i\in\mathcal{V}, \\
    \hat{q}^{(r-1)}_{\delta_\pi}(\pi) &= \frac{1}{C_{\pi,r-1}}\mathrm{Dirichlet}\left(\gamma^{(r-1)}\right)^{\delta_\pi}, \\
    \hat{q}^{(r-1)}(\mu) &= \mathrm{Dirichlet}\left(\xi^{(r-1)}\right). 
\end{align*}
Suppose that data is again observed on the interval $I_r := (L_{r-1}, L_r]$, for $r\in\mathbb{N}$. Under the prior structure in \eqref{eq:prior_structure_adj}, the CAVI sequential updates take the form:
\begin{enumerate}
    \item $\hat{q}^{(r)}(\lambda_{km}) = \mathrm{Gamma}\left(\alpha_{km}^{(r)}, \beta_{km}^{(r)}\right)$ where for all $k,m\in\mathcal{K}$ we define $\alpha_{km}^{(r)}$ and $\beta_{km}^{(r)}$ as
    \begin{align}
    \alpha_{km}^{(r)} &= \delta_\lambda\left(\alpha^{(r-1)}_{km} - 1\right) \sum_{(i,j)\in\mathcal{R}}\tau^{(r-1)}_{ik}\tau^{(r-1)}_{jm}\sigma_{ij}^{(r-1)}x_{ij}^{(r)} + 1, \\
    \beta_{km}^{(r)} &= \delta_\lambda\beta^{(r-1)}_{km} + \sum_{(i,j)\in\mathcal{R}}\tau^{(r-1)}_{ik}\tau^{(r-1)}_{jm}\sigma_{ij}^{(r-1)}. \notag\label{eqn:lam_update_incomplete}
    \end{align}
    \item $\hat{q}^{(r)}(\rho_{k'm'}) = \mathrm{Beta}\left(\eta^{(r)}_{k'm'}, \zeta^{(r)}_{k'm'}\right)$ where for all $k',m'\in\mathcal{K}'$, we define $\eta_{k'm'}^{(r)}$ and $\zeta_{k'm'}^{(r)}$ as
    \begin{align}
    \eta_{k'm'}^{(r)} &= %\delta_\rho
    \eta_{k'm'}^{(r-1)} + \sum_{(i,j)\in\mathcal{R}}\nu^{(r-1)}_{ik'}\nu^{(r-1)}_{jm'}\sigma_{ij}^{(r-1)} + 1, \\
    \zeta_{k'm'}^{(r)} &= %\delta_\rho
    \zeta_{k'm'}^{(r-1)} + \sum_{(i,j)\in\mathcal{R}}\nu^{(r-1)}_{ik'}\nu^{(r-1)}_{jm'}\left(1 - \sigma_{ij}^{(r-1)}\right). \label{eqn:rho_update_incomplete}
    \end{align}
    \item $\hat{q}^{(r)}(z_i) = \mathrm{Categorical}\left(\tau_i^{(r)}\right)$ for all $ i\in\mathcal{V}$ where $\tau_i^{(r)} = (\tau_{i1}^{(r)},\dots,\tau_{iK}^{(r)})$, with $\sum_{k\in\mathcal{K}} \tau_{ik}^{(r)} = 1$, satisfies the relation:
    \begin{align}
    \tau_{ik}^{(r)} &\propto \exp\Bigg\{\delta_z\left[\psi\left(\gamma^{(r-1)}_k\right) - \psi\left(\sum_{\ell=1}^K \gamma^{(r-1)}_\ell\right)\right] + \sum_{j\in\mathcal{V}}\sum_{m\in\mathcal{K}} \tau_{jm}^{(r)} \bigg(1 - \mathbb{I}_{\{i\}}(j)\mathbb{I}_{\{k\}}(m)\bigg)  \\
    &\times \Bigg[-\Delta\left(\sigma_{ij}^{(r-1)}\frac{\alpha^{(r)}_{km}}{\beta^{(r)}_{km}} + \sigma_{ji}^{(r-1)}\frac{\alpha^{(r)}_{mk}}{\beta^{(r)}_{mk}}\right) + x_{ij}(I_r)\ \sigma_{ij}^{(r-1)}\left(\psi\left(\alpha^{(r)}_{km}\right) - \log\left(\beta^{(r)}_{km}\right)\right) \\ 
    &+ x_{ji}(I_r)\ \sigma_{ji}^{(r-1)}\left(\psi\left(\alpha^{(r)}_{mk}\right) - \log\left(\beta^{(r)}_{mk}\right)\right)\Bigg] + \sigma_{ii}^{(r-1)}\bigg[x_{ii}(I_r)\left(\psi\left(\alpha^{(r)}_{kk}\right) - \log\left(\beta^{(r)}_{kk}\right)\right) \\ & -\Delta\frac{\alpha_{kk}}{\beta_{kk}}\bigg]\Bigg\}.
    \end{align}
    \item $\hat{q}^{(r)}(z_i') = \mathrm{Categorical}\left(\nu_i^{(r)}\right)$ for all $ i\in\mathcal{V}$ where $\nu_i^{(r)} = (\nu_{i1}^{(r)},\dots,\nu_{iK'}^{(r)})$, with $\sum_{k'\in\mathcal{K}'} \nu_{ik'}^{(r)} = 1$, satisfies the relation:
    \begin{align}
    \nu_{ik'}^{(r)} &\propto \exp\Bigg\{\psi\left(\xi^{(r)}_{k'}\right) - \psi\left(\sum_{\ell'=1}^{K'} \xi^{(r)}_{\ell'}\right) + \sum_{j\in\mathcal{V}}\sum_{m'\in\mathcal{K}'} \nu_{jm'}^{(r)}\bigg(1 - \mathbb{I}_{\{k'\}}(m')\mathbb{I}_{\{i\}}(j)\bigg) \\&\times \Bigg[\sigma_{ij}^{(r)}\left(\psi\left(\eta^{(r)}_{k'm'}\right) - \psi\left(\eta^{(r)}_{k'm'} + \zeta^{(r)}_{k'm'}\right)\right) + \left(1 - \sigma_{ij}^{(r)}\right)\left(\psi\left(\zeta^{(r)}_{k'm'}\right) - \psi\left(\zeta_{k'm'}^{(r)} + \eta_{k'm'}^{(r)}\right)\right)\\ &+ \sigma_{ji}^{(r)}\left(\psi\left(\eta^{(r)}_{m'k'}\right) - \psi\left(\eta^{(r)}_{m'k'} + \zeta^{(r)}_{m'k'}\right)\right) + \left(1 - \sigma_{ji}^{(r)}\right)\left(\psi\left(\zeta^{(r)}_{m'k'}\right) - \psi\left(\zeta_{m'k'}^{(r)} + \eta_{m'k'}^{(r)}\right)\right)\Bigg] \\
    &+ \sigma_{ii}^{(r)}\bigg[\psi\left(\eta^{(r)}_{k'k'}\right) - \psi\left(\eta^{(r)}_{k'k'} + \zeta^{(r)}_{k'k'}\right)\bigg]+ \left(1 - \sigma_{ii}^{(r)}\right)\bigg(\psi\left(\zeta^{(r)}_{k'k'}\right) - \psi\left(\eta^{(r)}_{k'k'} + \zeta^{(r)}_{k'k'}\right)\bigg)\Bigg\}.
    \end{align}
    \item $\hat{q}^{(r)}(\pi) = \mathrm{Dirichlet}\left(\gamma^{(r)}\right)$ where for all $k\in\mathcal{K}$, we define $\gamma^{(r)}_k$ as:
    \begin{align}
    \gamma^{(r)}_k = \delta_\pi\left(\gamma_k^{(r-1)} - 1\right) + \delta_z\sum_{i\in\mathcal{V}}\tau^{(r)}_{ik} + 1. \label{eqn:pi_update_incomplete}
    \end{align}
    \item $\hat{q}^{(r)}(\mu) = \mathrm{Dirichlet}\left(\xi^{(r)}\right)$ where for all $k'\in\mathcal{K}'$, we define $\xi_{k'}^{(r)}$ as:
    \begin{align}
    \xi^{(r)}_{k'} = \xi_{k'}^{(r-1)}+ \sum_{i\in\mathcal{V}}\nu^{(r)}_{ik'}.\label{eqn:nu_update_incomplete}
    \end{align}
    \item $\hat q^{(r)}(a_{ij}) = \mathrm{Bernoulli}\left(\sigma_{ij}^{(r)}\right)$, where $\sigma_{ij}^{(r)}$ is defined in \eqref{eqn:sigma_update} for all $i,j \in\mathcal{V}$.
\end{enumerate}

\subsection{Online VB for the dynamic BHPP with an unknown number of groups}
\label{subsec:unknown_groups}

In real-world applications, the number of groups $K$ is usually unknown, and it must be estimated from the observed data. To address this issue, we adopt a Bayesian non-parametric approach, proposing a Dirichlet process prior \citep{ferguson1973} on the group memberships. 
In particular, we replace the Dirichlet prior distributions in \eqref{eqn:pi_prior} and \eqref{eqn:pi_prior_adj} with a Griffiths-Engen-McCloskey \citep[GEM;][]{pitman2002} prior distribution, with parameter $\nu$, written $\pi\sim\mathrm{GEM}(\nu)$. This prior distribution corresponds to an infinite limit of a Dirichlet distribution: $\mathrm{GEM}(\nu)=\lim_{K\to\infty}\mathrm{Dirichlet}(\nu \boldsymbol{1}_K/K)$, where $\boldsymbol{1}_K$ is the vector of ones of length $K$. The full model becomes: 
\begin{align}
    x_{ij}(t) \mid z_i, z_j, \lambda_{z_iz_j} &\sim \mathrm{Poisson}(\lambda_{z_iz_j}t),&\text{for all } (i,j)\in\mathcal{E}, \label{eqn:poisson_GEM} \\
    \lambda_{km} &\sim \mathrm{Gamma}(\alpha, \beta),&\text{for all } k,m\in\mathbb{N}, \label{eqn:lambda_GEM} \\
    z_i\mid\pi &\sim \mathrm{Categorical}(\pi), &\text{for all } i\in\mathcal{V}, \label{eqn:z_GEM}\\
    \pi & \sim \mathrm{GEM}(\nu),\label{eqn:u_GEM}
\end{align}
where $\pi$ represents an infinite sequence $\pi_1,\pi_2,\dots$ such that $\pi_k\geq0$ for all $k=1,2,\dots$ and $\sum_{k=1}^\infty\pi_k=1$. 
The GEM prior distribution also corresponds to the distribution of proportions obtained under a stick-breaking representation \citep{sethuraman1994} of a Dirichlet process. Therefore, the proportions $\pi$ can be reparametrised as a product of variables $u_1,u_2,\dots\in[0,1]$ drawn from independent beta distributions, as follows:
\begin{align}
    u_k \sim \mathrm{Beta}(1,\nu), \quad
    \pi_k := u_k\prod_{\ell=1}^{k-1}(1 - u_\ell),
    \text{for } k=1,2,\dots.&\qquad
    \label{eq:stick_break}
\end{align}
This decomposition is particularly useful to derive an online variational inference algorithm for the BHPP model with GEM priors (GEM-BHPP), following \cite{blei2006}. In particular, within a mean-field approximation $q(\lambda,u,z)=q(\lambda)\times q(u)\times q(z)$ for the posterior distribution $p(\lambda,u,z\mid\mathcal D_{1:r})$, a variational approximation $q(u)=\prod_{\ell=1}^\infty q(u_\ell)$ is posited directly on $u_1,u_2,\dots$, rather than $\pi$. This approximation is truncated at a level $L\in\mathbb N$, implying that $q(u_\ell)=\delta_0(u_\ell)$ for $\ell=L+1,L+2,\dots$, effectively resulting in an $L$-dimensional probability vector from \eqref{eq:stick_break}.  

It should be pointed out how this approach differs from the model set-up we considered earlier.
In the GEM-BHPP, we make no truncation in the model set-up, but rather only in the variational approximation to it. This is in contrast to the method we presented in Sections~\ref{sec:model}, wherein one would be using a finite-dimensional Dirichlet distribution directly in the prior structure.

We introduce the notation $\omega_i^{(r)}$, $\nu_i^{(r)}$, $\alpha_{km}^{(r)}$ and $\beta_{km}^{(r)}$, for $i,k,m\in\{1,\dots,L\}$. If we define $\omega^{(0)}_i \equiv 1$, $\nu_i^{(0)} \equiv \nu$ for all $i\in\{1,\dots,L\}$ and $\alpha_{km}^{(0)} \equiv \alpha$ and $\beta_{km}^{(0)}\equiv \beta$, for all $k,m\in\{1,\dots,L\}$, the first time step takes the form of the model specified in  \eqref{eqn:poisson_GEM}--\eqref{eqn:u_GEM}. Using this notation, we can formulate our inference procedure based on the same process of sequential CAVI updates. 

Consider the GEM-BHPP model in \eqref{eqn:u_GEM}--\eqref{eqn:poisson_GEM} with a stick-breaking reparametrisation of $\pi$ as in \ref{eq:stick_break}, global parameters $\theta=(\lambda,u)$ and local parameters $z$. The posterior distribution $p(\theta, z\mid \mathcal D_{1:(r-1)})$ is approximated by the optimal CAVI solution $\hat{q}^{(r-1)}(\theta,z)$. Similarly to \eqref{eq:tempered_approximation}, this CAVI approximation is tempered to form $\hat{q}^{(r-1)}_\delta(\theta,z)$, taking the following product form:
\begin{align}
    \hat{q}^{(r-1)}_\delta(\theta,z) =  \prod_{k=1}^L\prod_{m=1}^L \hat{q}^{(r-1)}_{\delta_\lambda}(\lambda_{km}) \times\prod_{i\in\mathcal{V}} \hat{q}^{(r-1)}_{\delta_z}(z_i) \times \prod_{\ell=1}^L\hat{q}^{(r-1)}_{\delta_u}(u_\ell). \qquad
    \label{eq:prior_structure_gem}
\end{align}
This tempered density is then passed through as the prior for \textcolor{red}{time step} $r$. We assume that each component of \eqref{eq:prior_structure_gem} takes the same form of its corresponding complete conditional distribution under the GEM-BHPP model in \eqref{eqn:poisson_GEM}--\eqref{eqn:u_GEM}, raised to a power and normalised: 
\begin{align}
    &\hat{q}^{(r-1)}_{\delta_\lambda}(\lambda_{km}) = \frac{1}{C_{\lambda_{km},r-1}}%\prod_{k,m\in\mathcal{K}}
    \mathrm{Gamma}\left(\alpha_{km}^{(r-1)}, \beta_{km}^{(r-1)}\right)^{\delta_\lambda}, &\text{for all }k,m\in\{1,\dots,L\}, \\ 
    &\hat{q}^{(r-1)}_{\delta_z}(z_i) = \frac{1}{C_{z_i,r-1}}\mathrm{Categorical}(\tau^{(r-1)})^{\delta_z},&\text{for all }i\in\mathcal{V}, \\
    &\hat{q}^{(r-1)}_{\delta_u}(u_\ell) = \frac{1}{C_{u_\ell,r-1}}\mathrm{Beta}\left(\omega_\ell^{(r-1)}, \nu_\ell^{(r-1)}\right)^{\delta_u},&\text{for all } \ell\in\{1,\dots,L\},
\end{align}
where $C_{\lambda_{km},r-1},C_{z_i,r-1}$ and $C_{u_\ell,r-1}$ are normalising constants, ensuring that the densities are valid, and $\delta=(\delta_\lambda,\delta_z,\delta_u)\in (0,1]^3$ are temperature parameters specific to each class of latent variables.
Suppose that data is observed on the interval $I_r := (L_{r-1}, L_r]$, for $r\in\mathbb{N}$, and denote the count on edge $(i,j)$ during this interval by $x_{ij}(I_r)$. Under the prior structure in \eqref{eq:prior_structure_gem}, the CAVI sequential updates then take the form
\begin{enumerate}
    \item $\hat{q}^{(r)}(\lambda_{km}) = \mathrm{Gamma}(\alpha^{(r)}_{km}, \beta^{(r)}_{km})$ for all $k,m\in \mathcal{K}$ where $\alpha^{(r)}_{km}, \beta^{(r)}_{km}$ are defined as:
    \begin{align}
    \alpha_{km}^{(r)} &= \delta_\lambda\left(\alpha_{km}^{(r-1)} - 1\right) + \sum_{(i,j)\in\mathcal{E}}\tau_{ik}^{(r-1)}\tau_{jm}^{(r-1)}x_{ij}(I_r) + 1, \\
    \beta_{km}^{(r)} &= \delta_\lambda\beta_{km}^{(r-1)} + \Delta\sum_{(i,j)\in\mathcal{E}}\tau_{ik}^{(r-1)}\tau_{jm}^{(r-1)}.\qquad \label{eqn:online_lam_GEM}
    \end{align}
    \item $\hat{q}^{(r)}(z_i) = \mathrm{Categorical}\left(\tau_i^{(r)}\right)$ for all $i\in\mathcal{V}$ where $\tau_i^{(r)} = (\tau_{i1}^{(r)},\dots,\tau_{iL}^{(r)})$, with $\sum_{k\in\mathcal{L}} \tau_{ik}^{(r)} = 1$, satisfies the relation:
    \begin{align} 
    \tau_{ik}^{(r)} &\propto \exp\Bigg\{\delta_z\left[\psi\left(\omega_k^{(r-1)}\right)- \psi\left(\omega_k^{(r-1)} + \nu_k^{(r-1)}\right) + \sum_{\ell=1}^{k-1} \bigg(\psi\left(\nu_\ell^{(r-1)}\right) -\psi\left(\omega_\ell^{(r-1)} + \nu_\ell^{(r-1)}\right)\bigg)\right]\\
    & + \sum_{m=1}^{L}\Bigg[ \sum_{j:(i,j)\in\mathcal{E}} \tau_{jm}^{(r)}\bigg(1 - \mathbb{I}_{\{k\}}(m)\mathbb{I}_{\{i\}}(j)\bigg)\bigg(x_{ij}(I_r)\left[\psi\left(\alpha_{km}^{(r)}\right) - \log\left(\beta_{km}^{(r)}\right)\right]- \Delta\frac{\alpha_{km}^{(r)}}{\beta_{km}^{(r)}}\bigg)\\
    & + \sum_{j':(j',i)\in\mathcal{E}} \tau_{j'm}^{(r)}\bigg(1 - \mathbb{I}_{\{k\}}(m)\mathbb{I}_{\{i\}}(j')\bigg)\bigg(x_{j'i}(I_r)\left[\psi\left(\alpha_{mk}^{(r)}\right) - \log\left(\beta_{mk}^{(r)}\right)\right]- \Delta\frac{\alpha_{mk}^{(r)}}{\beta_{mk}^{(r)}}\bigg)\Bigg] \\
    & + x_{ii}(I_r)\left[\psi\left(\alpha_{kk}^{(r)}\right) - \log\left(\beta_{kk}^{(r)}\right)\right] - \Delta\frac{\alpha_{kk}^{(r)}}{\beta_{kk}^{(r)}}\Bigg\}. \qquad \label{eqn:tau_fp_GEM}\end{align}
    \item $\hat{q}^{(r)}(u_i) = \mathrm{Beta}\left(\omega^{(r)}_i,\nu^{(r)}_i\right)$ for all $i \in \{1,\dots,L\}$ where $\omega^{(r)}_i$ and $\nu^{(r)}_i$ are defined as: 
    \begin{align}
    \omega^{(r)}_i &= \delta_u\left(\omega^{(r-1)}_i - 1\right) + \delta_z\sum_{j\in\mathcal{V}}\tau_{ji}^{(r)} + 1, \\
    \nu^{(r)}_i &= \delta_u\left(\nu^{(r-1)}_i - 1\right) + \delta_z\sum_{j\in\mathcal{V}}\sum_{k=i+1}\tau_{jk}^{(r)} + 1.
    \label{eqn:online_pi_GEM}
\end{align}
\end{enumerate}

A subtle problem arises with the updates in \eqref{eqn:online_lam_GEM}. In the case that the truncation parameter, $L$, is larger than the true number of groups, the update procedure will allocate some groups zero nodes. If $\ell \in \{1,\cdots,L\}$ is one such group, then for all $i\in\mathcal{V}$ $\tau_{i\ell}^{(r)} = 0$ for every $r$ after the algorithm has converged, up until a change occurs. For $\delta_\lambda \in (0,1)$, it follows from the rate update in \eqref{eqn:online_lam_GEM} that $\lim_{r\to\infty}\beta_{\ell m}^{(r)} = \lim_{r\to\infty} \beta_{m \ell}^{(r)} = 0$ for all $m \in \{1,\cdots,L\}$, and thus that the posterior mean of $\lambda_{\ell m}$ and $\lambda_{m \ell}$ diverges. This divergence causes problems in \eqref{eqn:tau_fp_GEM} where this mean appears. To circumvent this problem, we must replace $\delta_\lambda$ with the set $\{\delta^{km}_\lambda\}_{k,m=1}^L$. That is, we introduce a specific forgetting factor for each group-to-group rate. For each, $k,m\in\{1,\dots,L\}$, we initially set $\delta^{km}_L \equiv \delta_\lambda$, but monitor the sum $\sum_{(i,j)\in\mathcal{E}} \tau_{ik}^{(r-1)}\tau_{jm}^{(r-1)}$ with increasing $r$. Then, for all $k,m\in \{1,\dots,L\}$ such that $\sum_{(i,j)\in\mathcal{E}} \tau_{ik}^{(r-1)}\tau_{jm}^{(r-1)} < \epsilon$, we set $\delta_{km} = 1$, where $\epsilon$ is some threshold. This intervention prevents the exponential decay of $\beta_{km}^{(r)}$ to 0 as $r\to\infty$. Once the threshold is exceeded, the relevant BFFs are returned to $\delta_\lambda$. Experimentation found that $0.1$ is a good choice for $\epsilon$. 

\begin{figure}
    \centering
    \includegraphics[width=0.4\textwidth]{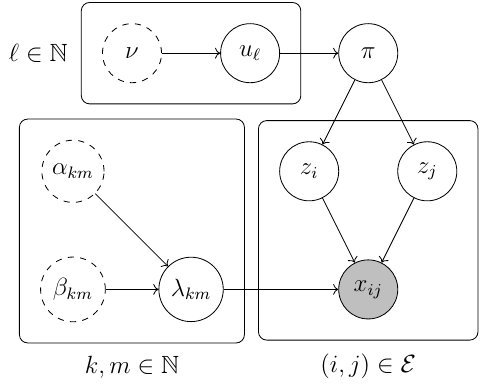}
    \label{fig:dag_model_GEM}
    \caption{A directed acyclic graph of the model given by  \eqref{eqn:poisson_GEM}-\eqref{eqn:u_GEM}.}
\end{figure}

\section{Simulation studies}
\label{sec:sims}

We evaluate the online changepoint detection algorithm of Section \ref{sec:online_VB} using simulated data. \textcolor{red}{Five} simulation studies are conducted. In Section \ref{subsec:sim1}, we examine membership and rate recovery in the case that $\boldsymbol{A}$ and $K$ are known when a varying proportion of nodes swap between the two groups. In Section \ref{subsec:sim2}, membership recovery is investigated when $\boldsymbol{A}$ is known, but $K$ changes throughout the observation window. In Section \ref{subsec:sim3}, we consider two the effect of a decreasing lag between two instantaneous changes to $\lambda$ when both $\boldsymbol{A}$ and $K$ are known. \textcolor{red}{In} Section \ref{subsec:sim4}, the effect of increasing sparsity in case of known $K$ but unknown $\boldsymbol{A}$ is examined. \textcolor{red}{Finally, in Section \ref{subsec:sim5}, we examine the recovery of group memberships with varying latent rates.}

Unless otherwise stated, we consider a network with $N=500$ nodes and $K=2$ latent groups, initialised with $\Pi = (0.6, 0.4)$. For convenience, we define the intra-inter-group rate matrix
\begin{equation}
\lambda_0 = \begin{pmatrix} 2 & 1 \\ 0.3 & 8 \end{pmatrix}. 
\label{eq:lambda0}
\end{equation}
The update interval is set to $\Delta = 0.1$ time units throughout, and every experiment is repeated 50 times, with the results averaged. We cycle 3 times over the CAVI and fixed point equations. All hyperparameters are initialised as 1, except for $\gamma$ and $\xi$, which are initialised uniformly on $[0.95, 1.05]^K$ and $[0.95, 1.05]^{K'}$, respectively. For all $i,j\in\mathcal{V}$, $\sigma_{ij}$ is initialised as $1/2$, and $\tau_i$ as $\boldsymbol{1}_K / K$, when $K$ is known, and as $ \boldsymbol{1}_L/L$ when we infer the number of groups at a truncation level of $L$. Unless otherwise stated, $\textcolor{red}{W_{JS}}$ is set to be 2 and $\textcolor{red}{W_{KL}}, B_1$ and $B_2$ to 10.

\subsection{Latent group membership recovery}
\label{subsec:sim1}
The first experiment takes a fully-connected network with rate matrix $\lambda_0$ as given in \eqref{eq:lambda0}. At time $t=3$, $P\%$ of nodes from group 1 swap to group 2, with $P \in \{1, 10, 25, 50, 75, 95\}$. To evaluate latent membership recovery, the inferred group memberships at time $t$ are compared to the true memberships using the adjusted rand index \citep[ARI,\ ][]{Hubert1985}. The ARI takes values between $-1$ and $1$, with a value of $1$ indicating perfect agreement (up to label switching), $0$ random agreement, and $-1$ complete disagreement. By computing the ARI at each update, rather than taking an average over $[0,t]$, we can examine the smoothness of recovery, and the reaction of the algorithm to changepoints. 

\begin{figure*}
     \centering
     \begin{subfigure}[t]{\textwidth}
         \centering
         \includegraphics[width=0.85\textwidth]{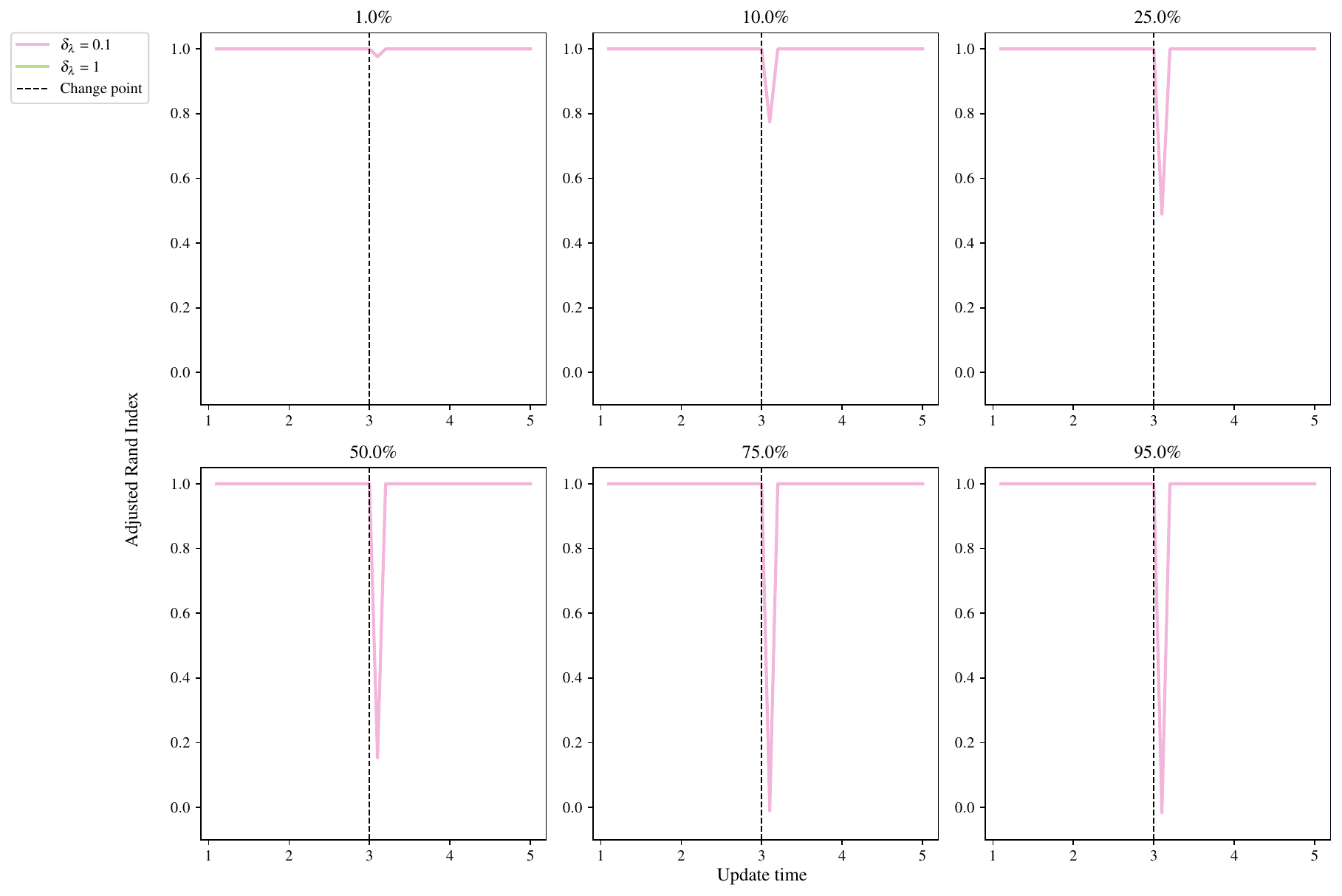}
         \caption{Mean ARI of the repetitions against update time for a varying proportion of group 1 nodes changing to group 2. % at the time indicated by the black, dashed vertical line. %Each panel is for a different percentage of nodes that swap. 
         %The green line is for no BFF and the pink for a BFF of 0.1.
         }
         \label{fig:group_percent_ARI}
     \end{subfigure}
     \vfill
     \begin{subfigure}[b]{\textwidth}
         \centering
         \includegraphics[width=0.85\textwidth]{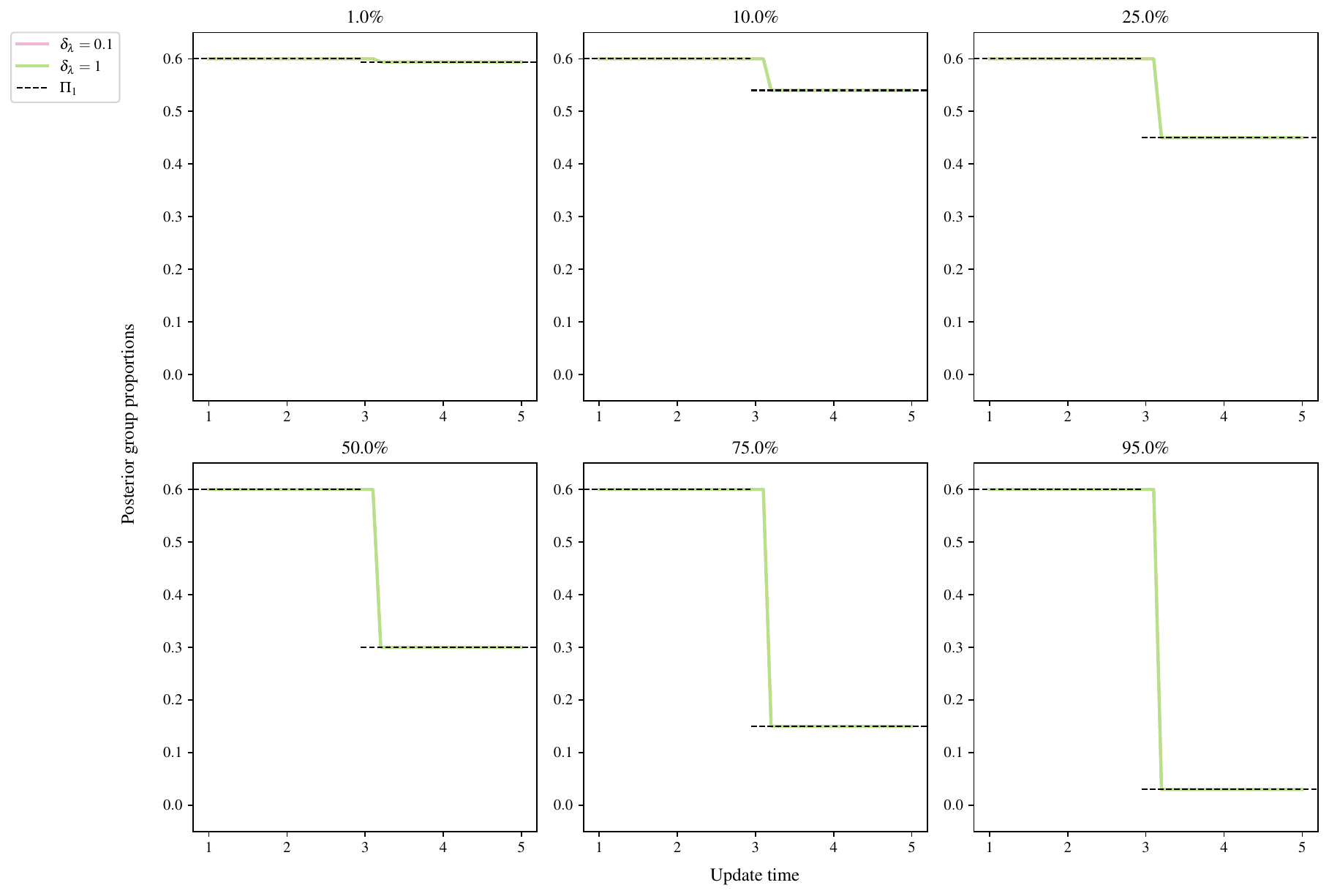}
         \caption{Mean over the repetitions of the proportion of nodes in group 1 with time. %The thick green line is for no BFF and the pink for a BFF of 0.1.
         The results obtained when using no BFF (green line) and a BFF of 0.1 (pink line) are shown together on each panel.
         The true proportion of group 1 either side of the changepoint is plotted as black, horizontal lines..}
         \label{fig:group_percent_prop}
     \end{subfigure}
    \caption{%The variation in d
    Detection of group membership changes for a varying percentage of nodes swapping from group 1 to group 2 at $t=3$. The panel titles give the percentage of nodes that swap from group 1 to 2.}
    \label{fig:group_percent_swap}
\end{figure*}

Figure \ref{fig:group_percent_ARI} shows the inference procedure is stable across all values of $P$, where the plots begin from $B_1$ \textcolor{red}{time steps}. The ARI is steady before and after the change except for the \textcolor{red}{time step} immediately after the change. Figure \ref{fig:group_percent_prop} shows that the algorithm quickly converges to the true proportions, both with and without a BFF. \textcolor{red}{Note that in both figures, the pink and green lines are indistinguishable from one another.} %An arbitrary difference between the labelling in the case of BFF and no BFF causes the opposite the lines. 
% An arbitrary change in the attributed group labels causes each line (BFF and no BFF) to switch between groups at the changepoint in the case of $P\in\{25,50,75,95\}$.
% While there is no appreciable effect on the ARI by the inclusion or exclusion of a BFF, it does play an important role in the estimation of the parameters. 
% In Figure \ref{fig:group_percent_lambda}, without a BFF, the posterior mean is significantly slower to converge to the true generating value than with a BFF. However, an increase in uncertainty is introduced by the BFF, as seen by the larger simulation intervals in Figure \ref{fig:group_percent_lambda}.

% \begin{figure}[t]
%      \centering
%      \includegraphics[width=0.9\textwidth]{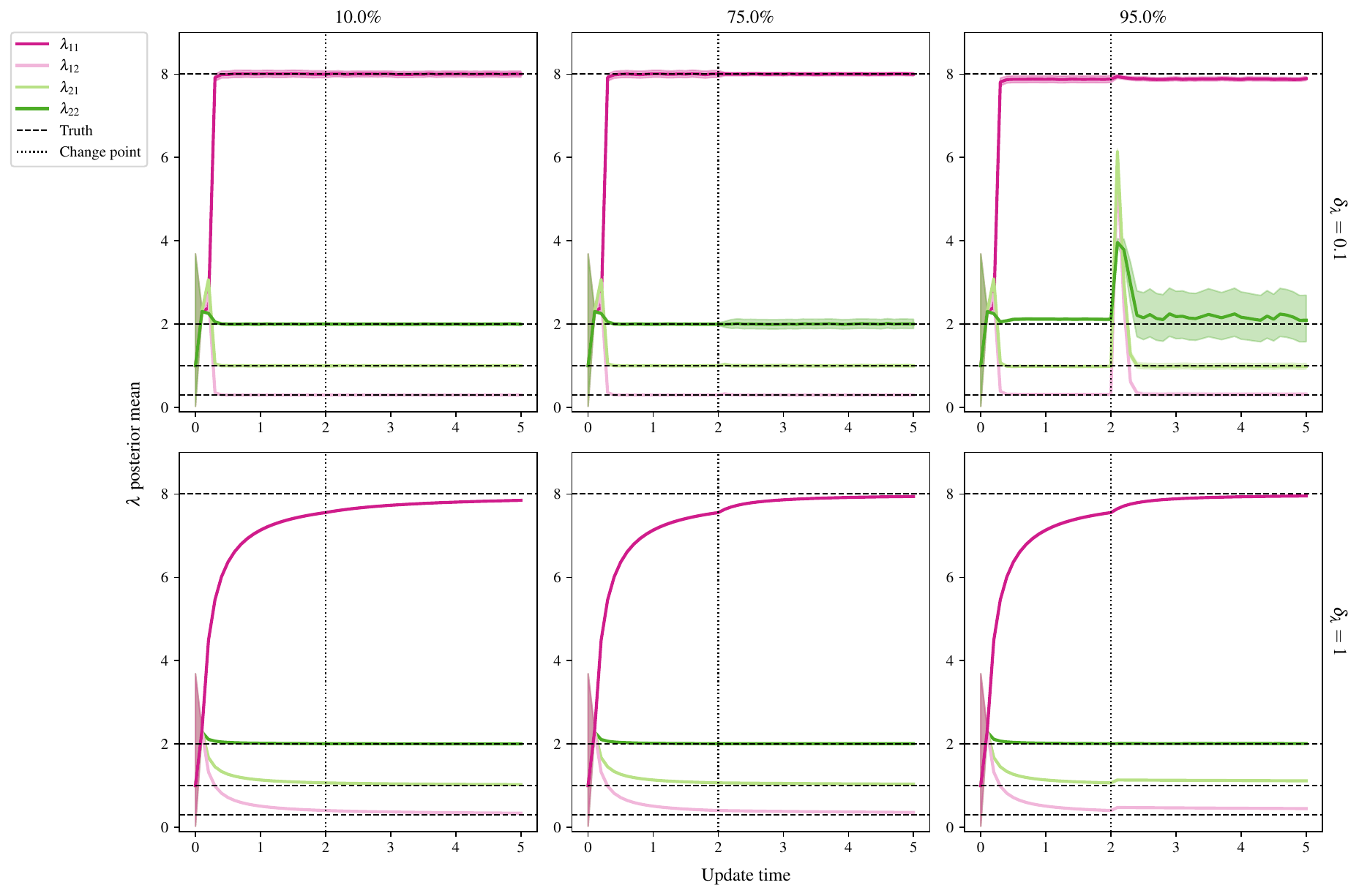}
%      \caption{Posterior means and 95\% confidence interval for components $\lambda$ with update time, where $P$\% of nodes from group 1 swap to group 2, $P\in\{10,75,95\}$. %Intervals are obtained by averaging the 95\% bounds of each run. 
%      %The true values are black, dashed horizontal lines. The change occurs at the black, dashed vertical line. %Each column is a different percentage of nodes that swap from group 1 to 2. %, with the title giving the percentage. 
%      %Row 1 is for a BFF of 0.1, and row 2 for no BFF.
%      }
%      \label{fig:group_percent_lambda}
% \end{figure}

\subsection{Number of groups recovery}
\label{subsec:sim2}
We consider a fully connected network with two group membership changes: all nodes merging into group 1 at $t=2.5$, followed by the creation of a new group at $t=3.5$. \textcolor{red}{We use the GEM-BHPP model for this simulation.} Specifically, at $t=3.5$, $P\%$ of the nodes in group 1 remain, while the remainder create group 2. We again consider $P \in \{1, 10, 25, 50, 75, 95\}$, with $\lambda_0$ as the rate matrix, and ARI as our performance metric. We set $\textcolor{red}{W_{JS}} = 0$ here, which corresponds to taking the argmax of each $\tau_i$. 

Figure \ref{fig:group_percent_ARI_create} shows that the inference correctly groups the nodes in all cases pre-merger and between the merger and creation, both with and without a BFF. Using a BFF, the algorithm is seen to maintain a high ARI after the creation of a new group, in all cases except when $P=75$. The application without the BFF also fails in this case, but additionally it fails for $P=50$, and is slower to converge to the new groups when $P=25$ and 10. 

For cases where 50\% of nodes or less remain in group 1, the algorithm converges to the correct proportions, but with label-switching. This is observed in Figure \ref{fig:group_percent_lambda_create}, where the rates are seen to swap in the 10\% panel for the case of a BFF.

\begin{figure*}[p]
    \centering
    \includegraphics[width=0.9\textwidth]{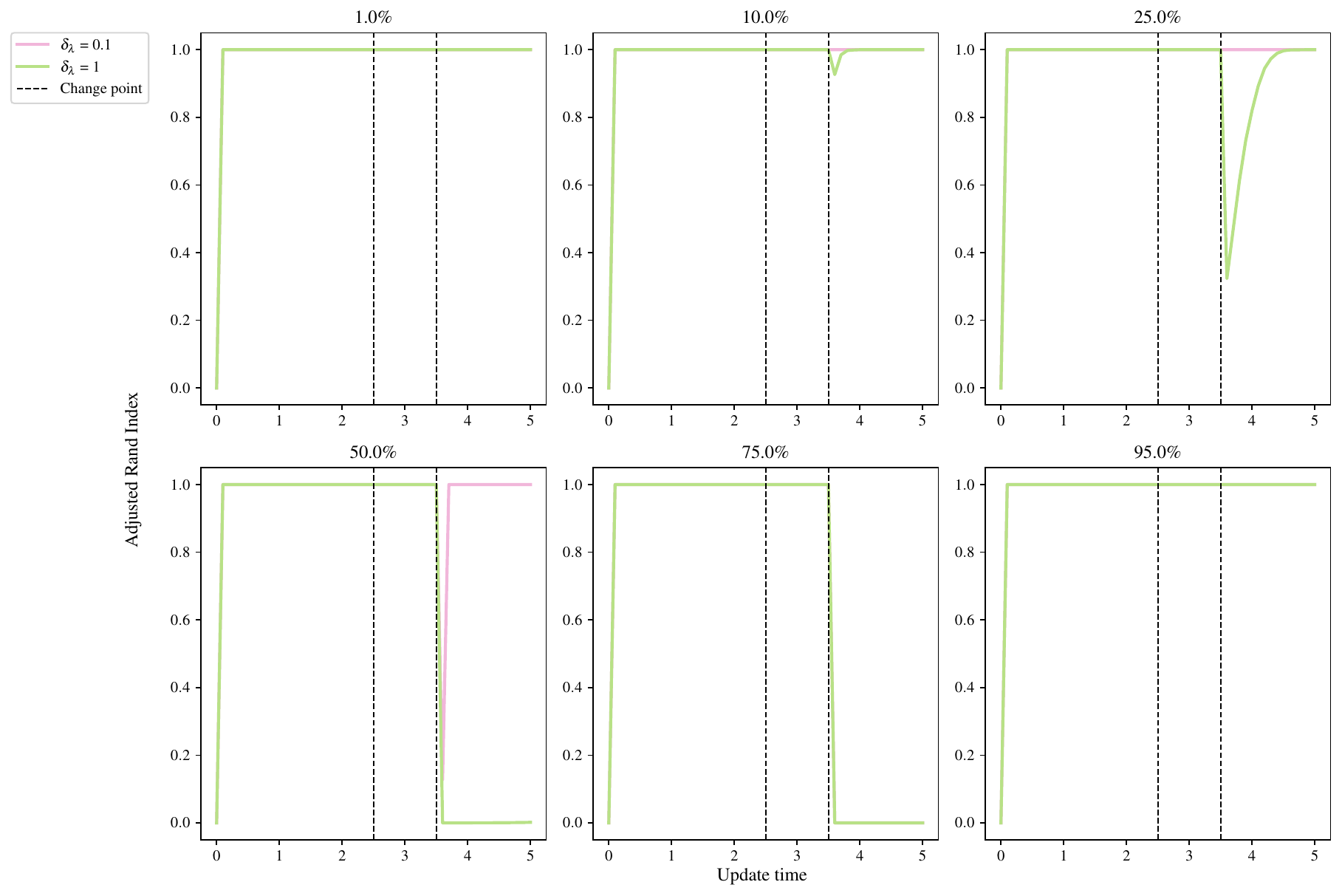}
    \caption{Mean ARI 
    against update time for the merger of group 2 into 1 at $t=2.5$, and the creation of group 2 at $t=3.5$, with the panel titles giving the percentage of nodes remaining in group 1 after $t=3.5$. The black, dashed vertical lines mark the changepoints. At $t=2.5$, all nodes in group 2 change to group 1, and at $t=3.5$, $P\%$ of group 1 nodes change to group 
     2, $P\in\{1,10,25,50,75,95\}$.}
     \label{fig:group_percent_ARI_create}

     \centering
     \includegraphics[width=0.9\textwidth]{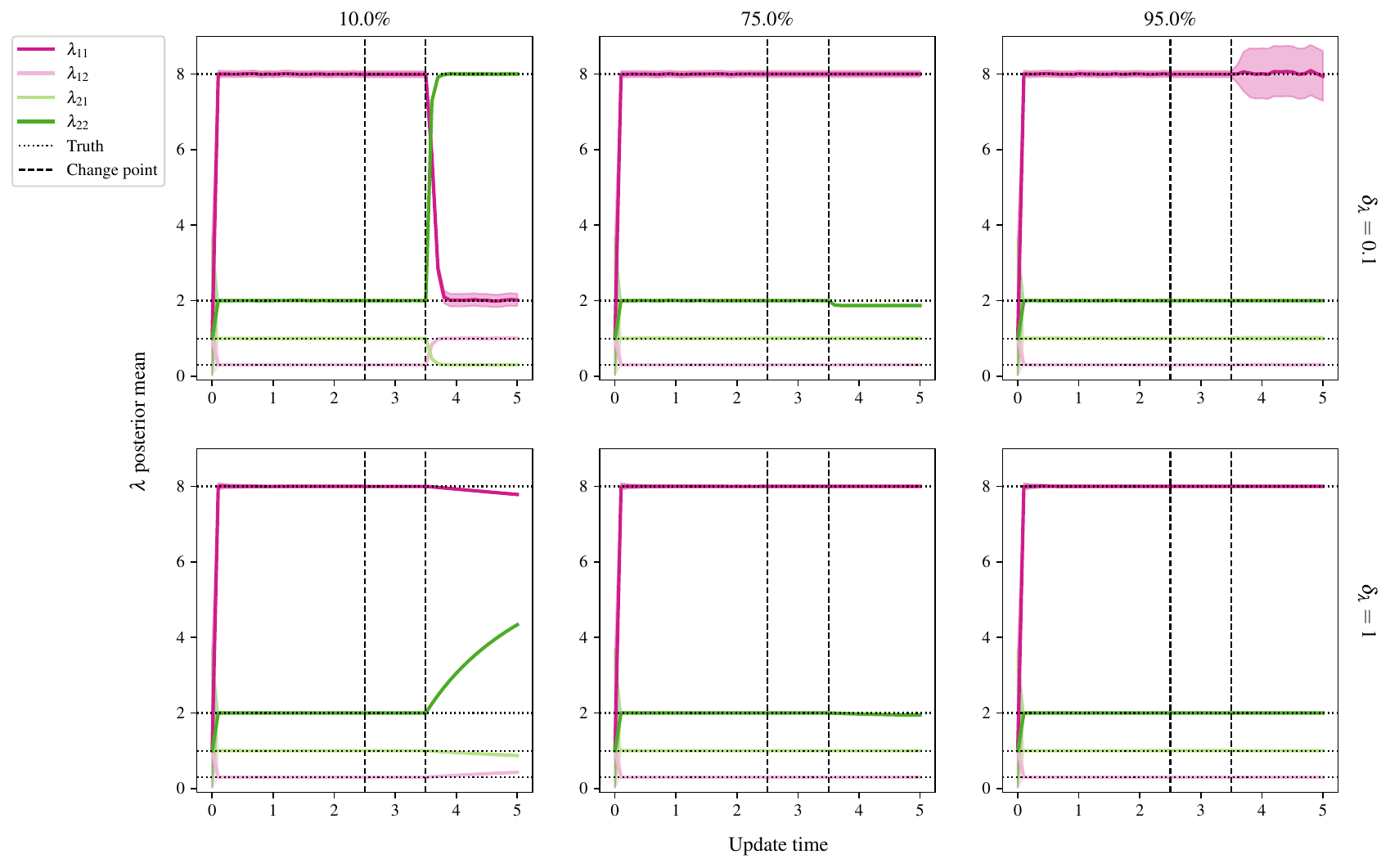}
     \caption{Posterior means and 95\% \textcolor{red}{simulation} interval for components $\lambda$ with update time. The black, dashed vertical lines mark the changepoints. At $t=2.5$, all nodes in group 2 change to group 1, and at $t=3.5$, $P\%$ of group 1 nodes change to group 2, $P\in\{10,75,95\}$.}
     \label{fig:group_percent_lambda_create}
\end{figure*}
 % \begin{subfigure}[b]{0.95\textwidth}
 %     \centering
 %     \includegraphics[width=0.9\textwidth]{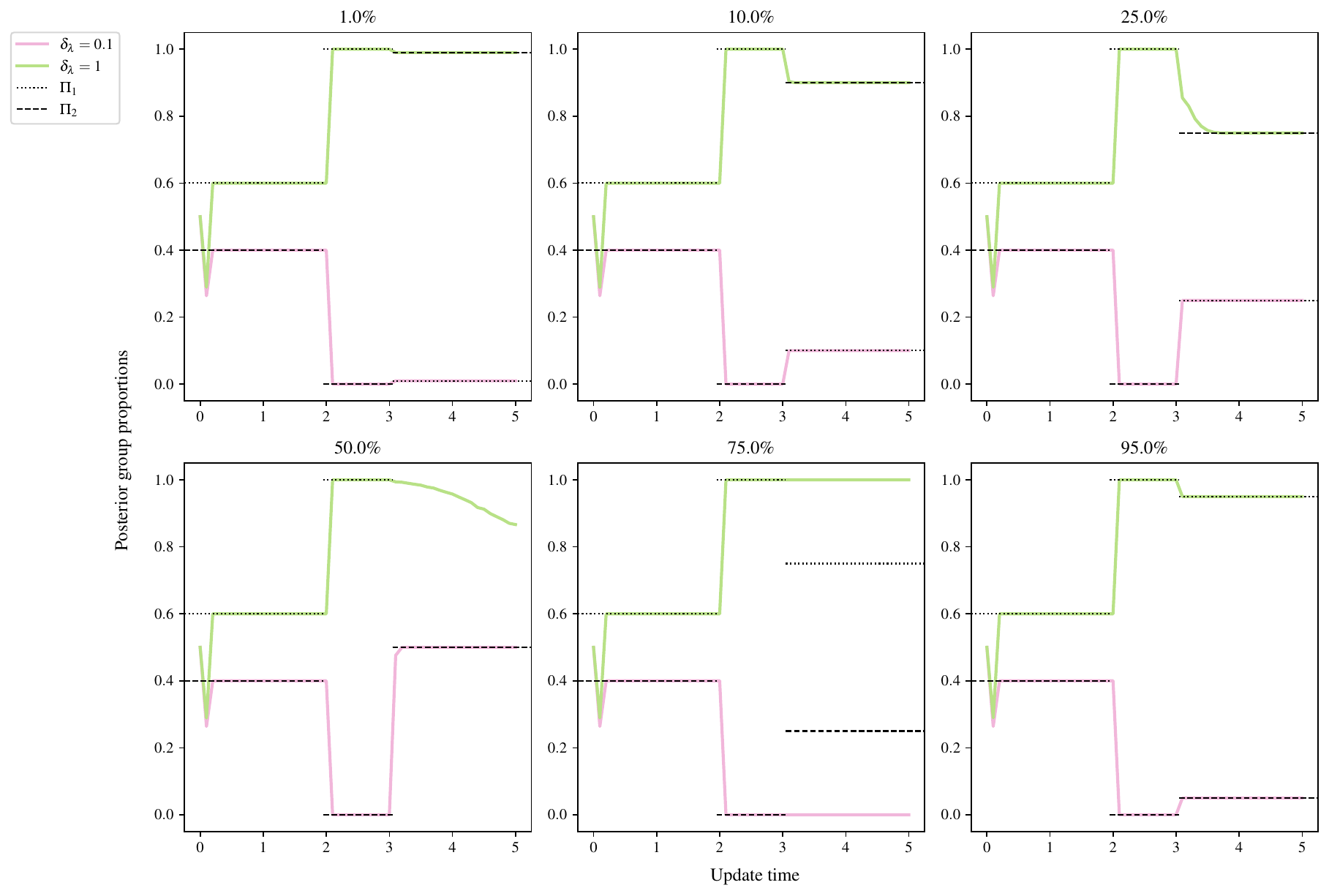}
 %     \caption{Mean over 50 repetitions of the proportion of nodes in group 1. % with time. %The green line is for no BFF and the pink for a BFF of 0.1. 
 %     The true proportions of group 1 and 2 in the intervals around the changes points are plotted as black, horizontal lines (dotted for group 1 and dashed for group 2).}
 %     \label{fig:group_percent_prop_create}
 % \end{subfigure}

% \begin{figure}[t]
%      \centering
%      \includegraphics[width=0.9\textwidth]{figures/sim_study_3/posterior_means.pdf}
%      \caption{Posterior means and 95\% confidence interval for components $\lambda$ with update time. The black, dashed vertical lines mark the changepoints. At $t=2$, all nodes in group 2 change to group 1, and at $t=3$, $P\%$ of group 1 nodes change to group 2, $P\in\{10,75,95\}$. %Intervals are obtained by averaging the 95\% bounds of each run. 
%      %The true values are black, dashed horizontal lines. The change occurs at the black, dashed vertical line. %Each column is a different percentage of nodes that swap from group 1 to 2. %, with the title giving the percentage. 
%      %Row 1 is for a BFF of 0.1, and row 2 for no BFF.
%      }
% \label{fig:group_percent_lambda_create}
% \end{figure}

\subsection{Detection of changes to the matrix of rates}
\label{subsec:sim3}

On a fully connected network, we examine sequential changes to the rate matrix with decreasing time between the changes. The rate matrix maps as
\begin{align}
\lambda = \begin{pmatrix} 2 & 1 \\ 0.3 & 8 \end{pmatrix}
\quad \mapsto\quad 
\lambda' = \begin{pmatrix} 5 & 1 \\ 0.3 & 8 \end{pmatrix}
\quad \mapsto \quad 
\lambda'' = \begin{pmatrix} 3 & 1 \\ 0.3 & 8 \end{pmatrix},
\end{align}
with the first change at $t=3$ and the second at $t=3 + 0.1 M$. For each run, $M$ took a value in $\{1,2,3,4,5,10\}$. Here we do not reset the stream after a flagged change to the latent rates as we control their relative magnitude and wish to examine the performance of the algorithm with small latency between changes.

We evaluate the detection of latent changes using the proportion of changepoints correctly detected (CCD), and the proportion of detections that are not false (DNF), as in \cite{bodenham-2017}. Note that we aggregate across all group-to-group rates. Specifically, if there are $C$ changepoints, and we detect $D$, $T$ of which are correct, analogous to recall and precision, the authors define:
\begin{enumerate}
    \item $\mathrm{CCD} = T / C$, the proportion of changepoints correctly detected,
    \item $\mathrm{DNF} = T / D$, the proportion of detections that are not false.
\end{enumerate}
As noted by \cite{bodenham-2017}, CCD and DNF are preferred to their more intuitive counterparts (proportion of missed changepoints and proportion of false detections, respectively) as for the CCD and DNF, values closer to 1 indicate better performance than those closer to 0. Furthermore, these metrics are preferable over \textcolor{red}{the popular }average run length metrics \textcolor{red}{\cite[$\mathrm{ARL}0$ and $\mathrm{ARL}1$;][]{page1954}} as we want to capture the number of changes detected and missed.

Suppose $t^{\ast}_{n}$ and $t^{\ast}_{n+1}$ are consecutive true changepoints for $\lambda_{km}$, and let the most recently flagged changepoint by the algorithm for $\lambda_{km}$ be $t^{\prime}_{\ell} < t^{\ast}_{n}$. If the next flag for $\lambda_{km}$ occurs at $t^{\prime}_{\ell+1} > t^{\ast}_{n+1}$, that is $t^{\ast}_{n}$ is missed, then $t^{\prime}_{m+1}$ is classified as a correct detection of $t^{\ast}_{n+1}$. This is common practice in the literature. This approach is adopted by \cite{bodenham-2017}, whereas other authors implement a softmax rule for classification \cite[see, for example,][]{alanqary2021change, Yamanishi2002, bodenham-2017}, adjusting for when multiple changes are flagged in the same window. 

\begin{figure*}[p]
    \centering
    \includegraphics[width=0.8\textwidth]{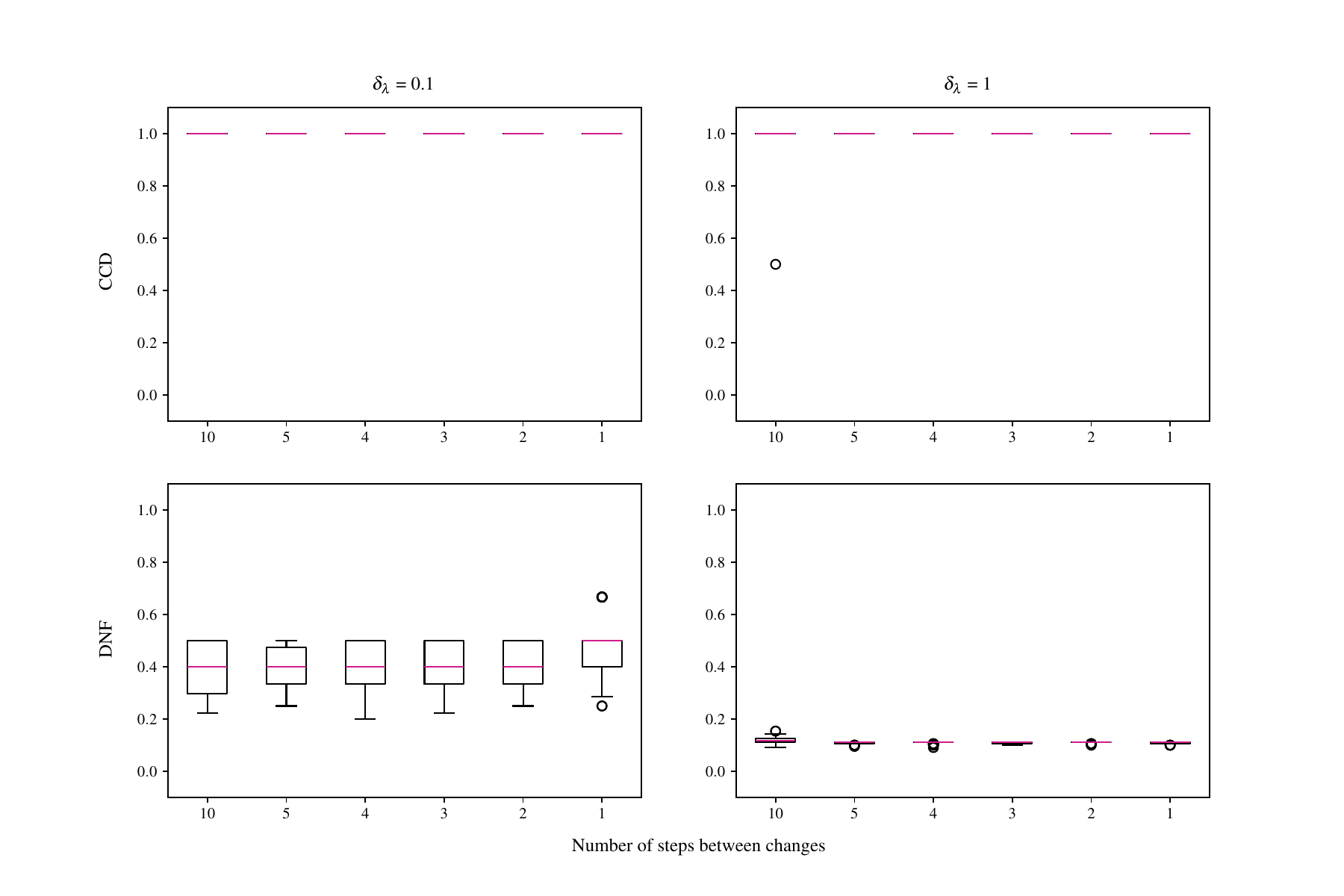}
    \caption{Boxplots of CCD and DNF over 50 runs for no BFF and a BFF of $\delta=0.1$. Each simulation has one change at $t=3$ and another a $t=3 + 0.1M$, where $M$ is on the horizontal axis.}
    \label{fig:CCD-DNF}

     \centering
     \includegraphics[width=0.9\textwidth]{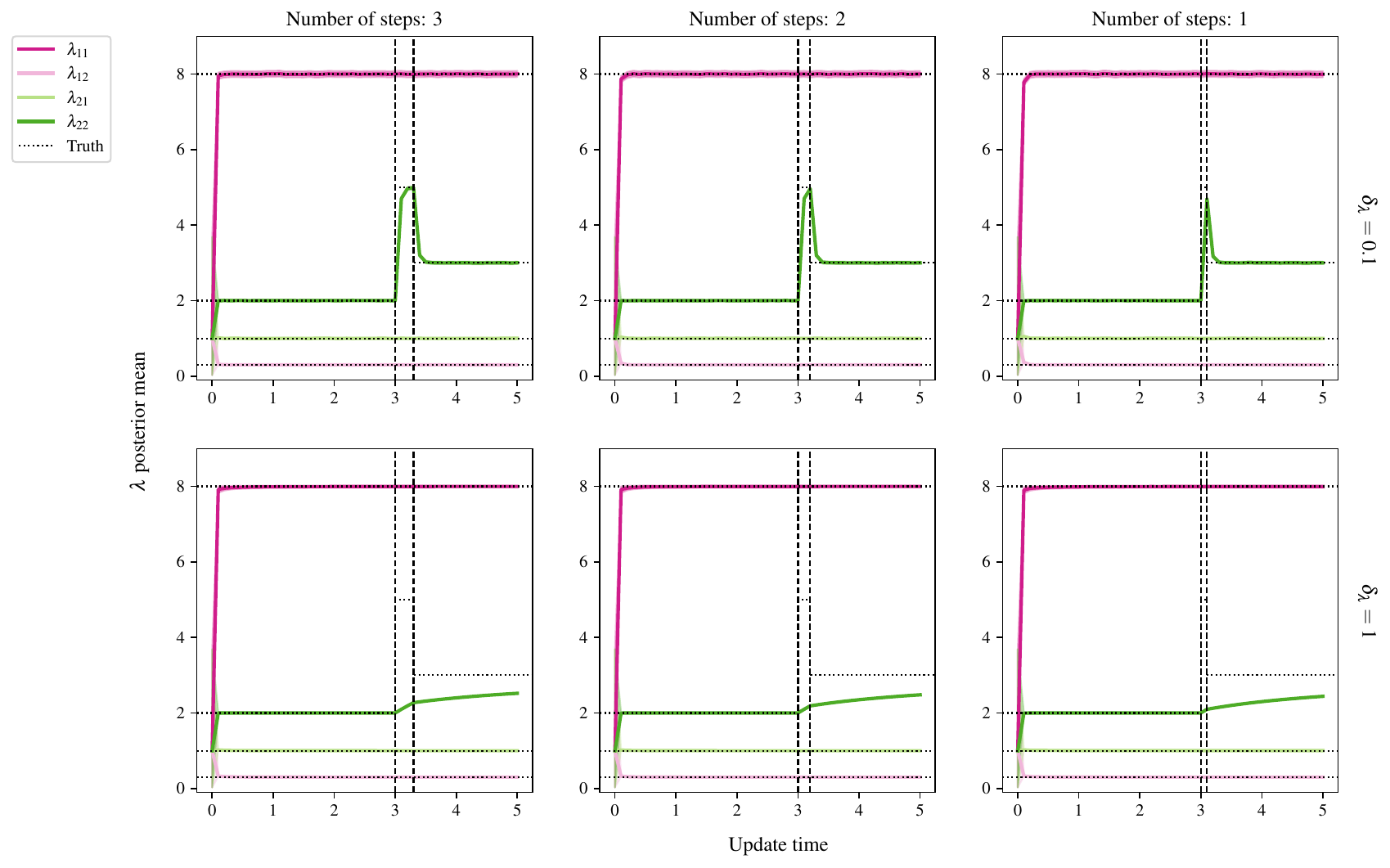}
     \caption{
     \textcolor{red}{Posterior means and 95\% \textcolor{red}{simulation} interval for components $\lambda$ with update time. The black, dashed vertical lines mark the changepoints. Each simulation has one change at $t=3$ and another a $t=3 + 0.1M$, where $M$ is the number of \textcolor{red}{time steps} in the column caption.}}
     \label{fig:rate_change_change_gap}
\end{figure*}

In Figure \ref{fig:CCD-DNF}, we see the CCD and DNF for each run. The CCD is consistently high for both a BFF and no BFF. This demonstrates that the algorithm consistently flags true changes with and without a BFF. The role of the BFF is shown in the DNF, where we see BFF yielding higher values. Without a BFF, multiple changes are flagged that did not occur.

\subsection{Effect of network sparsity}
\label{subsec:sim4}
We examine the effect of decreasing network sparsity on changepoint detection. We simulate a network with $K'=1$ latent connection groups, letting $\rho$, the group connection probability, vary over $\{0.01, 0.025, 0.05, 0.1, 0.25, 0.5\}$. For the point process groups, we retain the same $\Pi$ as before, and again, we set $\lambda = \lambda_0$. We simulate on $[0,25]$, with $\Delta = 0.1$, and at $t=10$, $25\%$ of nodes from group 1 swap to group 2. The inference procedure is run on the simulated network twice, once where the network is assumed fully-connected, and a second time \textcolor{red}{using the SBM-BHPP}. Both procedures are run with $\delta = 0.1$.

%The effect of the assumption of edges existing where they do not on the rate estimates is seen in 
Figure \ref{fig:post_means_infer_adj} demonstrates the effect of incorrectly assuming that the graph is fully connected: the rates are significantly underestimated. On the other hand, when $\boldsymbol{A}$ is inferred, the posterior means are much closer to the true values, even in the case of $1\%$ density. Furthermore, in Figure \ref{fig:ARI_infer_adj} the mean ARI of the membership recovery is seen to be higher for the inferred adjacency matrix. 

\begin{figure*}[p]
    \centering
    \includegraphics[width=0.9\textwidth]{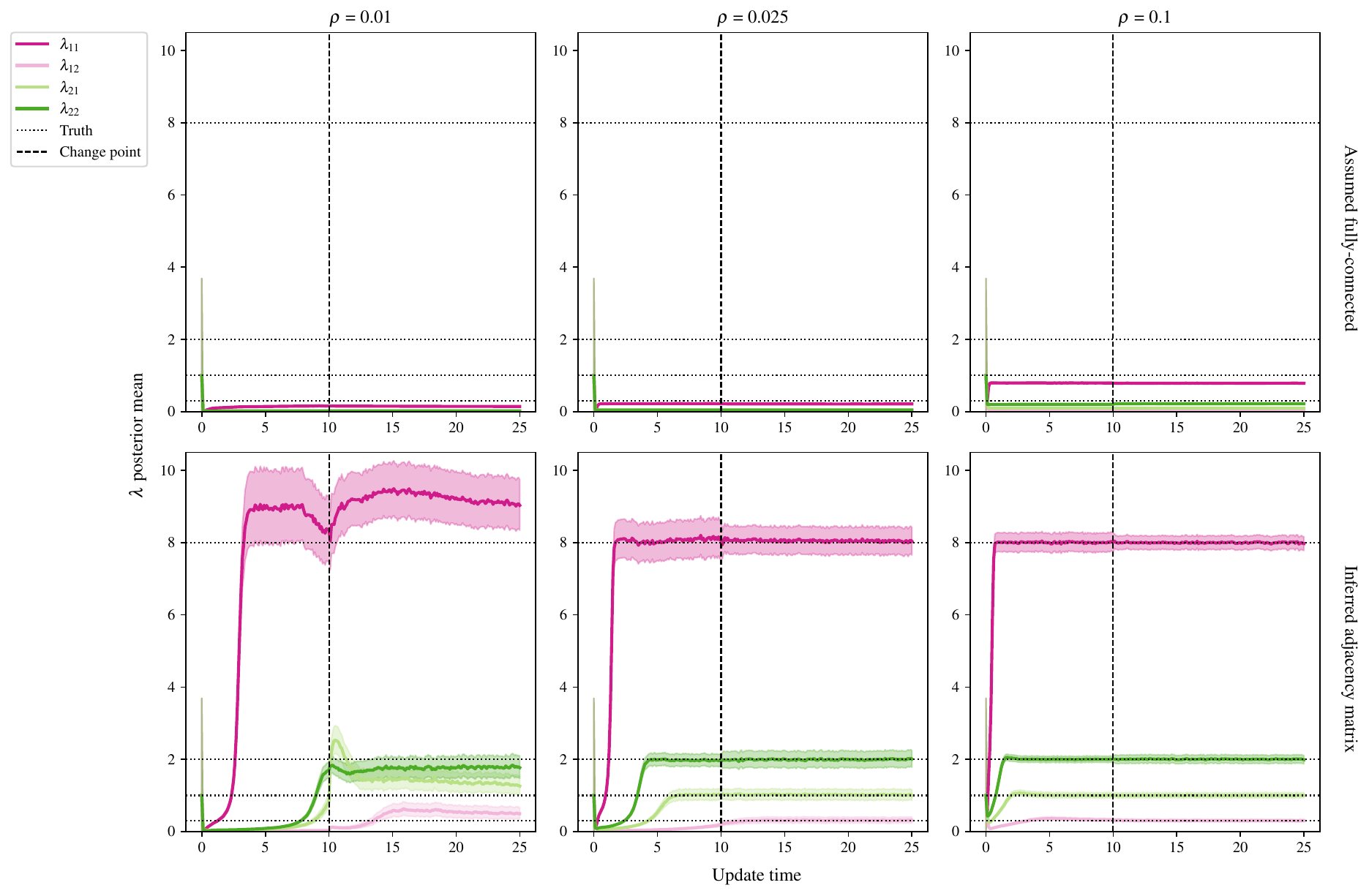}
    \caption{Posterior means and 95\% \textcolor{red}{simulation} interval for components $\lambda$ with update time. The black, dashed vertical line marks the changepoint, at which $25\%$ of nodes in group 1 swap to group 2. %Intervals are obtained by averaging the 95\% bounds of each run.
    %The true values are black, dashed horizontal lines. %Each column is for a different value of $\rho$. Row 1 is for when the network is assumed fully-connected, and row 2 for when the adjacency matrix is inferred.
    }
    \label{fig:post_means_infer_adj}
%\end{figure}

%\begin{figure}[p]
    \centering
    \includegraphics[width=.9\textwidth]{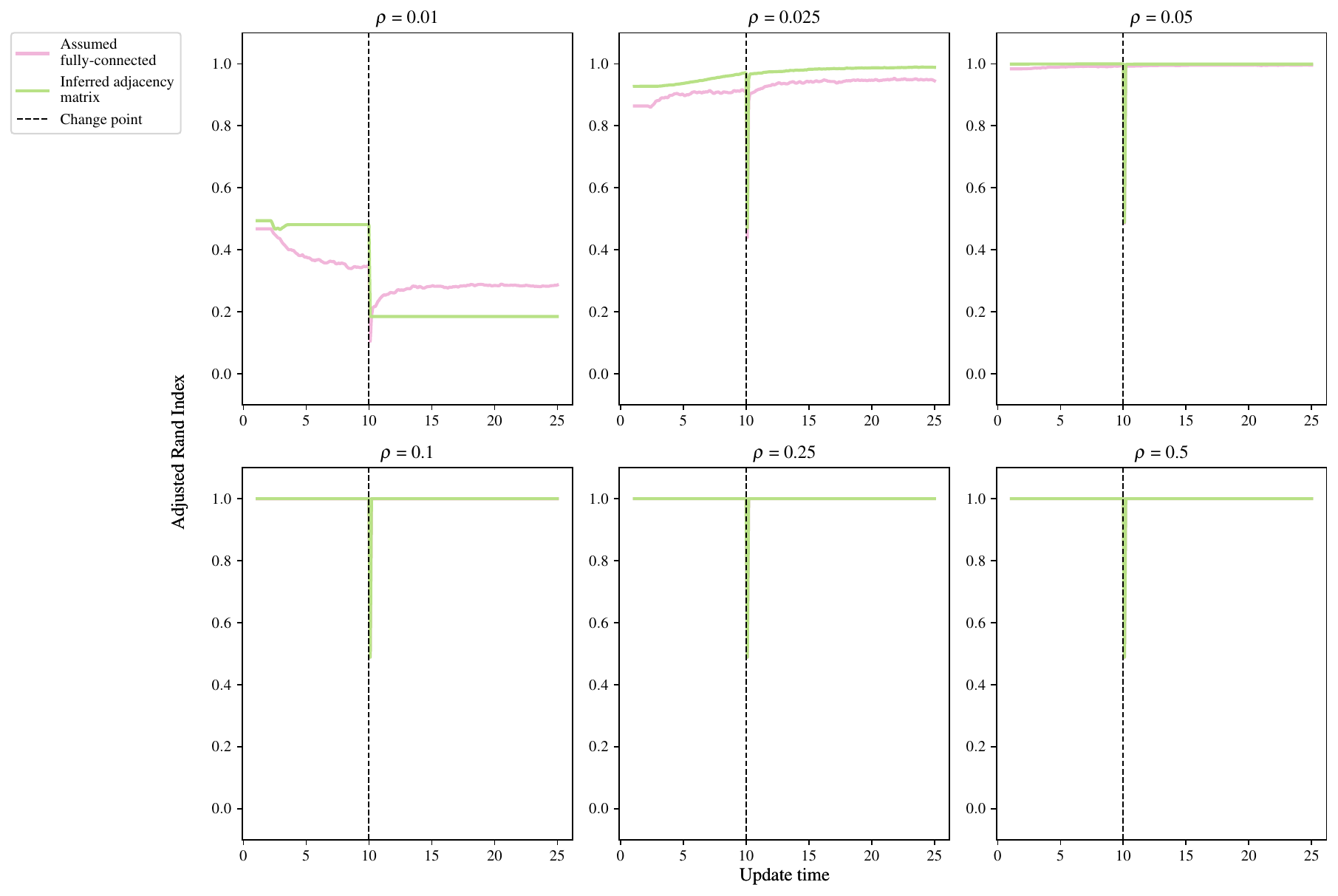}
    \caption{Mean ARI of the repetitions against update time. The black, dashed vertical line marks the changepoint, at which $25\%$ of nodes in group 1 swap to group 2.} %Each panel is for a different value of $\rho$.}
    \label{fig:ARI_infer_adj}
\end{figure*}

\subsection{\textcolor{red}{Detection of group changes with varying rates}}
\label{subsec:sim5}

\textcolor{red}{
On a fully connected network, we examine the recovery of group memberships with $K=3$ when the latent rates also vary. We simulate a network where the latent rates vary as follows:}

{\small\begin{align}
        \lambda(t) = \begin{pmatrix}
            2 s(t) + 5 & 0.1 s(t) + 0.2 & 0.05 s(t) + 0.1 \\
            0.2 c(t) + 1 & c(t) + 2 &  0.01s(t) + 0.8 \\
            0.1 c(t) + 0.9 & 0.1 s(t) + 0.5 & 0.01c(t) + 0.03
        \end{pmatrix}, \notag
\end{align}
}

\noindent{where $s(t)=\sin\left(2\pi t/5\right)$ and $c(t)=\cos\left(2\pi t/5\right)$.
In the simulation, we set $\Pi = (0.4, 0.4, 0.2)$ and at $t=3.05$, 25\% of the nodes change from group 1 to either group 2 or 3. We use a BFF of $0.1$ with the standard BHPP model. Figure \ref{fig:post_means_and_ARI} shows the posterior means of the rates and the ARI with update. We see that the rates are well-recovered and the ARI remains high either side of the change. 
\begin{figure*}[!h]
    \centering
    \includegraphics[width=0.9\textwidth]{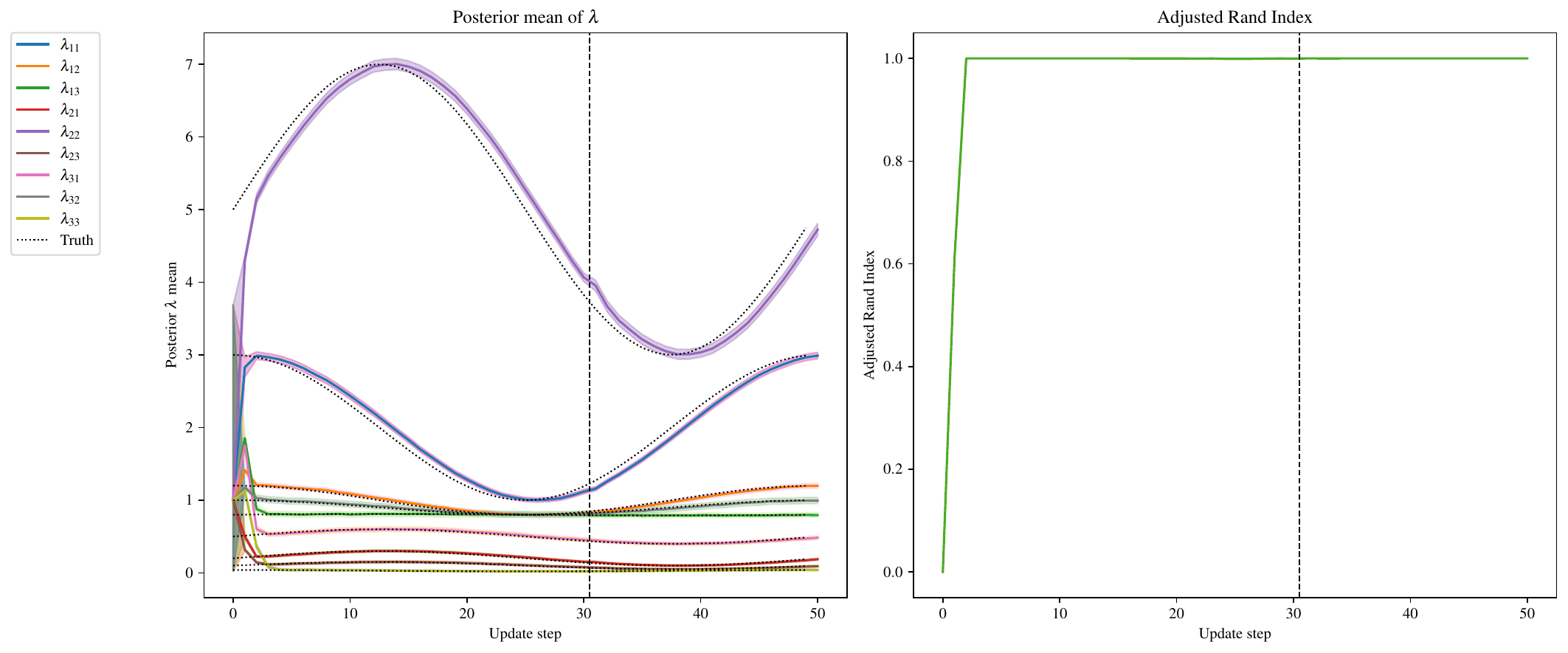}
    \caption{\textcolor{red}{Left-panel: posterior means and 95\% simulation interval for the components $\lambda$ with update time. The black, dotted lines are the true underlying rates. Right-panel: ARI with update time. The black, dashed vertical line marks the changepoint.
    }}
    \label{fig:post_means_and_ARI}
\end{figure*}
}

\section{Application to the Santander Cycles bike-sharing network}
\label{sec:santander}

The proposed online changepoint algorithms were tested on Transport for London (TfL) data from the London Santander Cycles bike-sharing network, which is publicly available online (\href{here}{https://cycling.data.tfl.gov.uk/}, powered by TfL Open Data). Each datum corresponds to a bike hire, and contains the start and end times of the journey, the IDs of the source and destination stations, the journey duration, a bike ID number, and an unique identifier for the journey. Considering the start and end stations as source and destination nodes, and the timestamp of the end of the journey as an arrival time to the directed edge from source to destination, the data forms a network point process. In this study, the data is aggregated into weekly counts to smooth the intensities of the point processes and weekly periodicities. \textcolor{red}{Despite the smoothing, we do not satisfy the homogeneous Poisson assumption, and thus look to infer changes to the latent groups and not to the latent rates. In Section \ref{subsec:sim5}, we demonstrate through simulation that we are able to do this.} We select a subset of the data from 2nd January 2019 until 15th July 2020 as this window contains significant COVID-19 related national events that can be used to check the performance of our algorithm. %Furthermore, stations (nodes) are added to the network over the period for which the data is available,  and we thus selected a window within which the number of stations is constant. 
In this time period, $N=791$ unique nodes are observed within $T=80$ weekly time windows, with updates every $\Delta = 1$ week time steps.

The online VB algorithm for the dynamic BHPP is run for the separate cases of \textcolor{red}{the SBM-BHPP and the GEM-BHPP}.
% an unknown graph structure, and an unknown number of groups. 
The number of groups in the case of an unknown adjacency matrix was set to match the number inferred by the implementation for an unknown number of groups, which was $K=6$. The adjacency matrix in the case of unknown $K$ was set to correspond to a fully connected graph: $\boldsymbol A = \boldsymbol{1}_{791 \times 791}$. 
We consider the task of detecting changes to the latent group structure of the bike sharing network. We run both algorithms with $\textcolor{red}{W_{JS}} = 1.55$, $B_1 = 25$, $B_2 = 10$, and $\kappa=2$, so that changes can be detected only after 
$B_1 + B_2 + \kappa = 37$ weeks of observations. We initialise the hyperparameters as described in Section \ref{sec:sims}.

\begin{figure*}[t]
     \centering     
     \includegraphics[width=0.8\textwidth]{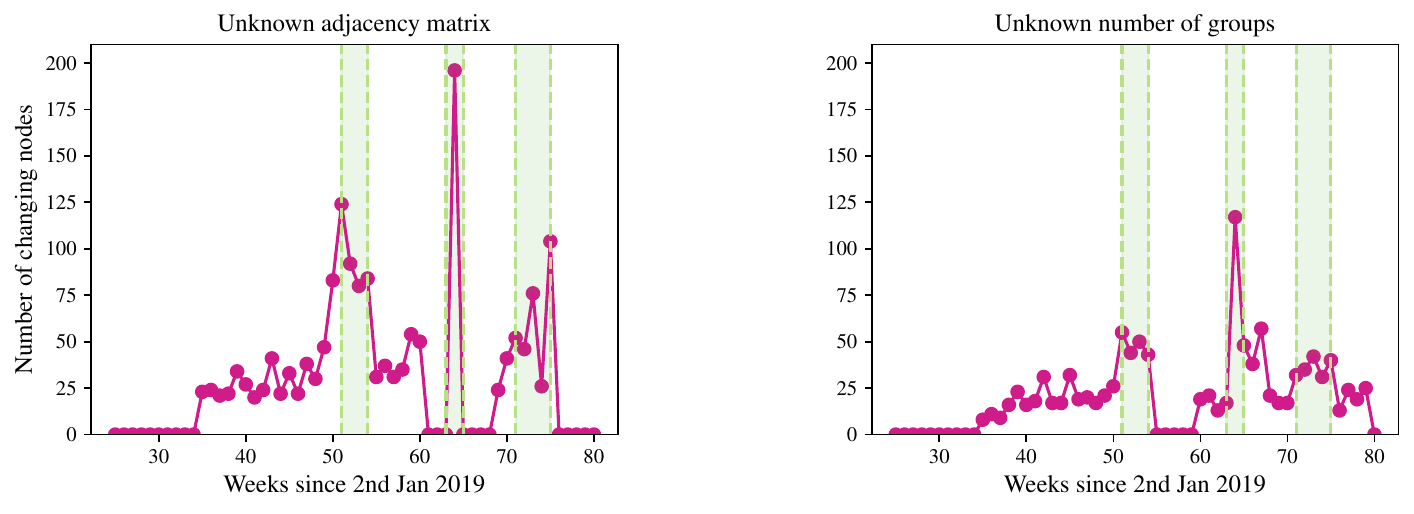}
    \caption{The number of nodes that change memberships at each update point from initialisation. The green regions correspond to the Christmas and New Year period of 2019, the introduction of the first UK COVID-19 lockdown, and the phased easing of these restrictions, respectively.}
    \label{fig:change_nodes_inf_graph}
\end{figure*}

Figure \ref{fig:change_nodes_inf_graph} shows that \textcolor{red}{both} algorithms flag multiple changes at each update, but that there are three regions where the number of flagged changes peaks. From left to right, these green regions correspond to the Christmas and New Year period of 2019, the introduction of the first UK COVID-19 lockdown on 19/03/2020, and the subsequent phased easing of restrictions from 01/06/2020. The algorithm reacts to these events, which are likely to cause changes to the network. 

\begin{figure*}[t]
    \centering
    \includegraphics[width=0.8\textwidth]{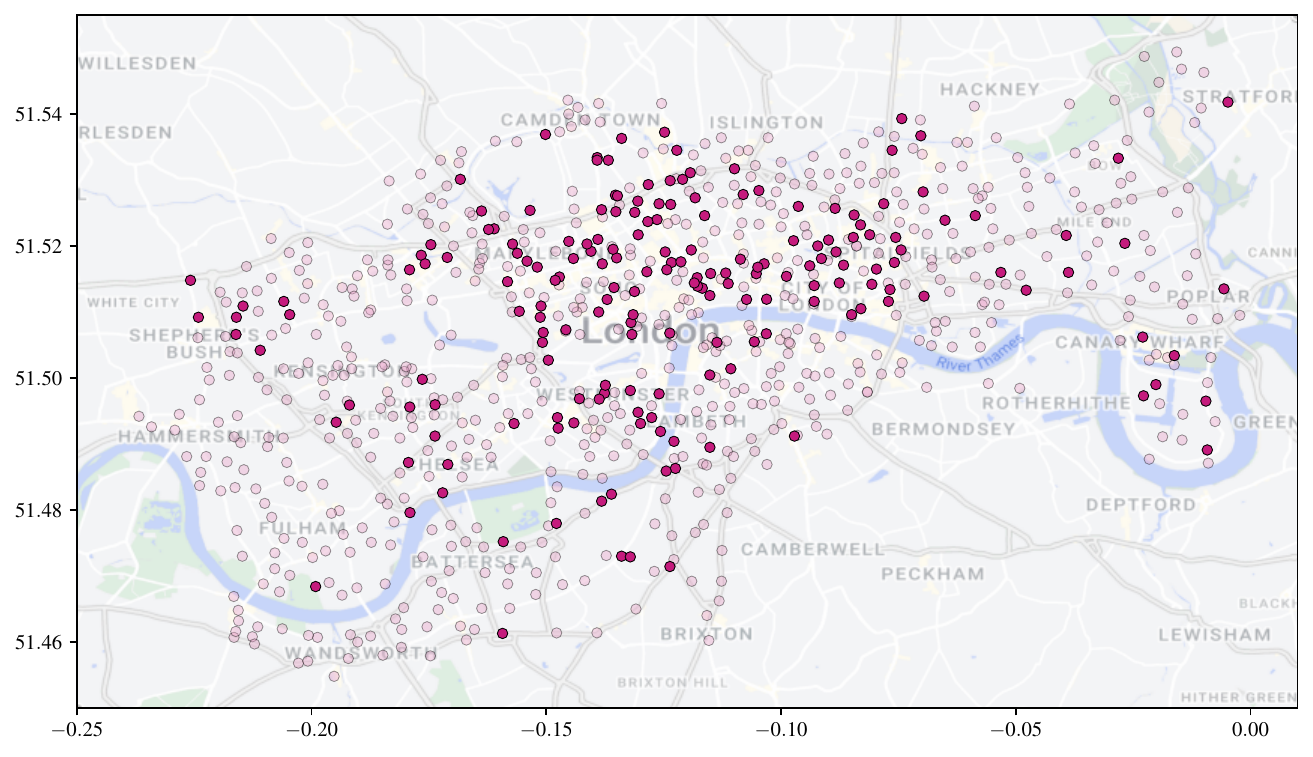}
    \caption{Station locations with those stations that change group at the onset of the UK COVID-19 lockdown coloured dark red.}
    \label{fig:post_lockdown}
\end{figure*}

Figure \ref{fig:post_lockdown} shows that the flagged changes \textcolor{red}{from the SBM-BHPP} corresponding to the onset of the lockdown are very concentrated around central London, which makes sense as ``work from home'' orders will have affected the use of these commuting bike stations. Additionally, stations around the Westfield Shopping Centre in Shepherd's Bush and the Canary Wharf financial district are flagged, representing an expected change due to the government restrictions.

Furthermore, Figure \ref{fig:spatial_dist} displays an example clustering from the \textcolor{red}{GEM-BHPP} at update point 50. There is some spatial clustering to the nodes, but each cluster contains mainly nodes with similar activity patterns. For example, the dark and light blue clusters represent popular nodes in central London and the Canary Wharf financial district, which are mainly used by commuters into these areas.
Similarly, the light and dark green clusters are nodes in West and East London (with the dark green cluster mostly covering West London, and the light green East London and the areas of Battersea and Wandsworth (south of the river Thames).
The pink cluster is around the boundary between the blue and green regions. A notable exception is represented by the red cluster, representing the most popular stations within the network, mainly used for leisure around Hyde Park and the Queen Elizabeth Olympic Park in Stratford. In addition, the red cluster also contains two additional stations in Battersea Park and Ravenscourt Park (west of Hammersmith). 
%Furthermore, the spatial clustering is not isolated to one region. We see different clustering of nodes of the same colour in different spatial regions, which indicates that the algorithm is able to detect groups driven not just by connected nodes, but also by similar edge behaviour.  

\begin{figure*}[t]
    \centering\includegraphics[width=0.8\textwidth]{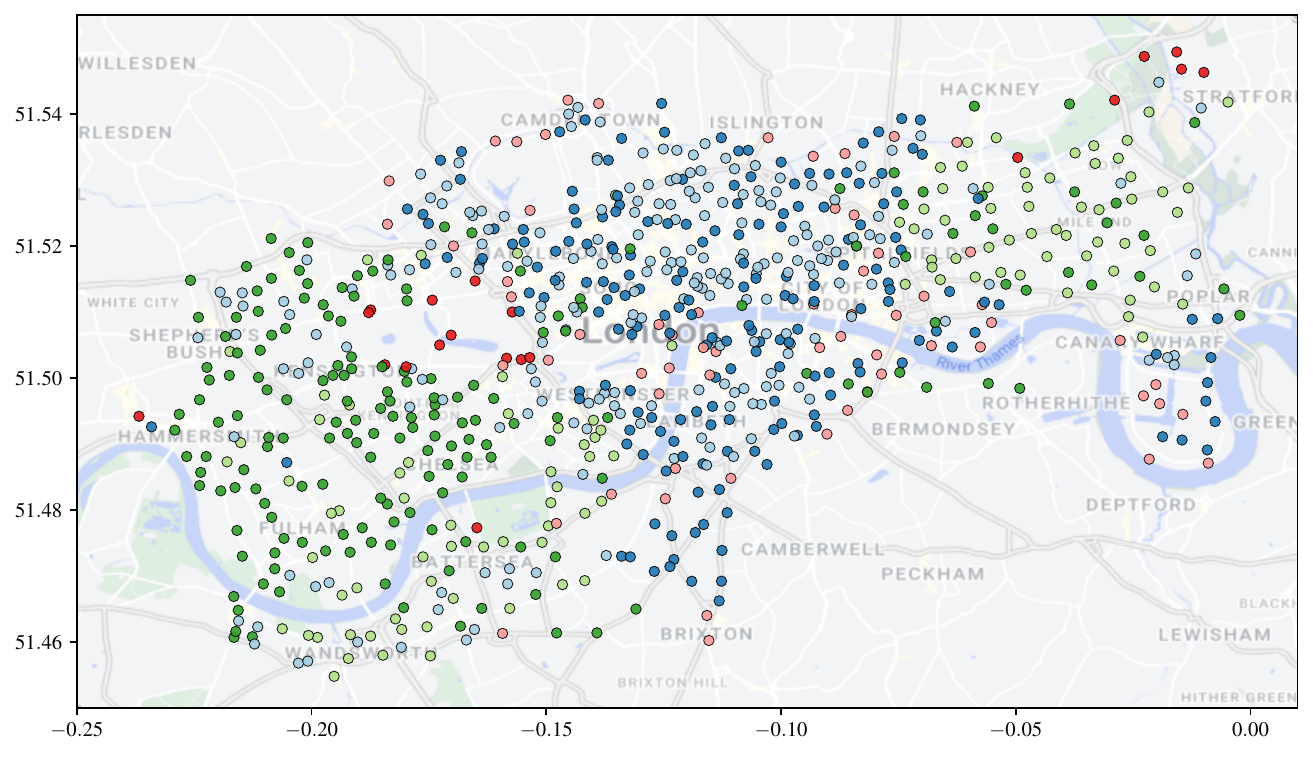}
    \caption{Station locations coloured by assigned group at update \textcolor{red}{time step} 50 of the dynamic BHPP with an unknown number of groups and $\boldsymbol{A} = \boldsymbol{1}_{791 \times 791}$.}
    \label{fig:spatial_dist}
\end{figure*}

\section{Conclusion and discussion}

We have presented a novel online Bayesian inference framework for detecting changes to the latent structure of a block-homogeneous Poisson process in which data arrives as batches in an event stream. Our methodology is scalable, and leverages a Bayesian forgetting factor framework to flag changes to the latent community structure and edge-process rates. The framework is extended to the cases where the adjacency matrix or the number of latent groups are unknown a priori. When tested on both real and simulated data, our methodology is seen to detect latent structure accurately and with minimal latency.

There are numerous ways in which this work could be extended. In particular, the frameworks for an unknown graph structure and unknown number of latent groups can readily be integrated to handle the case where neither is known a priori. It would also be of interest to incorporate seasonality into the model, perhaps within an online framework with longer memory. 

A further challenge would be to adapt the framework to allow for nodes to enter or leave the network during observation. Similarly, the adjacency matrix is assumed static in our methodology, and so an extension would allow for this to be dynamic, although this would likely cause identifiability issues in the case where there are also changes to the latent rates.

\section*{Acknowledgements}

\textcolor{red}{We would like to thank Professor Nick Heard and Professor Alessandra Luati for their valueable feedback on this manuscript.} Joshua Corneck acknowledges funding from the Engineering and Physical Sciences Research Council (EPSRC), grant number EP/S023151/1.
Ed Cohen acknowledges funding from the EPSRC NeST Programme, grant number EP/X002195/1.
Francesco Sanna Passino acknowledges funding from the EPSRC, grant number EP/Y002113/1. 

\section*{Code}

\textit{Python} code to implement the methodologies proposed in this article and reproduce the results is available in the Github repository {\color{red}\href{https://github.com/joshcorneck/dynamicBHPP.git}{\texttt{joshcorneck/dynamicBHPP}}}.

\bibliographystyle{rss}  
\bibliography{references} % Check references 

\newpage
\begin{appendices}
\section{Derivation of CAVI updates}\label{app:cavi}
Here we derive the CAVI approximating distributions for each parameter at the $r$th update. In deriving these expressions, we assume no ordering for the parameter updates, and will use a superscript of $(r-1)$ for all parameters but for the one whose expression is then being derived.

\subsection{Approximation of $q(z_i)$}

Ignoring any terms that do not contain $z_i$, we can derive the expectation as follows:
\begin{align}
        \mathbb{E}_{-z_i}&\{\log p(x,z,\lambda,\pi)\} = \mathbb{E}_{-z_i}\left\{\log p(x|z,\lambda) + \delta_z\log p(z|\pi)\right\} + \text{cst.}\\
        =&\ \mathbb{E}_{-z_i}\Bigg[\sum_{j:(i,j)\in\mathcal{E}}\sum_{k,m\in\mathcal{K}} \tilde z_{ik}\tilde z_{jm}\bigg(x_{ij}(I_r)\log(\lambda_{km}) - \Delta\lambda_{km}\bigg) + \\
        & \qquad \qquad \sum_{j':(j',i)\in\mathcal{E}}\sum_{k,m\in\mathcal{K}} \tilde z_{j'k}\tilde z_{im}\bigg(x_{j'i}(I_r)\log(\lambda_{km}) - \Delta \lambda_{km}\bigg) - \\
        &\qquad \qquad \sum_{k \in \mathcal{K}} \tilde z_{ik}\bigg(x_{ii}(t)\log(\lambda_{kk}) - \Delta\lambda_{kk}\bigg) + \delta_z \sum_{k\in\mathcal{K}} \tilde z_{ik}\log \pi_k \Bigg] + \text{cst.}\\
        =&\sum_{k\in\mathcal{K}}\tilde z_{ik}\Bigg[\delta_z \mathbb{E}_{\pi}\{\log \pi_k\} + x_{ii}(t)\mathbb{E}_{\lambda_{kk}}\{\log(\lambda_{kk})\} - \Delta\mathbb{E}_{\lambda_{kk}}\{\lambda_{kk}\} + \\
        & %\qquad \qquad 
        \sum_{m\in\mathcal{K}}\Bigg\{\sum_{j:(i,j)\in\mathcal{E}}\mathbb{E}_{z_j}\{\tilde z_{jm}\}\bigg(x_{ij}(I_r)\mathbb{E}_{\lambda_{km}}\{\log(\lambda_{km})\} - \Delta\mathbb{E}_{\lambda_{km}}\{\lambda_{km}\}\bigg)\bigg(1 - \mathbb{I}_{\{k\}}(m)\mathbb{I}_{\{i\}}(j)\bigg) + \\
        & %\qquad \qquad 
        \sum_{j':(j',i)\in\mathcal{E}}\mathbb{E}_{\tilde z_{j'}}\{\tilde z_{j'm}\}\bigg(x_{j'i}(I_r)\mathbb{E}_{\lambda_{mk}}\{\log(\lambda_{mk})\} - \Delta\mathbb{E}_{\lambda_{mk}}\{\lambda_{mk}\}\bigg)\bigg(1 - \mathbb{I}_{\{k\}}(m)\mathbb{I}_{\{i\}}(j')\bigg)\Bigg\}\Bigg] + \text{cst.}
\end{align}
Taking the exponential, it follows that the optimal choice of $\hat{q}(z_i)$ under a CAVI approximation is $\hat{q}^{(r)}(z_i) = \mathrm{Categorical}\left(z_i;\tau_i^{(r)}\right)$, where
\begin{align}
    \tau_{ik}^{(r)} &\propto \exp\Bigg\{\delta_z\Bigg[\psi\left(\gamma^{(r-1)}_k\right) - \psi\left(\sum_\ell \gamma^{(r-1)}_\ell\right)\Bigg] +  x_{ii}(t)\bigg(\psi\left(\alpha^{(r-1)}_{kk}\right) - \log\left(\beta^{(r-1)}_{kk}\right)\bigg) - \Delta\frac{\alpha^{(r-1)}_{kk}}{\beta^{(r-1)}_{kk}} + \\
    &\qquad \sum_{m\in\mathcal{K}}\Bigg[\sum_{j:(i,j)\in\mathcal{E}}\tau^{(r-1)}_{jm}\bigg(x_{ij}(I_r)\left(\psi\left(\alpha^{(r-1)}_{km}\right) - \log\left(\beta^{(r-1)}_{km}\right)\right) - \Delta\frac{\alpha^{(r-1)}_{km}}{\beta^{(r-1)}_{km}}\bigg)\bigg(1 - \mathbb{I}_{\{k\}}(m)\mathbb{I}_{\{i\}}(j)\bigg)\ + \\
    &\qquad \sum_{j':(j',i)\in\mathcal{E}}\tau_{j'm}^{(r-1)}\bigg(x_{j'i}(I_r)\left(\psi\left(\alpha^{(r-1)}_{mk}\right) - \log\left(\beta^{(r-1)}_{mk}\right)\right) - \Delta\frac{\alpha^{(r-1)}_{mk}}{\beta^{(r-1)}_{mk}}\bigg)\bigg(1 - \mathbb{I}_{\{k\}}(m)\mathbb{I}_{\{i\}}(j')\bigg)\Bigg]\Bigg\},
\end{align}
for $i\in\mathcal{V}$ and $k\in\mathcal{K}$.

\subsection{Approximation of $q(\pi)$}

Ignoring any terms that do not contain $\pi$, we compute the expectation as:
\begin{align}
    \mathbb{E}_{-\pi}&\left\{\log p(x,z,\lambda,\pi)\right\} = \mathbb{E}_{-\pi}\left\{\delta_z\log p(z|\pi) + \delta_\pi\log p(\pi)\right\} + \text{cst.} \\
    &= \sum_{k\in\mathcal{K}} \log \pi_k\left[\sum_{i\in\mathcal{V}}\delta_z\mathbb{E}_{z_i}\{\tilde z_{ik}\} + \delta_\pi\left(\gamma_k^{(r-1)} - 1\right)\right] + \text{cst.} 
\end{align}
Using the distributions derived for $\hat{q}^{(r-1)}(z_i)$, we see that the optional CAVI approximation is to take $\hat{q}^{(r)}(\pi) = \mathrm{Dirichlet}\left(\pi;\gamma^{(r)}\right)$,
where
\begin{equation}
    \gamma_k^{(r)} = \delta_\pi\left(\gamma_k^{(r-1)} - 1\right) + \delta_z\sum_{i\in\mathcal{V}} \tau^{(r-1)}_{ik} + 1,
\end{equation}
for $k\in\mathcal{K}$.
\subsection{Approximation of $q(\lambda_{km})$}

Ignoring any terms that do not contain $\lambda_{km}$, we derive the expectation as follows:
\begin{align}
    \mathbb{E}_{-\lambda_{km}}&\left\{\log p(x,z,\lambda,\pi)\right\} = \mathbb{E}_{-\lambda_{km}}\left\{\log p(x|z,\lambda) + \delta_\lambda\log p(\lambda)\right\} + \text{cst.} \\
    &=\ \sum_{(i,j)\in\mathcal{E}} \sum_{k,m\in\mathcal{K}}\bigg[\mathbb{E}_{z_i}\{\tilde z_{ik}\}\mathbb{E}_{z_j}\{\tilde z_{jm}\}(x_{ij}\log \lambda_{km} -
    \lambda_{km}\Delta)\bigg] + \\ 
    &\quad \delta_\lambda\sum_{k,m\in\mathcal{K}}\left[\left(\alpha^{(r-1)}_{km} - 1\right)\log \lambda_{km} - \beta^{(r-1)}_{km}\lambda_{km}\right] + \text{cst.} \\
    &= \sum_{k,m\in\mathcal{K}}\bigg\{\bigg(\delta_\lambda\left(\alpha^{(r-1)}_{km} - 1\right) + \sum_{(i,j)\in\mathcal{E}}\mathbb{E}_{z_i}\{\tilde z_{ik}\}\mathbb{E}_{z_j}\{\tilde z_{ik}\}x_{ij}(I_r)\bigg)\log \lambda_{km}\ - \\
    &\quad \left(\delta_\lambda\beta^{(r-1)}_{km} + \Delta\sum_{(i,j)\in\mathcal{E}}\mathbb{E}_{z_i}\{\tilde z_{ik}\}\mathbb{E}_{z_j}\{\tilde z_{ik}\}\right)\lambda_{km}\bigg\} + \text{cst.} 
\end{align} 
Using the distributions derived for $\hat{q}^{(r-1)}(z_i)$, it follows that the optimal CAVI approximation is to take $\hat{q}^{(r)}\left(\lambda_{km}\right) = \mathrm{Gamma}\left(\lambda_{km}; \alpha^{(r)}_{km}, \beta^{(r)}_{km}\right)$, where
\begin{align}
    \alpha^{(r)}_{km} &= \delta_\lambda\left(\alpha^{(r-1)}_{km} - 1\right) + \sum_{(i,j)\in\mathcal{E}}\tau^{(r-1)}_{ik}\tau^{(r-1)}_{jm}x_{ij}(I_r) + 1,\\
    \beta^{(r)}_{km} &= \delta_\lambda\beta^{(r-1)}_{km} + \Delta\sum_{(i,j)\in\mathcal{E}}\tau^{(r-1)}_{ik}\tau^{(r-1)}_{jm},
\end{align}
for $k,m\in\mathcal{K}$.

\section{Derivation of CAVI updates for an unknown adjacency matrix}
\label{app:cavi_unknown_graph}
Here we derive the CAVI approximating distributions for each parameter at the $r$th update. In deriving these expressions, we assume no ordering for the parameter updates, and will use a superscript of $(r-1)$ for all parameters but for the one whose expression is then being derived.

\subsection{Approximation of $q(z_i)$}

Ignoring any terms that do not contain $z_i$, we derive the expectation as:
\begin{align}
    \mathbb{E}_{-z_i}&\left\{\log p(x,z,a,\lambda,\pi)\right\} = \mathbb{E}_{-z_i}\left\{\log p(x|z,a,\lambda) + \delta_z\log p(z|\pi)\right\} + \text{cst.} \\
    &= \mathbb{E}_{-z_i}\Bigg\{\sum_{j\in\mathcal{V}}\sum_{k,m\in\mathcal{K}}\tilde z_{ik} \tilde z_{jm}\bigg[a_{ij}\left(x_{ij}(I_r)\log\lambda_{km} - \Delta\lambda_{km}\right)\bigg] + \\
    & \qquad \qquad \sum_{j'\in\mathcal{V}}\sum_{k,m\in\mathcal{K}}\tilde z_{j'k} \tilde z_{im}\bigg[a_{j'i}\left(x_{j'i}(I_r)\log\lambda_{km} - \Delta\lambda_{km}\right)\bigg] - \\
    &\qquad \qquad \sum_{k\in\mathcal{K}}\tilde z_{ik}\bigg[a_{ii}\left(x_{ii}(I_r)\log\lambda_{kk} - \Delta\lambda_{kk}\right)\bigg] + \delta_z\sum_{k\in\mathcal{K}}\tilde z_{ik}\log \pi_k\Bigg\} + \text{cst.}\\
    &= \mathbb{E}_{-z_i}\Bigg\{\sum_{j\in\mathcal{V}}\sum_{k,m\in\mathcal{K}}\tilde z_{ik}\tilde z_{jm}\bigg[a_{ij}\left(x_{ij}(I_r)\log\lambda_{km} - \Delta\lambda_{km}\right) + a_{ji}\left(x_{ji}(I_r)\log\lambda_{mk} - \Delta\lambda_{mk}\right)\bigg] - \\
    & \qquad \qquad \sum_{k\in\mathcal{K}}\tilde z_{ik}\bigg[a_{ii}\left(x_{ii}(I_r)\log\lambda_{kk} - \Delta\lambda_{kk}\right)\bigg] + \delta_z\sum_{k\in\mathcal{K}}\tilde z_{ik}\log \pi_k\Bigg\} + \text{cst.}\\
    &= \sum_{k\in\mathcal{K}}\tilde z_{ik}\Bigg\{\sum_{j\in\mathcal{V}}\sum_{m\in\mathcal{K}}\tilde z_{jm}\bigg[x_{ij}(I_r)\mathbb{E}_{a_{ij}}\{a_{ij}\}\mathbb{E}_{\lambda_{km}}\{\log\lambda_{km}\} + x_{ji}(I_r)\mathbb{E}_{a_{ji}}\{a_{ji}\}\mathbb{E}_{\lambda_{mk}}\{\log\lambda_{mk}\} \ - \\
    &\qquad \qquad \Delta\bigg(\mathbb{E}_{a_{ij}}\{a_{ij}\}\mathbb{E}_{\lambda_{km}}\{\lambda_{km}\} + \mathbb{E}_{a_{ji}}\{a_{ji}\}\mathbb{E}_{\lambda_{mk}}\{\lambda_{mk}\}\bigg)\bigg]\bigg(1 - \mathbb{I}_{\{k\}}(m)\mathbb{I}_{\{i\}}(j)\bigg) + \\
    &\qquad \qquad \mathbb{E}_{a_{ii}}\{a_{ii}\}\bigg(x_{ii}(I_r)\mathbb{E}_{\lambda_{kk}}\{\log\lambda_{kk}\} - \Delta\mathbb{E}_{\lambda_{kk}}\{\lambda_{kk}\}\bigg) + \delta_z\mathbb{E}_{\pi_k}\left\{\log \pi_k\right\}\Bigg\} + \text{cst.} 
\end{align}
Taking the exponential, and using the distributions derived for $\hat{q}^{(r-1)}(\lambda_{km})$ and the assumed form for $\hat{q}^{(r-1)}(a_{ij})$, it follows that the optimal CAVI approximation is to take $\hat{q}^{(r)}(z_i) = \mathrm{Categorical}\left(z_i;\tau^{(r)}_i\right)$, where
\begin{align}
    \tau^{(r)}_{ik} &\propto \exp\Bigg\{\sum_{j\in\mathcal{V}}\sum_{m\in\mathcal{K}}\tau^{(r-1)}_{jm}\bigg[\sigma^{(r-1)}_{ij}x_{ij}(I_r)\bigg(\psi\left(\alpha^{(r-1)}_{km}\right) - \log\left(\beta^{(r-1)}_{km}\right)\bigg)\ + \\
    &\quad \sigma^{(r-1)}_{ji}x_{ji}(I_r)\bigg(\psi\left(\alpha^{(r-1)}_{mk}\right) - \log\left(\beta^{(r-1)}_{mk}\right)\bigg) - \Delta\bigg(\sigma^{(r-1)}_{ij}\frac{\alpha^{(r-1)}_{km}}{\beta^{(r-1)}_{km}} + \sigma^{(r-1)}_{ji}\frac{\alpha^{(r-1)}_{mk}}{\beta^{(r-1)}_{mk}}\bigg)\bigg]\ \times \\
    & \quad\bigg(1 - \mathbb{I}_{\{k\}}(m)\mathbb{I}_{\{i\}}(j)\bigg)+ \sigma^{(r-1)}_{ii}\bigg(x_{ii}(I_r)\bigg(\psi\left(\alpha^{(r-1)}_{kk}\right) - \log\left(\beta^{(r-1)}_{kk}\right)\bigg) - \Delta\frac{\alpha^{(r-1)}_{kk}}{\beta^{(r-1)}_{kk}}\bigg)\ + \\
    &\quad \delta_z\Bigg[\psi\left(\gamma^{(r-1)}_k\right) - \psi\left(\sum_\ell \gamma^{(r-1)}_\ell\right)\Bigg]\Bigg\},
\end{align}
for $i\in\mathcal{V}$ and $k\in\mathcal{K}$.

\subsection{Approximation of $q(\lambda_{km})$}
Ignoring any terms that do not contain $\lambda_{km}$, we can derive the expectation as follows:
\begin{align}
    \mathbb{E}_{-\lambda_{km}}&\{\log p(x,z,z',a,\lambda,\rho,\pi,\mu)\} = \mathbb{E}_{-\lambda_{km}}\bigg\{\log p(x|z,a,\lambda) + \delta_\lambda\log p(\lambda)\bigg\} + \text{cst.} \\
    &= \sum_{(i,j)\in\mathcal{R}} \mathbb{E}_{z_i}\{\tilde z_{ik}\}\mathbb{E}_{z_j}\{\tilde z_{jm}\}\mathbb{E}_{a_{ij}}\{a_{ij}\}\bigg(x_{ij}(I_r)\log\lambda_{km} - \Delta\lambda_{km}\bigg) + \\ 
    &\quad \delta_{\lambda}\bigg(\bigg(\alpha_{km}^{(r-1)} - 1\bigg)\log\lambda_{km} - \beta_{km}^{(r-1)}\lambda_{km}\bigg) + \text{cst.} \\
    &= \bigg(\sum_{(i,j)\in\mathcal{R}} \mathbb{E}_{z_i}\{\tilde z_{ik}\}\mathbb{E}_{z_i}\{\tilde z_{jm}\}\mathbb{E}+{a_{ij}}\{a_{ij}\}x_{ij}(I_r) + \delta_\lambda\bigg(\alpha_{km}^{(r-1)} - 1\bigg)\bigg) \log\lambda_{km}\  - \\
    &\quad \bigg(\Delta\sum_{(i,j)\in\mathcal{R}} \mathbb{E}_{z_i}\{\tilde z_{ik}\}\mathbb{E}_{z_j}\{\tilde z_{jm}\}\mathbb{E}_{a_{ij}}\{a_{ij}\} + \delta_\lambda\beta_{km}^{(r-1)}\bigg)\lambda_{km} + \text{cst.}
\end{align}
Using the distributions derived for $\hat{q}^{(r-1)}(z_i)$ and assumed form for $\hat{q}^{(r-1)}(a_{ij})$, we see that the optimal CAVI approximating distribution is $\hat{q}^{(r)}(\lambda_{km}) = \mathrm{Gamma}\bigg(\lambda_{km};\alpha_{km}^{(r)},\beta_{km}^{(r)}\bigg)$ with
\begin{align}
    \alpha_{km}^{(r)} &= \delta_\lambda\bigg(\alpha_{km}^{(r-1)} - 1\bigg) +  \sum_{(i,j)\in\mathcal{R}}\tau^{(r-1)}_{ik}\tau^{(r-1)}_{jm}\sigma_{ij}^{(r-1)}x_{ij}(I_r) + 1, \\
    \beta_{km}^{(r)} &= \delta_\lambda\beta_{km}^{(r-1)} + \Delta\sum_{(i,j)\in\mathcal{R}}\tau^{(r-1)}_{ij}\tau^{(r-1)}_{jm}\sigma^{(r-1)}_{ij},
\end{align}
for $k,m\in\mathcal{K}$.

\subsection{Approximation of $q(\rho_{k'm'})$}
Ignoring any terms that do not contain $\rho_{k'm'}$, we derive the expectation as:
\begin{align}
    \mathbb{E}_{-\rho_{k'm'}}&\{\log p(x,z,z',a,\lambda,\rho,\pi,\mu)\} = \mathbb{E}_{-\rho_{k'm'}}\{\log p(a|z,\rho) + \delta_\rho \log p(\rho)\} + \text{cst.} \\
    &= \sum_{(i,j)\in\mathcal{R}} \mathbb{E}_{z_i'}\{\tilde z_{ik'}'\}\mathbb{E}_{z_j'}\{\tilde z_{jm'}'\}\bigg[\mathbb{E}_{a_{ij}}\{a_{ij}\}\log\rho_{k'm'} + \mathbb{E}_{a_{ij}}\{1 - a_{ij}\}\log(1 - \rho_{k'm'})\bigg]\ + \\ &\quad \delta_\rho\bigg(\eta_{k'm'}^{(r-1)} - 1\bigg)\log \rho_{k'm'} + \delta_\rho\bigg(\zeta_{k'm'}^{(r-1)} - 1\bigg)\log (1 - \rho_{k'm'}) + \text{cst.} \\
    &= \bigg[\sum_{(i,j)\in\mathcal{R}} \mathbb{E}_{z_i'}\{\tilde z_{ik'}'\}\mathbb{E}_{z_j'}\{\tilde z_{jm'}'\}\mathbb{E}_{a_{ij}}\{a_{ij}\} + \delta_\rho\bigg( \eta_{k'm'}^{(r-1)} - 1\bigg)\bigg]\log\rho_{k'm'}\ + \\
    &\quad \bigg[\sum_{(i,j)\in\mathcal{R}} \mathbb{E}_{z_i'}\{\tilde z_{ik'}'\}\mathbb{E}_{z_j'}\{\tilde z_{jm'}'\}\mathbb{E}_{a_{ij}}\{1 - a_{ij}\} + \delta_\rho\bigg(\zeta_{k'm'}^{(r-1)} - 1\bigg)\bigg]\log (1 - \rho_{k'm'}) + \text{cst.}
\end{align}
Taking the exponential and computing expectations with respect to the derived distributions for $\hat{q}^{(r-1)}(z_i')$ and the assumed form for $\hat{q}^{(r-1)}(a_{ij})$, it follows that the optimal CAVI approximation takes $\hat{q}^{(r)}(\rho_{k'm'}) = \mathrm{Beta}\bigg(\rho_{k'm'}; \eta^{(r)}_{k'm'}, \zeta_{k'm'}^{(r)}\bigg)$, with
\begin{align}
    \eta^{(r)}_{k'm'} &= \delta_\rho\bigg(\eta_{k'm'}^{(r-1)} - 1\bigg) + \sum_{(i,j)\in\mathcal{R}} \nu^{(r-1)}_{ik'}\nu^{(r-1)}_{jm'}\sigma_{ij} + 1, \\
    \zeta^{(r)}_{k'm'} &= \delta_\rho\bigg(\zeta_{k'm'}^{(r-1)} - 1\bigg) + \sum_{(i,j)\in\mathcal{R}}\nu^{(r-1)}_{ik'}\nu^{(r-1)}_{jm'}\left(1 - \sigma^{(r-1)}_{ij}\right) + 1,
\end{align}
for $k',m'\in\mathcal{K}'$.

\subsection{Approximation of $q(\pi)$}
Ignoring any terms that do not contain $\pi$, we derive the expectation as:
\begin{align}
    \mathbb{E}_{-\pi}&\{\log p(x,z,z',a,\lambda,\rho,\pi,\mu)\} =\mathbb{E}_{-\pi}\{\delta_z\log p(z|\pi) + \delta_\pi \log p(\pi)\}\\ 
    &= \mathbb{E}_{-\pi}\bigg\{\delta_z\sum_{i\in\mathcal{V}}\sum_{k\in\mathcal{K}}\tilde z_{ik}\log \pi_k + \delta_\pi \sum_{k\in\mathcal{K}}\bigg(\gamma_k^{(r-1)} - 1\bigg)\log \pi_k\bigg\} + \text{cst.} \\
    &= \sum_{k\in\mathcal{K}}\bigg\{\sum_{i\in\mathcal{V}}\mathbb{E}_{z_i}\{\tilde z_{ik}\} + \delta_\pi\bigg(\gamma_k^{(r-1)} - 1\bigg)\bigg\}\log\pi_k + \text{cst.}
\end{align}
Using the derived forms for $\hat{q}^{(r-1)}(z_i)$, we see that the optimal CAVI choice is $\hat{q}^{(r)}(\pi) = \mathrm{Dirichlet}\left(\pi; \gamma^{(r)}\right)$, where
\[
\gamma^{(r)}_k = \delta_\pi\bigg(\gamma^{(r-1)}_k - 1\bigg) + \delta_z\sum_{i\in\mathcal{V}}\tau^{(r-1)}_{ik} + 1,
\]
for $k\in\mathcal{K}$.

\subsection{Approximation of $q(\mu)$}
Ignoring any terms that do not contain $\mu$, we compute the necessary expectation as:
\begin{align}
    \mathbb{E}_{-\mu}&\{\log p(x,z,z',a,\lambda,\rho,\pi,\mu)\} =\mathbb{E}_{-\mu}\{\log p(z'|\mu) + \delta_\mu \log p(\mu)\}\\ 
    &= \mathbb{E}_{-\mu}\bigg\{\sum_{i\in\mathcal{V}}\sum_{k'\in\mathcal{K}'}\tilde z_{ik'}'\log \mu_{k'} + \delta_\mu \sum_{k'\in\mathcal{K}'}\bigg(\xi_{k'}^{(r-1)} - 1\bigg)\log \mu_{k'}\bigg\} + \text{cst.} \\
    &= \sum_{k'\in\mathcal{K}'}\bigg\{\sum_{i\in\mathcal{V}}\mathbb{E}_{z_i'}\{\tilde z_{ik'}'\} + \delta_\mu\bigg(\xi_{k'}^{(r-1)} - 1\bigg)\bigg\}\log\mu_{k'} + \text{cst.}
\end{align}
Taking the exponential, and using the derived form for $\hat{q}^{(r-1)}(z_i')$, we see that the optimal CAVI distribution is $\hat{q}^{(r)}(\mu) = \mathrm{Dirichlet}\left(\mu; \xi^{(r)}\right)$, where
\begin{equation}
\xi^{(r)}_{k'} = \delta_\mu\bigg(\mu^{(r-1)}_{k'} - 1\bigg) + \sum_{i\in\mathcal{V}}\nu^{(r-1)}_{ik^\prime} + 1,
\end{equation}
for $k'\in\mathcal{K}$.

\subsection{Approximation of $q(z_i')$}
Ignoring terms that do not contain $z_i^\prime$, we derive the expectation as:
\begin{align}
    &\mathbb{E}_{-z_i'}\left\{\log p(x,z,z',a,\lambda,\rho,\pi,\mu)\right\} = \mathbb{E}_{-z_i'}\left\{\log p(a|z',\rho) + \log p(z'|\mu)\right\} + \text{cst.} \\
    &= \mathbb{E}_{-z_i'}\Bigg\{\sum_{i',j'\in\mathcal{V}}\sum_{k',m'\in\mathcal{K}'}z_{ik'}'z_{jm'}'\bigg[a_{i'j'}\log\rho_{k'm'} + (1 - a_{i'j'})\log(1 - \rho_{k'm'})\bigg] + \sum_{k'\in\mathcal{K}'}\tilde z_{i'k'}'\log \mu_{k'}\Bigg\} + \text{cst.} \\
    &= \mathbb{E}_{-z_i'}\Bigg\{\sum_{j\in\mathcal{V}}\sum_{k',m'\in\mathcal{K}'}\tilde z_{jk'}'\tilde z_{im'}'\bigg[a_{ij}\log\rho_{k'm'} + (1 - a_{ij})\log(1 - \rho_{k'm'})\bigg] + \\
    &\qquad \qquad \sum_{j'\in\mathcal{V}}\sum_{k',m'\in\mathcal{K}'}\tilde z_{ik'}'\tilde z_{j'm'}'\bigg[a_{j'i}\log\rho_{k'm'} + (1 - a_{j'i})\log(1 - \rho_{k'm'})\bigg] - \\
    &\qquad\qquad \sum_{k'\in\mathcal{K}'}\tilde z_{ik'}'\bigg[a_{ii}\log\rho_{k'k'} + (1 - a_{ii})\log(1 - \rho_{k'k'})\bigg] + \sum_{k'\in\mathcal{K}'}\tilde z_{ik'}'\log \mu_{k'}\Bigg\} + \text{cst.}\\
    &= \sum_{k'\in\mathcal{K}'}\tilde z_{ik'}'\Bigg\{\sum_{j\in\mathcal{V}}\sum_{m'\in\mathcal{K}'}\tilde z_{jm'}'\bigg[\mathbb{E}_{a_{ij}}\{a_{ij}\}\mathbb{E}_{\rho_{k'm'}}\{\log\rho_{k'm'}\} + \mathbb{E}_{a_{ij}}\{1 - a_{ij}\}\mathbb{E}_{\rho_{k'm'}}\{\log(1-\rho_{k'm'})\}\  + \\
    &\qquad \qquad \mathbb{E}_{a_{ji}}\{a_{ji}\}\mathbb{E}_{\rho_{m'k'}}\{\log \rho_{m'k'}\} + \mathbb{E}_{a_{ji}}\{1 - a_{ji}\}\mathbb{E}_{\rho_{m'k'}}\{\log(1-\rho_{m'k'})\}\bigg]\bigg(1 - \mathbb{I}_{\{k'\}}(m')\mathbb{I}_{\{i\}}(j)\bigg) + \\
    &\qquad \qquad \mathbb{E}_{a_{ii}}\{a_{ii}\}\mathbb{E}_{\rho_{k'k'}}\{\log\rho_{k'k'}\} + \mathbb{E}_{a_{ii}}\{(1 - a_{ii})\}\mathbb{E}_{\rho_{k'k'}}\{\log(1 - \rho_{k'k'})\} + \mathbb{E}_{\mu_{k'}}\left\{\log \mu_{k'}\right\}\Bigg\}\ + \text{cst.} 
\end{align}
Taking the exponential, and using the distributions derived for $\hat{q}^{(r-1)}(\rho_{k'm'})$ and the assumed form of $\hat{q}^{(r-1)}(a_{ij})$, it follows that the optimal CAVI choice of $\hat{q}^{(r)}(z_i')$ is $\hat{q}^{(r)}(z_i') = \mathrm{Categorical}(z_i';\nu_i)$, where
\begin{align}
    &\nu^{(r)}_{ik'} \propto \exp\Bigg\{\sum_{j\in\mathcal{V}}\sum_{m'\in\mathcal{K}'}\nu^{(r-1)}_{jm'}\bigg[\sigma^{(r-1)}_{ij}\bigg(\psi\left(\eta^{(r-1)}_{k'm'}\right) - \psi\left(\eta^{(r-1)}_{k'm'} + \zeta^{(r-1)}_{k'm'}\right)\bigg)\ + \\
    &\quad \left(1 - \sigma^{(r-1)}_{ij}\right)\bigg(\psi\left(\zeta^{(r-1)}_{k'm'}\right) - \psi\left(\eta^{(r-1)}_{k'm'} + \zeta^{(r-1)}_{k'm'}\right)\bigg) + \sigma^{(r-1)}_{ji}\bigg(\psi\left(\eta^{(r-1)}_{m'k'}\right) - \psi\left(\eta^{(r-1)}_{m'k'} + \zeta^{(r-1)}_{m'k'}\right)\bigg)\ + \\
    &\quad \left(1 - \sigma^{(r-1)}_{ji}\right)\bigg(\psi\left(\zeta^{(r-1)}_{m'k'}\right) - \psi\left(\eta^{(r-1)}_{m'k'} + \zeta^{(r-1)}_{m'k'}\right)\bigg)\bigg]\bigg(1 - \mathbb{I}_{\{k'\}}(m')\mathbb{I}_{\{i\}}(j)\bigg)\ + \\
    &\quad \sigma^{(r-1)}_{ii}\bigg(\psi\left(\eta^{(r-1)}_{k'k'}\right) - \psi\left(\eta^{(r-1)}_{k'k'} + \zeta^{(r-1)}_{k'k'}\right)\bigg) + \left(1 - \sigma^{(r-1)}_{ii}\right)\bigg(\psi\left(\zeta^{(r-1)}_{k'k'}\right) - \psi\left(\eta^{(r-1)}_{k'k'} + \zeta^{(r-1)}_{k'k'}\right)\bigg)\ + \\
    & \quad \psi\left(\xi^{(r-1)}_{k'}\right) - \psi\left(\sum_\ell \xi^{(r-1)}_\ell\right)\Bigg\},
\end{align}
for $i\in\mathbb{V}$ and $k'\in\mathcal{K}'$.

\section{Derivation of CAVI updates for the Dirichlet process prior}
\label{app:cavi_GEM}
Under our model specification, the term for $\log p(z_i|u)$ can be rewritten using indicators as
\begin{align}
    \log p(z_i|u) &= \sum_{\ell=1}^\infty\bigg[\mathbb{I}\{z_i=\ell\}\log(u_\ell) + \mathbb{I}\{z_i > \ell\}\log(1 - u_\ell)\bigg].
\end{align}
The joint loglikelihood then becomes
\begin{align}
    \log p(x,z,u,\lambda) &= (\nu - 1)\sum_{\ell=1}^\infty \log(1 - u_\ell) + (\alpha_{km} - 1)\sum_{k,m=1}^\infty \bigg[\log (\lambda_{km}) - \beta_{km}\lambda_{km}\bigg] \ + \\
    &\sum_{i\in\mathcal{V}}\sum_{\ell=1}^\infty\bigg[\mathbb{I}\{z_i=\ell\}\log(u_\ell) \ + \mathbb{I}\{z_i > \ell\}\log(1 - u_\ell)\bigg]\ + \\
    &\sum_{(i,j)\in\mathcal{E}}\sum_{k,m=1}^\infty \tilde z_{ik}\tilde z_{jm}\bigg[x_{ij}(I_r) \log(\lambda_{km}) - \Delta\lambda_{km}\bigg]\ + \text{cst.}.
\end{align}

\subsection{Approximation of $q(z_i)$}
Recalling that $q(u_L = 1) = 1$, it follows that $\mathbb{I}\{z_i > L\} = 0$. We can thus compute the necessary expectation as:
\begin{align}
    \mathbb{E}_{-z_i}&\{\log p(x,z,u,\lambda)\} = \mathbb{E}_{-z_i}\bigg\{\delta_z\sum_{\ell = 1}^L\bigg[\mathbb{I}\{z_i = \ell\}\log(u_\ell) + \mathbb{I}\{z_i > \ell\} \log(1 - u_\ell)\bigg] + \\
    &\quad \sum_{j:(i,j)\in\mathcal{E}}\sum_{k,m=1}^{L} \tilde z_{ik}\tilde z_{jm}\bigg(x_{ij}(I_r)\log(\lambda_{km}) - \Delta\lambda_{km}\bigg)\ + \\
    &\quad \sum_{j':(j',i)\in\mathcal{E}}\sum_{k,m=1}^{L} \tilde z_{j'k}\tilde z_{im}\bigg(x_{j'i}(I_r)\log(\lambda_{km}) - \Delta\lambda_{km}\bigg) - 
    \sum_{k=1}^L \tilde z_{ik}\bigg(x_{ii}(t)\log (\lambda_{kk}) - \Delta\lambda_{kk}\bigg)\bigg\} + \mathrm{cst.} \\
    &= \sum_{k = 1}^L\tilde z_{ik}\Bigg\{\delta_z\left[\mathbb{E}_{u_k}\{\log(u_k)\} + \sum_{\ell=1}^{k-1} \mathbb{E}_{u_\ell}\{\log(1 - u_\ell)\}\right] + x_{ii}(t)\mathbb{E}_{\lambda_{kk}}\{\log (\lambda_{kk})\} - \Delta\mathbb{E}_{\lambda_{kk}}\{\lambda_{kk}\} + \\
    &\quad \sum_{m=1}^{L}\Bigg[ \sum_{j:(i,j)\in\mathcal{E}} \mathbb{E}_{z_j}\{\tilde z_{jm}\}\bigg(x_{ij}(I_r)\mathbb{E}_{\lambda_{km}}\{\log(\lambda_{km})\} - \Delta\mathbb{E}_{\lambda_{km}}\{\lambda_{km}\}\bigg)\bigg(1 - \mathbb{I}_{\{k\}}(m)\mathbb{I}_{\{i\}}(j)\bigg)\ + \\
    &\sum_{j':(j',i)\in\mathcal{E}} \mathbb{E}\{\tilde z_{j'm}\}\bigg(x_{j'i}(I_r)\mathbb{E}_{\lambda_{mk}}\{\log(\lambda_{mk})\} - \Delta\mathbb{E}_{\lambda_{mk}}\{\lambda_{mk}\}\bigg)\bigg(1 - \mathbb{I}_{\{k\}}(m)\mathbb{I}_{\{i\}}(j')\bigg)\Bigg]\Bigg\} + \mathrm{cst.}
\end{align}
Using the distributions derived for $\hat{q}^{(r-1)}(\lambda_{km})$ and $\hat{q}^{(r-1)}(u_\ell)$, it follows that the optimal CAVI distribution is $\hat{q}^{(r)}(z_i) = \mathrm{Categorical}\left(z_i; \tau^{(r)}_i\right)$,
with
\begin{align}
    \tau^{(r)}_{ik} &\propto \exp\Bigg\{\delta_z\left[\psi\left(\omega^{(r-1)}_k\right) - \psi\left(\omega^{(r-1)}_k + \nu^{(r-1)}_k\right) + \sum_{\ell=1}^{k-1} \bigg(\psi\left(\nu^{(r-1)}_\ell\right) - \psi\left(\omega^{(r-1)}_\ell + \nu^{(r-1)}_\ell\right)\bigg)\right]\ + \\
    &\quad x_{ii}(I_r)\bigg(\psi\left(\alpha^{(r-1)}_{kk}\right) - \log\left(\beta^{(r-1)}_{kk}\right)\bigg) - \Delta\frac{\alpha^{(r-1)}_{kk}}{\beta^{(r-1)}_{kk}} + \\
    &\sum_{m=1}^{L}\Bigg[ \sum_{j:(i,j)\in\mathcal{E}} \tau^{(r-1)}_{jm}\bigg(x_{ij}(I_r)\left(\psi\left(\alpha^{(r-1)}_{km}\right) - \log\left(\beta^{(r-1)}_{km}\right)\right) - \Delta\frac{\alpha^{(r-1)}_{km}}{\beta^{(r-1)}_{km}}\bigg)\bigg(1 - \mathbb{I}_{\{k\}}(m)\mathbb{I}_{\{i\}}(j)\bigg)\ + \\
    &\sum_{j':(j',i)\in\mathcal{E}} \tau^{(r-1)}_{j'm}\bigg(x_{j'i}(I_r)\left(\psi\left(\alpha^{(r-1)}_{mk}\right) - \log\left(\beta^{(r-1)}_{mk}\right)\right) - \Delta\frac{\alpha^{(r-1)}_{mk}}{\beta^{(r-1)}_{mk}}\bigg)\bigg(1 - \mathbb{I}_{\{k\}}(m)\mathbb{I}_{\{i\}}(j')\bigg)\Bigg]\Bigg\},
    \label{app_eqn:tau_fp}
\end{align}
for $i\in\mathcal{V}$ and $k\in\mathcal{K}$.

\subsection{Approximation of $q(\lambda_{km})$}

Recall that $q(u_L=1) = 1$, and so $\pi(u_{L+1}) = \pi(u_{L+2}) = \dots$. This truncation removes the need for an infinite sum. In this way, we can compute the expectation for the CAVI approximation as:
\begin{align}
    \mathbb{E}_{-\lambda_{km}}\{\log p(x,z,u,\lambda)\} &=
    \sum_{k,m=1}^\infty\mathbb{E}_{-\lambda_{km}}\bigg[ \delta_{\lambda}\left(\alpha^{(r-1)}_{km}-1\right)\log\lambda_{km} - \delta_{\lambda}\beta^{(r-1)}_{km}\lambda_{km}\ + \\
    &\quad \sum_{(i,j)\in\mathcal{E}}\tilde z_{ik}\tilde z_{jm}\bigg\{x_{ij}(I_r)\log(\lambda_{km}) - \Delta\lambda_{km}\bigg\}\bigg] + \text{cst.} \\
    &= \sum_{k,m=1}^{L}\bigg[\bigg(\delta_{\lambda}\left(\alpha^{(r-1)}_{km} - 1\right) + \sum_{(i,j)\in\mathcal{E}}\mathbb{E}_{z_i}\{\tilde z_{ik}\}\mathbb{E}_{z_j}\{\tilde z_{jm}\}x_{ij}(I_r)\bigg)\log(\lambda_{km})\ - \\
    &\quad \bigg(\delta_{\lambda}\beta^{(r-1)}_{km} + \Delta\sum_{(i,j)\in\mathcal{E}}\mathbb{E}_{z_i}\{\tilde z_{ik}\}\mathbb{E}_{z_j}\{\tilde z_{jm}\}\bigg)\lambda_{km}\bigg] + \text{cst.}
\end{align}
Using the derived distributions for $\hat{q}^{(r-1)}(z_i)$, it follows that the optimal CAVI approximation is $q^{(r-1)}(\lambda_{km}) = \mathrm{Gamma}\left(\lambda_{km}; \alpha_{km}^{(r)},\beta_{km}^{(r)}\right)$, where
\begin{align}
    \alpha_{km}^{(r)} &= \delta_{\lambda}\left(\alpha^{(r-1)}_{km} - 1\right) + \sum_{(i,j)\in\mathcal{E}}\tau^{(r-1)}_{ik}\tau^{(r-1)}_{jm}x_{ij}(I_r) + 1,
    \label{app_eqn:alpha}\\
    \beta_{km}^{(r)} &= \delta_\lambda\beta_{km}^{(r-1)} + \Delta\sum_{(i,j)\in\mathcal{E}}\tau_{ik}^{(r-1)}\tau_{jm}^{(r-1)}
    \label{app_eqn:beta},
\end{align}
for $k,m\in\mathcal{K}$.

\subsection{Approximation of $q(u_i)$}
Again, using that $q^{(r-1)}(u_L = 1) = 1$, we have $\log(1 - u_L) = 0$
and $\mathbb{I}_{\{k\}}\{z_j\}= 0$, for $k>L$, the required expectation becomes:
\begin{align}
    \mathbb{E}_{-u_j}&\{\log p(x,z,u,\lambda)\} = \mathbb{E}_{-u_j}\bigg\{\mathbb{I}\{j < L\}\bigg[\delta_u\left(\omega_j^{(r-1)} - 1\right)\log(u_j) + \delta_u\left(\nu_j^{(r-1)}-1\right)\log(1 - u_j)\  + \\
    &\quad \delta_z\sum_{i\in\mathcal{V}}\bigg(\mathbb{I}_{\{j\}}\{z_i\}\log(u_j) + \sum_{k>j}\mathbb{I}_{\{k\}}\{z_i\}\log(1 - u_j)\bigg)\bigg\} + \text{cst.}. \\
    &= \mathbb{I}\{j < L\}\bigg[\log(u_j)\bigg(\delta_z\sum_{i\in\mathcal{V}}\tau^{(r-1)}_{ij} + \delta_u\left(\omega_j^{(r-1)} - 1\right)\bigg)\ + \\
    &\quad \log(1 - u_j)\bigg(\delta_z\sum_{i\in\mathcal{V}}\sum_{k=j+1}^L\tau^{(r-1)}_{ik} + \delta_u\left(\nu^{(r-1)} - 1\right)\bigg)\bigg] \ + \mathrm{cst.}
\end{align}
where we have used the derived forms for $\hat{q}^{(r-1)}(z_i)$ to compute the indicator expectations. It follows that the optimal CAVI approximation is $\hat{q}^{(r)}(u_j) = \mathrm{Beta}\left(u_j; \omega_j^{(r)},\nu_j^{(r)}\right)$, where
\begin{align}
    \omega_j' &= \delta_z\sum_{i\in\mathcal{V}}\tau_{ij}^{(r-1)} + \delta_u\left(\omega_j^{(r-1)} - 1\right) + 1,
    \label{app_eqn:omega}\\
    \nu_j' &= \delta_z\sum_{i\in\mathcal{V}} \sum_{k=j+1}^L\tau_{ik}^{(r-1)} + \delta_u\left(\nu_j^{(r-1)} - 1\right) + 1,
    \label{app_eqn:nu}
\end{align}
for $j \in \{1,\dots,L\}$.

\end{appendices}

\end{document}